\pgfplotsset{compat=1.17}
\pgfplotsset{
	name nodes near coords/.style={
		every node near coord/.append style={
			name=#1-\coordindex,
			alias=#1-last,
		},
	},
	name nodes near coords/.default=coordnode
}
\newtheorem{claim}{Claim}[section]
\newtheorem{theorem}{Theorem}[section]
\newtheorem{corollary}[theorem]{Corollary}
\newtheorem{lemma}[theorem]{Lemma}
\newtheorem{example}[theorem]{Example}
\newtheorem{definition}[theorem]{Definition}
\newtheorem{invariant}[theorem]{Invariant}
\newenvironment{proofof}[1]{{\vspace*{5pt} \noindent\bf Proof of #1:  }}{\hfill\rule{2mm}{2mm}\vspace*{5pt}}
\newtheorem*{rep@theorem}{\rep@title}
\newcommand{\newreptheorem}[2]{%
	\newenvironment{rep#1}[1]{%
		\def\rep@title{#2 \ref{##1}}%
		\begin{rep@theorem}}%
		{\end{rep@theorem}}}
\newcommand{\bR}{{\mathbb{R}}}
\newcommand{\MPB}{\mathsf{MPB}}
\newcommand{\APS}{\mathsf{APS}}
\newcommand{\bX}{\mathbf{X}}
\newcommand{\bw}{\mathbf{w}}
\newcommand{\bc}{\mathbf{c}}
\newcommand{\bv}{\mathbf{v}}
\newcommand{\bp}{\mathbf{p}}
\newcommand{\POF}{\mathsf{PoF}}
\newcommand{\opt}{{\rm opt}}
\renewcommand{\sc}{{\rm sc}}
\DeclareMathOperator*{\argmax}{argmax}
\title{Weighted EF1 Allocations for Indivisible Chores}
\author{Xiaowei Wu
	\thanks{IOTSC, University of Macau. \{xiaoweiwu,yc27429,yc17423\}@um.edu.mo. The authors are ordered alphabetically.}
	\and Cong Zhang $^*$
	\and Shengwei Zhou $^*$}
\date{}
\begin{document}
	\pagestyle{plain}
	\maketitle
	
	\begin{abstract}
		We study how to fairly allocate a set of indivisible chores to a group of agents, where each agent $i$ has a non-negative weight $w_i$ that represents its obligation for undertaking the chores.
		We consider the fairness notion of {\em weighted envy-freeness up to one item} (WEF1) and propose an efficient picking sequence algorithm for computing WEF1 allocations.
		Our analysis is based on a natural and powerful continuous interpretation for the picking sequence algorithms in the weighted setting, which might be of independent interest.
		Using this interpretation, we establish the necessary and sufficient conditions under which picking sequence algorithms can guarantee other fairness notions in the weighted setting.
		We also study the existence of fair and efficient allocations and propose efficient algorithms for the computation of WEF1 and PO allocations for the bi-valued instances.
		Our result generalizes that of Garg et al.~\cite{conf/aaai/GargMQ22} and Ebadian et al.~\cite{conf/atal/EbadianP022} to the weighted setting.
		Our work also studies the price of fairness for WEF1, and the implications of WEF1 to other fairness notions.
	\end{abstract}
	
	\section{Introduction}\label{sec:intro}
	As a classic problem that can be traced back to 1948~\cite{steihaus1948problem}, fair allocation has received much attention in the past decades, in the fields of computer science, economics, and mathematics.
	While the traditional study of fair allocation focused on divisible items~\cite{foley1967resource,alon1987splitting,edward1999rental}, there is an increasing attention to the fair allocation of indivisible items in recent years.
	In this problem, our goal is to allocate a set $M$ of $m$ indivisible items to a set $N$ of $n$ agents, where agents may have different valuation functions on the items.
	An allocation is defined as an $n$-partition $\bX = (X_1, X_2, \ldots, X_n)$ of the items, where $X_i \cap X_j = \emptyset$ for all $i\neq j$ and $\cup_{i\in N} X_i = M$.
	We say that $X_i$ is the bundle assigned to agent $i\in N$.
	Depending on whether the agents have positive or negative values on the items, there are two lines of research, one for the allocation of goods and the other for chores.
	In this work we focus on the allocation of chores, in which the agents have negative values on the items.
	For convenience of notation, we assume that each agent has a cost function that assigns a non-negative cost to each bundle of items.
	We assume that all cost functions are additive.
	
	Two of the most well studied fairness notions are \emph{envy-freeness} (EF)~\cite{foley1967resource} and \emph{proportionality} (PROP)~\cite{steihaus1948problem}.
	An allocation is PROP if every agent receives a bundle with cost at most her proportional cost of all items, i.e., $c_i(X_i) \leq \frac{1}{n}\cdot c_i(M)$ for all $i\in N$.
	An allocation is EF if under the allocation no agent wants to exchange her bundle of items with some other agent to decrease her cost, i.e., $c_i(X_i) \leq c_i(X_j)$ for all $i\neq j$.
	Observe that every EF allocation is PROP.
	For the allocation of divisible items, EF allocations and PROP allocations have been proven to exist~\cite{alon1987splitting,edward1999rental}.
	However, PROP allocations (and thus EF allocations) are not guaranteed to exist when items are indivisible, e.g., consider the allocation of a single item to two agents having non-zero cost on the item.
	Therefore, researchers have proposed several relaxation of these fairness notions.
	\emph{Envy-freeness up to one item} (EF1) is one of the most well studied relaxations of envy-freeness, and is introduced by Lipton et al.~\cite{conf/sigecom/LiptonMMS04}.
	An allocation is said to be EF1 if the envy between any two agents can be eliminated after removing one item from the bundle of the envious agent.
	It has been shown that EF1 allocations always exist for the cases of goods~\cite{conf/sigecom/LiptonMMS04}, chores, and even mixed items~\cite{conf/approx/BhaskarSV21,journals/aamas/AzizCIW22}.
	Another relaxation is the \emph{envy-freeness up to any item} (EFX) proposed by Caragiannis et al.~\cite{journals/teco/CaragiannisKMPS19}, which is stronger than EF1 and requires that envy can be eliminated after removing any item from the bundle.
	However, unlike EF1, EFX allocations are guaranteed to exist only for some very special cases~\cite{conf/sigecom/ChaudhuryGM20,journals/tcs/AmanatidisMN20,conf/www/0037L022,conf/ijcai/0002022}.
	Whether EFX allocations exist in general remains one of the biggest open problems.
	Similarly, \emph{proportionality up to one item} (PROP1)~\cite{conf/sigecom/ConitzerF017} and \emph{proportionality up to any item} (PROPX)~\cite{journals/orl/AzizMS20} are two well-known relaxations of proportionality.
	PROP1 allocations have been proven to exist, for both goods~\cite{conf/sigecom/ConitzerF017} and chores~\cite{journals/orl/AzizMS20}, while PROPX allocations might not exist in the cases of goods~\cite{journals/orl/AzizMS20} and can be found in polynomial time for chores~\cite{conf/www/0037L022}.
	For a more detailed review of the fair allocation problem, please refer to the recent surveys by Amanatidis et al.~\cite{journals/corr/abs-2208-08782} and Aziz et al.~\cite{journals/corr/abs-2202-08713}.

	\paragraph{Weighted Setting.}
	While traditional fair allocation problem focuses on the case where agents have the equal obligations, in the real world, it often happens that agents are not equally obliged.
	For example, a person in leadership position is naturally expected to undertake more responsibilities for finishing the set of tasks.
	To model these applications, the \emph{weighted} (or \emph{asymmetric}) setting is proposed~\cite{journals/teco/ChakrabortyISZ21,conf/www/0037L022}.
	In the weighted setting, each agent $i\in N$ has a weight $w_i > 0$ that represents the obligation of agent $i$ on the chores, and we have $\sum_{i\in N} w_i = 1$.
	Note that in the unweighted case we have $w_i = 1/n$ for all $i\in N$.
	Chakraborty et al.~\cite{journals/teco/ChakrabortyISZ21} introduce the \emph{weighted envy-freeness up to one item} (WEF1) for the allocation of goods and show that WEF1 allocations always exist and can be computed in polynomial time.
	Li et al.~\cite{conf/www/0037L022} consider the allocation of chores and propose an algorithm that computes WEF1 allocations for the \emph{identical ordering} (IDO) instance and WPROPX allocations for general instances.
	Whether WEF1 allocations exist for general instances remains unknown, and has been proposed as an open problem in several existing works~\cite{journals/teco/ChakrabortyISZ21,conf/www/0037L022,journals/corr/abs-2202-08713,journals/mst/BeiLMS21}.
	In this work, we answer this open problem affirmatively.

	\paragraph{Efficiency.}
	Besides fairness, efficiency is another important measurement for the quality of allocations.
	Unfortunately efficiency and fairness are often competing with each other, e.g., many of the fair allocations give every bad efficiency guarantees.
	Therefore, the existence of fair and efficient allocations has recently drawn a significant attention.
	Popular efficiency measurements include the social cost $\sum_{i\in N} c_i(X_i)$ and the Pareto optimality.
	An allocation is said to be \emph{Pareto optimal} (PO) if there does not exist another allocation that can decrease the cost of some agent without increasing the costs of other agents.
	For the allocation of goods, Caragiannis et al.~\cite{journals/teco/CaragiannisKMPS19} show the allocation maximizing \emph{Nash social welfare} is EF1 and PO for unweighted agents.
	For the allocation of chores, the existence of EF1 and PO allocations is still a major open problem.
	Allocations that are EF1 and PO are known to exist only for some special cases, e.g. two agents~\cite{journals/aamas/AzizCIW22} and \emph{bi-valued} instances~\cite{conf/aaai/GargMQ22,conf/atal/EbadianP022}.
	Besides, the \emph{price of fairness} (PoF) that measures the loss in social welfare/cost due to the fairness constraints has also received an increasing attention~\cite{journals/ior/BertsimasFT11,journals/mst/CaragiannisKKK12,journals/mst/BeiLMS21,conf/wine/BarmanB020,conf/atal/SunCD21}.

	\subsection{Our Contribution}
	
	In this paper, we consider the existence, computation and efficiency of WEF1 allocations for indivisible chores.
	We first show that WEF1 allocations always exist for chores, by proposing a polynomial time algorithm based on the weighted picking sequence protocols.
	
	\smallskip
	\noindent
	{\bf Result 1} (Theorem~\ref{thm:compute-wef1}) {\bf .}
	{\em For the allocation of chores to weighted agents, there exists a polynomial time algorithm that computes WEF1 allocations.}
	\smallskip
	
	Similar to existing works~\cite{journals/teco/ChakrabortyISZ21,conf/www/0037L022,azizbest,conf/aaai/ChakrabortySS22}, our algorithm uses a weighted picking sequence to decide the order following which agents pick their favourite items.
	Thus our algorithm works under the ordinal setting in which we only know the ranking of each agent over the items (instead of the actual costs).
	Our analysis is based on a natural continuous interpretation for the picking sequence algorithm in the weighted setting, which was first used by Li et al.~\cite{conf/www/0037L022} to show the existence of WEF1 allocations for IDO instances.
	Moreover, using the continuous interpretation, we reproduce the proof of Chakraborty et al.~\cite{journals/teco/ChakrabortyISZ21} for the existence of WEF1 allocations for goods.
	We also establish the necessary and sufficient conditions under which the picking sequences can guarantee other weighted fairness notions, e.g., WEF$(x,y)$~\cite{conf/aaai/ChakrabortySS22}.
	
	\smallskip
	
	We also consider allocations that are fair and efficient.
	We consider the \emph{bi-valued} instances, in which there are two values $a, b > 0$ and $a \neq b$ such that $c_i(e) \in \{a,b\}$ for all $i\in N$ and $e\in M$, and propose a polynomial time algorithm that computes WEF1 and PO allocations.
	
	\smallskip
	\noindent
	{\bf Result 2} (Theorem~\ref{the:wef1+po-bivalued}) {\bf .}
	{\em For the allocation of chores to weighted agents, there exists a polynomial time algorithm that computes WEF1 and PO allocations for bi-valued instances.}
	\smallskip
	
	The bi-valued instances are considered as an important special case of the fair allocation problem and have been extensively studied for both goods~\cite{conf/sagt/GargM21,journals/tcs/AmanatidisBFHV21} and chores~\cite{conf/atal/EbadianP022,conf/aaai/GargMQ22,conf/ijcai/0002022}.
	Our result generalizes the results of Garg et al.~\cite{conf/aaai/GargMQ22} and Ebadian et al.~\cite{conf/atal/EbadianP022} to the weighted setting.
	Besides bi-valued instances, we also show that WEF1 and PO allocations exist for two agents (in Appendix~\ref{sec:PO-twoagents}).
	
	\smallskip
	
	Finally, we consider the {\em price of fairness} (PoF) that measures the loss in efficiency due to the fairness constraints.
	In particular, we characterize the price of WEF1, which is the ratio between the minimum social cost of WEF1 allocations and that of the unconstrained allocations.
	For the unweighted case, it has been shown that the price of EF1 is unbounded for three or more agents, and is $\frac{5}{4}$ for two agents~\cite{conf/atal/SunCD21}.
	We generalize the result of Sun et al.~\cite{conf/atal/SunCD21} to the weighted setting by showing that the price of WEF1 is $\frac{4+\alpha}{4}$ for two agents, where $\alpha = \frac{\max\{w_1,w_2\}}{\min\{w_1,w_2\}}$ is the ratio between the weights of agents.
	
	\smallskip
	\noindent
	{\bf Result 3} (Theorem~\ref{theorem:pof-wef1}) {\bf .}
	{\em For the allocation of chores to weighted agents, the price of WEF1 is unbounded for three or more agents, and is $\frac{4+\alpha}{4}$ for two agents where $\alpha = \frac{\max\{w_1,w_2\}}{\min\{w_1,w_2\}}$.}

	\subsection{Other Related Works}
	
	In addition to PROP1 and PROPX, MMS~\cite{conf/bqgt/Budish10} and APS~\cite{conf/sigecom/BabaioffEF21} are two other popular relaxations of PROP.
	For agents with general weights, the weighted version of MMS is studied in~\cite{journals/jair/FarhadiGHLPSSY19,conf/ijcai/0001C019}, and the \emph{AnyPrice Share} (APS) fairness is proposed by Babaioff et al.~\cite{conf/sigecom/BabaioffEF21}.
	They show that there always exist $(3/5)$-approximation of APS allocations for goods and $2$-approximation of APS allocations for chores.
	The approximate ratio for chores is recently improved to $1.733$ by Feige and Huang~\cite{journals/corr/abs-2211-13951}.
	Regarding efficiency, it has been shown that WPROP1 and PO allocations always exist for chores~\cite{journals/corr/abs-1907-01766}, and mixture of goods and chores~\cite{journals/orl/AzizMS20}.
	Whether WPROPX and PO allocations always exist remains unknown.
	
	Besides WEF1, a {\em weak} weighted version of EF1 (WWEF1) is introduced by Chakraborty et al.~\cite{conf/aaai/ChakrabortySS22}.
	When agents are unweighted, both WEF1 and WWEF1 reduce to EF1.
	They show that for the allocation of goods, in the weighted setting, maximizing the \emph{weighted nash social welfare} (WNSW) fails to satisfy WEF1, but guarantees WWEF1.
	They further introduce WEF$(x,y)$ that generalizes WEF1, WWEF1; and WPROP$(x,y)$ that generalizes WPROP1.
	Recently, Aziz et al.~\cite{azizbest} and Hoefer et al.~\cite{journals/corr/abs-2209-03908} study the \emph{weighted envy-freeness up to one transfer} (WEF1T), which is equivalent to WEF$(1,1)$, and propose randomized algorithms for the allocation of goods that guarantee ex-ante WEF and ex-post WEF1T (WEF$(1,1)$).

	\section{Preliminaries}\label{sec:preli}
	
	We consider how to fairly allocate a set of $m$ indivisible items (chores) $M$ to a group of $n$ agents $N$, where each agent $i\in N$ has a weight $w_i > 0$ and $\sum_{i\in N} w_i = 1$.
	When $w_i = 1/n$ for all $i\in N$, we call the instance \emph{unweighted}.
	We call a subset of items, e.g. $X \subseteq M$, a \emph{bundle}.
	Each agent $i\in N$ has an additive cost function $c_i : 2^M \to \bR^+\cup \{0\}$ that assigns a cost to every bundle of items.
	For convenience we use $c_i(e)$ to denote $c_i(\{e\})$, the cost of agent $i\in N$ on item $e\in M$, and thus $c_i(X) = \sum_{e\in X} c_i(e)$ for all $X\subseteq M$.
	Without loss of generality (w.l.o.g.), we assume that the cost functions are normalized, i.e., $\forall i\in N$, $c_i(M) = 1$.
	We use $\bw = (w_1,\ldots,w_n)$ and $\bc = (c_1,\ldots, c_n)$ to denote the weights and cost functions of agents, respectively.
	For ease of notation we use $X+e$ and $X-e$ to denote $X\cup \{e\}$ and $X \setminus \{e\}$, respectively, for any $X\subseteq M$ and $e\in M$.
	%
	An allocation $\bX = (X_1,\ldots,X_n)$ is an $n$-partition of the items $M$ such that $X_i \cap X_j = \emptyset$ for all $i \neq j$ and $\cup_{i\in N} X_i = M$, where agent $i$ receives bundle $X_i$.
	Given an instance $I = (N, M, \bw, \bc)$, our goal is to find an allocation $\bX$ that is \emph{fair} to all agents.
	
	We first introduce the {\em weighted envy-freeness} (WEF) for the allocation of chores.
	
	\begin{definition}[WEF]
		An allocation $\bX$ is \emph{weighted envy-free} (WEF) if for any agents $i, j\in N$, 
		\begin{equation*}
			\frac{c_i(X_i)}{w_i} \leq \frac{c_i(X_j)}{w_j}.
		\end{equation*}
	\end{definition}
	
	Note that when the instance is unweighted, the notion of WEF coincides with the envy-freeness (EF) notion.
	Hence WEF allocations are not guaranteed to exist.
	In the following, we study the weighted envy-freeness up to one item (WEF1), a relaxation of WEF.
	
	\begin{definition}[WEF1]
		An allocation $\bX$ is \emph{weighted envy-free up to one item} (WEF1) if for any agents $i, j\in N$, either $X_i = \emptyset$, or there exists an item $e\in X_i$ such that
		\begin{equation*}
			\frac{c_i(X_i-e)}{w_i} \leq \frac{c_i(X_j)}{w_j}.
		\end{equation*}
	\end{definition}
	
	Finally, we define the Pareto optimality (PO) that evaluates the efficiency of allocations.
	
	\begin{definition}[PO]
		An allocation $X'$ \emph{Pareto dominates} another allocation $X$ if $c_i(X'_i) \leq c_i(X_i)$ for all $i\in N$ and the inequality is strict for at least one agent.
		An allocation $X$ is said to be {\em Pareto optimal} (PO) if $X$ is not dominated by any other allocation.
	\end{definition}
	
	In Appendix~\ref{sec:relationships}, we provide some proofs and examples to show the connection between the notion of WEF1 and other fairness notions.
	In contrast to the allocations of goods where WEF1 fails to imply WPROP1~\cite{journals/teco/ChakrabortyISZ21}, we show that every WEF1 allocation is WPROP1 for the allocation of chores.
	On the other hand, we show that WPROP1 allocations fail to guarantee (any approximation of) WEF1.
	We also show that for the allocation of chores, WEF1 gives a $(2-\min_{i\in N} \{w_i\})$-approximation of APS (a fairness criterion introduced by Babaioff et al.~\cite{conf/sigecom/BabaioffEF21} for weighted agents), and the approximation ratio is tight.

	\section{Weighted EF1 for Chores} \label{sec:compute-wef1}
	
	In this section we present a polynomial time algorithm for computing WEF1 allocations.   
	In the unweighted setting, EF1 allocations for chores can be computed by either the \emph{envy-cycle elimination algorithm}~\cite{conf/sigecom/LiptonMMS04,conf/approx/BhaskarSV21} or the \emph{round-robin algorithm}~\cite{journals/aamas/AzizCIW22}.
	Extending the first algorithm to the weighted setting fails for the case of goods, as shown by Chakraborty et al.~\cite{journals/teco/ChakrabortyISZ21}.
	The main reason is that swapping bundles might not help in resolving envy cycles, when agents involved have different weights.
	For the same reason, the envy-cycle elimination cannot be extended straightforwardly to the weighted case for the allocation of chores.
	In contrast, the round-robin algorithm has been extended successfully to the weighted setting and becomes the \emph{weighted picking sequence protocol}~\cite{journals/teco/ChakrabortyISZ21}, which computes WEF1 allocations for goods.
	Similar to round-robin, in the weighted picking sequence protocol the agents take turns to pick their favourite unallocated items.
	In each round the agent $i$ with minimum $|X_i|/w_i$ to chosen to pick an item (break tie by agent index), where $X_i$ is the set of items agent $i$ receives at the moment.
	Note that for the unweighted case, the algorithm degenerates to the round-robin algorithm.
	Unfortunately, via the following simple example, we show that the algorithm fails to compute WEF1 allocations for chores, even for two agents.
	
	\begin{example} \label{example:hard-forward}
		Consider an instance with $n = 2$ agents, with weights $w_1 = 0.3$ and $w_2 = 0.7$, and $m = 3$ items.
		The two agents have the same cost function, and the costs are shown in Table~\ref{tab:hard-for-good-alg}, where $\epsilon>0$ is arbitrarily small.
		Running the weighted picking sequence protocol gives allocation $X_1 = \{e_1\}$ and $X_2 = \{e_2, e_3\}$, which is not WEF1 because even after removing any item $e$ from $X_2$, $c_2(X_2 - e)/w_2$ is still much larger than $c_2(X_1)/w_1$.
		\begin{table}[htbp]
			\centering
			\begin{tabular}{c|c|c|c}
				&  $e_1$ & $e_2$ & $e_3$ \\ \hline
				agent 1   & $\epsilon$ & $1$ & $1$ \\
				agent 2   & $\epsilon$ & $1$ & $1$
			\end{tabular}
			\caption{Instance showing that weighted picking sequence protocol fails for the allocation of chores.}
			\label{tab:hard-for-good-alg}
		\end{table}
	\end{example}
	
	The algorithm fails because agent $2$ picks more items (compared to that of agent $1$) and this happens \emph{after} agent $1$ picks her favourite item.
	This is not a problem in the case of goods because the items are allocated from the most valuable to the least valuable, but will cause severe envy for the allocation of chores.
	Interestingly, we observe that if we allow agent $2$ to pick two items \emph{before} agent $1$, then the resulting allocation will be WEF1.
	This motivates the design of our algorithm, the \emph{Reversed Weighted Picking Sequence} (RWPS) algorithm, which generates a sequence of agents that is the same as in the weighted picking sequence protocol, but then reverses the sequence to decide the picking sequence of agents.
	While the algorithms are very similar, as we will show in the following section, our analysis is very different from that of Chakraborty et al.~\cite{journals/teco/ChakrabortyISZ21}, and is simpler, and is arguably more intuitive.

	\subsection{Reversed Weighted Picking Sequence}
	\label{ssec:RWPS}
	
	In this section, we propose the Reversed Weighted Picking Sequence (RWPS) Algorithm (see Algorithm~\ref{alg:RWPS} for the details) that computes WEF1 allocations for the allocation of chores.
	
	\paragraph{The Algorithm.}
	We maintain a variable $s_i$ for each agent $i\in N$, where initially $s_i = 0$.
	We call $s_i$ the \emph{size} of agent $i$.
	In the first phase, we generate a length-$m$ sequence of agents $\sigma$ as follows: in round-$t$, where $t = 1,2,\ldots,m$, we let $\sigma(t)$ be the agent with minimum $s_i$ (break tie by picking the agent with minimum index), and increase $s_i$ by $1/w_i$.
	In the second phase, we let agents pick items in the order of $(\sigma(m),\sigma(m-1),\ldots,\sigma(1))$.
	In each agent $i = \sigma(t)$'s turn, she picks her favorite unallocated item, i.e., the item with minimum cost among those that are unallocated, under her own cost function.
	We call $(\sigma(1),\sigma(2), \cdots, \sigma(m))$ the \emph{forward sequence} and $(\sigma(m), \sigma(m-1), \cdots, \sigma(1))$ the \emph{reversed sequence} or \emph{picking sequence}.
	Since the sequence has length $m$, in the final allocation all items are allocated.
	
	\begin{algorithm}[htbp]
		\caption{Reversed Weight Picking Sequence Algorithm} \label{alg:RWPS}
		\KwIn{An instance $<M, N, \bw, \bc>$ with additive cost valuations.}
		Initialize $X_i \gets \emptyset$ and $s_i \gets 0$ for all $i \in N$, and $P \gets M$\;
		\For{$t = 1,2,\ldots,m$}{
			let $i^* \gets \arg \min_{i \in N} \{s_i\}$, break tie by agent index\;
			set $\sigma(t) \gets i^*$, and update $s_{i^*} \gets s_{i^*} + 1/w_{i^*}$\;
		}
		\For{$t = m, m-1, \cdots, 1$}{
			let $i \gets \sigma(t)$ and $e^* \gets \arg \min_{e \in P} \{c_i(e)\}$, break tie by item index\;
			update $X_i \gets X_i + e^*$ and $P \gets P - e^*$\;
		}
		\KwOut{$\bX = \{X_1, X_2 ,\cdots, X_n\}$.}
	\end{algorithm}
	
	In this section, we prove the following main result.
	
	\begin{theorem} \label{thm:compute-wef1}
		For the allocation of chores, the reversed weighted picking sequence algorithm (Algorithm~\ref{alg:RWPS}) computes WEF1 allocations in polynomial time.
	\end{theorem}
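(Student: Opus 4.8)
The plan is to dispose of the running time in a line, isolate two structural properties of the reversed phase, and then glue them together with the continuous interpretation; picking the right item to delete and checking the gluing is feasible will be the only delicate points. For running time: the forward phase is $m$ rounds, each an $O(n)$ scan of the sizes $s_i$, and the picking phase is $m$ rounds, each an $O(m)$ scan of the unallocated items, so Algorithm~\ref{alg:RWPS} runs in $O(m(n+m))$ time. For WEF1, fix agents $i,j$; if $X_i=\emptyset$ there is nothing to prove, so suppose $i$ takes $k_i\ge 1$ items and $j$ takes $k_j\ge 0$ items. Let agent $i$ occupy forward-positions $1,\dots,k_i$ of $\sigma$ (in order of appearance), and write $a^i_\ell$ for the item $i$ grabs on the reversed-phase turn arising from its forward-position $\ell$; define $a^j_1,\dots,a^j_{k_j}$ analogously.

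I would first record two structural facts about the reversed phase. \emph{Monotonicity}: each agent always takes her cheapest unallocated item and nothing is ever returned, so the costs of the items an agent ends up with, listed in the order she picked them, are non-decreasing; in the reversed phase $i$ picks in the order $a^i_{k_i},\dots,a^i_1$, hence $c_i(a^i_{k_i})\le\cdots\le c_i(a^i_1)$, so $a^i_1$ (the item $i$ picks last) is a costliest item of $X_i$ to her --- this is the item I will delete. \emph{Ordering}: along the forward sequence the size of the round's chosen agent is non-decreasing (her size only rises afterward), so the reversed phase handles turns in non-increasing order of that size; therefore $\frac{\ell-1}{w_i}>\frac{r-1}{w_j}$ forces $i$'s turn from position $\ell$ to precede $j$'s turn from position $r$, so $a^j_r$ is still available when $i$ takes $a^i_\ell$, whence $c_i(a^i_\ell)\le c_i(a^j_r)$.

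The remaining step --- where the continuous interpretation enters --- is to combine these into $w_j\sum_{\ell=2}^{k_i}c_i(a^i_\ell)\le w_i\sum_{r=1}^{k_j}c_i(a^j_r)$, which is exactly $\frac{c_i(X_i-a^i_1)}{w_i}\le\frac{c_i(X_j)}{w_j}$. On the real line put $A_\ell:=[\frac{\ell-2}{w_i},\frac{\ell-1}{w_i})$ for $\ell\in\{2,\dots,k_i\}$ (these tile $[0,\frac{k_i-1}{w_i})$) and $B_r:=[\frac{r-1}{w_j},\frac{r}{w_j})$ for $r\in\{1,\dots,k_j\}$ (these tile $[0,\frac{k_j}{w_j})$), and set $f(\ell,r):=w_iw_j\,|A_\ell\cap B_r|$. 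Then $\sum_\ell f(\ell,r)\le w_iw_j|B_r|=w_i$ for every $r$, and --- granting $\frac{k_i-1}{w_i}\le\frac{k_j}{w_j}$ so that each $A_\ell$ lies inside $[0,\frac{k_j}{w_j})$ --- also $\sum_r f(\ell,r)=w_iw_j|A_\ell|=w_j$ for every $\ell$. Any pair with $f(\ell,r)>0$ has $\frac{r-1}{w_j}<\frac{\ell-1}{w_i}$ strictly, so by Ordering $c_i(a^i_\ell)\le c_i(a^j_r)$; hence $w_j\sum_\ell c_i(a^i_\ell)=\sum_{\ell,r}f(\ell,r)\,c_i(a^i_\ell)\le\sum_{\ell,r}f(\ell,r)\,c_i(a^j_r)=\sum_r\big(\sum_\ell f(\ell,r)\big)c_i(a^j_r)\le w_i\sum_r c_i(a^j_r)$, as required.

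The only unproved input --- and the main obstacle --- is the inclusion $\frac{k_i-1}{w_i}\le\frac{k_j}{w_j}$ that makes the matching feasible; this is precisely the balancedness of the forward sequence. On the round where $i$ takes its last (position $k_i$) turn, its size $\frac{k_i-1}{w_i}$ is the global minimum, hence at most agent $j$'s size at that moment, which is $\frac{r_0}{w_j}$ with $r_0\le k_j$ the number of turns $j$ has already had; so $\frac{k_i-1}{w_i}\le\frac{r_0}{w_j}\le\frac{k_j}{w_j}$. Index-based tie-breaking in the forward phase is harmless, since Ordering is invoked only under a strict size inequality. Finally, I would note that the same two structural facts, with the interval lengths rescaled, should recover the goods proof of Chakraborty et al.~\cite{journals/teco/ChakrabortyISZ21} and underlie the announced characterization of picking sequences guaranteeing WEF$(x,y)$-type notions.
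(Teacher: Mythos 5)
Your proof is correct and follows essentially the same route as the paper: your ``feasibility'' bound $\frac{k_i-1}{w_i}\le\frac{k_j}{w_j}$ is the paper's Lemma~\ref{lemma:size-of-i-j}, your ``Ordering'' fact is Lemma~\ref{lemma:density-of-i-j}, and your fractional matching $f(\ell,r)=w_iw_j\,|A_\ell\cap B_r|$ is exactly the paper's integral comparison $\int_{1/w_i}^{k/w_i}\rho(\alpha)\,d\alpha\le\int_{0}^{k'/w_j}\rho'(\alpha)\,d\alpha$ written out as a discrete double-counting identity. No gaps; the tie-breaking remark and the handling of the strict inequality in the Ordering step are exactly the points that need care, and you treat them correctly.
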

	
	Note that the only difference between our algorithm and that of Chakraborty et al.~\cite{journals/teco/ChakrabortyISZ21} is that they use the forward sequence as the picking sequence, while we use the backward one.
	For the allocation of goods, they show that the partial allocation up to round-$t$, for any $t=1,2,\ldots,m$, is always WEF1 and their proof is based on mathematical induction.
	However, in the case of chores, we cannot guarantee that the partial allocation is always WEF1, e.g., consider the instance in Example~\ref{example:hard-forward} (in which agent $2$ picks two items at the beginning of the second phase).
	Therefore, instead of using mathematical induction, we take a continuous perspective on the picking sequence, which was first used to analyze the algorithm that computes WEF1 allocations for chores for IDO instances~\cite{conf/www/0037L022}.
	
	\paragraph{Continuous Perspective.}
	Suppose that in the first phase when we decide the forward sequence, in a round $t$ when agent $i$ is chosen, the variable $s_i$ continuously increases at a rate of $1/w_i$ for one unit of time, and variable $s_j$ does not change, for all $j\neq i$.
	Therefore, we can imagine that $s_i: [0,m] \to [0,m]$ is a non-decreasing continuous function, where $s_i(t)$ denotes the value of variable $s_i$ at time $t\in [0,m]$.
	Similarly, in the second phase when we let agents pick items following the reserved sequence, we assume that in round $t = m,m-1,\ldots,1$, agent $i = \sigma(t)$ \emph{consumes} the item she picks continuously at a rate of $1$, in time interval $(t-1,t]$.
	Note that round $m$ is in fact the first round in the second phase and round $1$ is the last round. 
	To avoid confusion, we will only use the round index $t$ to refer to a round.
	
	\begin{example} \label{example:continuous}
		Consider the following instance with $n=2$ agents and $m=5$ items. Let $w_1 = 0.4$ and $w_2 = 0.6$.
		After executing Algorithm~\ref{alg:RWPS}, the forward sequence is $(1,2,2,1,2)$.        
		We can plot the functions $s_1$ and $s_2$ with a continuous domain as follows, where the time interval $(t-1,t]$ corresponds to round $t$ in Algorithm~\ref{alg:RWPS}.
		At time $t=0,1,2,3,4$, the agent with smaller $s_i(t)$ will grow at a rate of $1/w_i$ until time $t+1$.
		When $s_1$ increases, its rate is given by $1/w_1 = 5/2$; when $s_2$ increases, its rate is given by $1/w_2 = 5/3$. 
		
		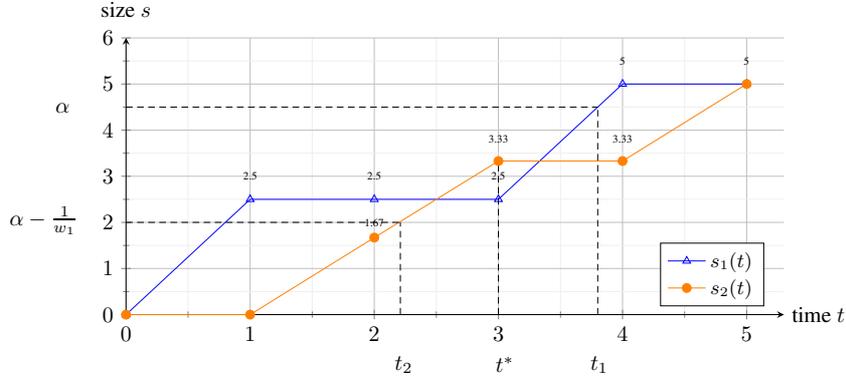
\begin{figure}[!h] 
			\centering 
			\resizebox{.75\textwidth}{!}{%
				\begin{tikzpicture} 
					\begin{axis}[
						axis lines = left,
						xlabel = time $t$,
						ylabel = size $s$,
						every axis x label/.style={at={(current axis.right of origin)},anchor=west},
						every axis y label/.style={at={(current axis.north west)},above=2mm},
						xmin = 0, xmax = 5.3,
						ymin = 0, ymax = 6,
						xtick distance = 1,
						ytick distance = 1,
						grid = both,
						minor tick num = 1,
						major grid style = {lightgray},
						minor grid style = {lightgray!25},
						width = 0.8\textwidth,
						height = 0.4\textwidth,
						legend cell align = {left},
						legend pos = south east
						]
						
						
						\addplot[mark=triangle,blue] plot coordinates {
							(0,0)
							(1,2.5)
							(2,2.5)
							(3,2.5)
							(4,5)
							(5,5)
						};
						\addlegendentry{\small $s_1(t)$}
						
						\addplot[mark=*,orange] plot coordinates { 
							(0,0)
							(1,0)
							(2,1.67)
							(3,3.33)
							(4,3.33)
							(5,5)
						};
						\draw [densely dashed] (0,4.5)--(3.8,4.5); 
						\draw [densely dashed] (3.8,0)--(3.8,4.5); 
						\node at (1,3) {\tiny 2.5} ;
						\node at (2,3) {\tiny 2.5} ;
						\node at (3,3) {\tiny 2.5} ;
						\node at (4,5.5) {\tiny 5} ;
						\node at (5,5.5) {\tiny 5} ;
						\node at (2,2) {\tiny 1.67} ;
						\node at (3,3.8) {\tiny 3.33} ;
						\node at (4,3.8) {\tiny 3.33} ;
						
						\draw [densely dashed] (0,2)--(2.21,2); 
						\draw [densely dashed] (2.21,0)--(2.21,2); 
						\draw [densely dashed] (3,0)--(3,3.33); 
						
						\addlegendentry{\small $s_2(t)$}
					\end{axis}
					
					\node at (7.5,-0.8) {$t_1$};
					\node at (-1,3.3) {$\alpha$};
					\node at (-1.3,1.5) {$\alpha - \frac{1}{w_1}$};
					\node at (4.4,-0.8) {$t_2$};
					\node at (6,-0.8) {$t^*$};
				\end{tikzpicture}
			}
			\vspace{-5pt}
			\caption{The size functions of the two agents, where $s_1(t_1) = \alpha$, $t^* = \lfloor t_1 \rfloor$, and $t_2$ satisfies $s_2(t_2) = \alpha - 1/w_1$.}
			\label{fig:plot-s1-s2}
		\end{figure}
		
		Suppose that after the second phase, $X_1 = \{e_1, e_2\}$, where $e_1$ is chosen by agent $1$ in round $t = 1$ and $e_2$ is chosen in round $t = 4$; $X_2 = \{e'_1, e'_2, e'_3\}$, where $e'_1$, $e'_2$ and $e'_3$ are chosen by agent $2$ in round $t = 2$, $t=3$ and $t=5$, respectively.
		Note that since the picking sequence is the reversed sequence, we have $c_1(e_1) \geq c_1(e_2)$ and $c_2(e'_1)\geq c_2(e'_2)\geq c_2(e'_3)$.
		We can equivalently represent the allocation with continuous time domain as shown in Figure~\ref{fig:continouns-allocation}, where each rectangle represents an item, and has width $1$ and height $1/w_i$, if the item is chosen by agent $i$.
		
		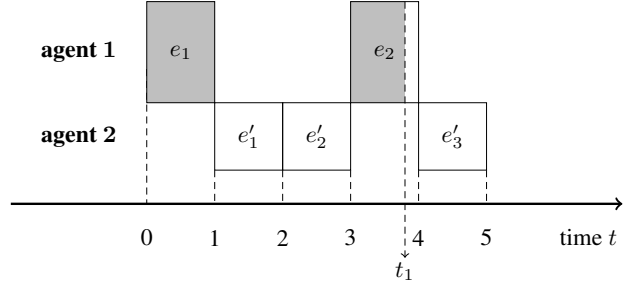
\begin{figure}[htbp]
			\begin{center}
			\resizebox{.55\textwidth}{!}{
				\begin{tikzpicture}
					\node at (-1,1.75) {\textbf{agent} $\mathbf{1}$};
					\node at (-1,0.5) {\textbf{agent} $\mathbf{2}$};
					\draw [densely dashed] (0,1.5)--(0,-0.5); 
					\node at (0, -1) {0};
					\draw [densely dashed] (1,0)--(1,-0.5); 
					\node at (1, -1) {1};
					\draw [densely dashed] (2,0)--(2,-0.5); 
					\node at (2, -1) {2};
					\draw [densely dashed] (3,0)--(3,-0.5); 
					\node at (3, -1) {3};
					\draw [densely dashed] (4,0)--(4,-0.5); 
					\node at (4, -1) {4};
					\draw [densely dashed] (5,0)--(5,-0.5); 
					\node at (5, -1) {5};
					\node at (6.5, -1) {time $t$};
					\draw [->][line width = 1pt] (-2,-0.5)--(7,-0.5);
					\filldraw [fill = gray!50] (0,1) rectangle (1,2.5);
					\node at (0.5,1.75) {$e_1$};
					\draw (1,0) rectangle (2,1);
					\node at (1.5,0.5) {$e'_1$};
					\draw (2,0) rectangle (3,1);
					\node at (2.5,0.5) {$e'_2$};
					\filldraw [fill = gray!50, draw = none] (3,1) rectangle (3.8,2.5);
					\draw (3,1) rectangle (4,2.5);
					\node at (3.5,1.75) {$e_2$};
					\draw (4,0) rectangle (5,1);
					\node at (4.5,0.5) {$e'_3$};
					\draw [densely dashed, ->] (3.8,2.5)--(3.8,-1.3);
					\node at (3.8,-1.5) {$t_1$};
				\end{tikzpicture}
		} 
			\end{center}
			\vspace{-15pt}
			\caption{Illustration of the continuous perspective of the allocation, where the size of the shadow area is $s_1(t_1)$.}
			\label{fig:continouns-allocation}
		\end{figure}
		
	\end{example}
	
	Under the continuous perspective on the algorithm and the allocation, we prove Theorem~\ref{thm:compute-wef1}.
	
	\begin{proofof}{Theorem~\ref{thm:compute-wef1}}
		Fix any two agents $i,j \in N$, we show that agent $i$ is WEF1 towards $j$.
		Suppose that in the final allocation $X_i = \{e_1, e_2, \ldots, e_k\}$, where the items are ordered in increasing order of the index of rounds in which they are chosen.
		Therefore, we have
		\begin{equation*}
			c_i(e_1) \geq c_i(e_2) \geq \cdots \geq c_i(e_k).
		\end{equation*}
		
		Similarly we define $X_j = \{e'_1, e'_2, \ldots, e'_{k'}\}$.
		In the following we show that $\frac{c_i(X_i - e_1)}{w_i} \leq \frac{c_i(X_j)}{w_j}$.
		
		\smallskip
		
		Now we take a continuous perspective and observe how $\frac{c_i(X_i)}{w_i}$ changes when $t$ increases continuously from $0$ to $m$.
		Recall that $s_i(t)$ represents the size of $X_i$ at time $t$ divided by $w_i$, and $s_i(m) = k/w_i$.
		When $s_i(t)$ increases from $(z-1)/w_i$ to $z/w_i$, where $z\in [k]$, item $e_z$ is being consumed by agent $i$ continuously.
		Let $\rho: (0,k/w_i] \rightarrow \bR^+$ be a continuous function such that $\rho(\alpha)$ represents the cost of the item agent $i$ is consuming when $s_i(t)$ reaches $\alpha$.
		In particular, we have
		\begin{equation*}
			\rho(\alpha) = c_i(e_z), \quad \text{ for } \alpha\in \left(\frac{z-1}{w_i}, \frac{z}{w_i} \right], \text{ where } z\in \{1,2,\cdots,k\}.
		\end{equation*}
		
		By definition, $\rho$ is a non-increasing function.
		
		Similarly, we define $\rho': (0,k'/w_j] \rightarrow \bR^+$ be a continuous function such that $\rho'(\alpha)$ represents the cost of the item agent $j$ is consuming, under the cost function of agent $i$, when $s_j(t)$ reaches $\alpha$.
		Hence we have
		\begin{equation*}
			\rho'(\alpha) = c_i(e'_z), \quad \text{ for } \alpha\in \left(\frac{z-1}{w_j}, \frac{z}{w_j} \right], \text{ where } z\in \{1,2,\cdots,k'\}.
		\end{equation*}
		
		By definition of $\rho$ and $\rho'$, we have
		\begin{equation*}
			\frac{c_i(X_i - e_1)}{w_i} = \int_{\frac{1}{w_i}}^{\frac{k}{w_i}} \rho(\alpha) d\alpha, \quad \text{and} \quad
			\frac{c_i(X_j)}{w_j} = \int_{0}^{\frac{k'}{w_j}} \rho'(\alpha) d\alpha.
		\end{equation*}
		
		Next we establish two useful lemmas to show that $\frac{c_i(X_i - e_1)}{w_i} \leq \frac{c_i(X_j)}{w_j}$.
		
		\begin{lemma} \label{lemma:size-of-i-j}
			We have $(k-1)/w_i \leq k'/w_j$.
		\end{lemma}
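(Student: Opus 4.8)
The plan is to exploit the balancing property built into the first phase of Algorithm~\ref{alg:RWPS}. Write $k = |X_i|$ and $k' = |X_j|$. In the second phase each agent picks exactly one item each time it appears in the forward sequence $\sigma$, so $i$ occurs exactly $k$ times in $\sigma$ and $j$ occurs exactly $k'$ times; since $s_i$ is incremented by $1/w_i$ precisely at the rounds $t$ with $\sigma(t) = i$, the value of $s_i$ after all $m$ rounds of the first phase equals $k/w_i$, and likewise $s_j = k'/w_j$ at the end.

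If $k = 0$ the inequality holds trivially because its left-hand side is negative, so assume $k \geq 1$ and let $t_0$ be the round of the first phase in which $i$ is selected for the $k$-th time (this is well defined since $i$ occurs $k \ge 1$ times in $\sigma$). Immediately before the update in round $t_0$, agent $i$ has been chosen $k-1$ times, hence $s_i = (k-1)/w_i$ at that instant; and because $i$ is the agent of minimum size in round $t_0$, every other agent --- in particular $j$ --- has size at least $(k-1)/w_i$ then. I would spell out that this uses only that ties in the selection rule are broken by index, so a competing agent never has a size strictly below that of the chosen agent; the non-strict bound $s_j \geq (k-1)/w_i$ is all that is needed. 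Finally, each $s_\ell$ is non-decreasing across the first phase (it is only ever incremented), so its final value is at least its value just before round $t_0$; applying this to $j$ gives $k'/w_j = s_j \geq (k-1)/w_i$, which is the claimed inequality.

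The argument is short and essentially a direct consequence of the greedy selection rule, so there is no substantial obstacle; the one step that deserves a sentence of care is the handling of ties at round $t_0$, to confirm that $s_j \geq (k-1)/w_i$ holds even when $j$ has a smaller index than $i$.
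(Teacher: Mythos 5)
Your proof is correct and follows essentially the same argument as the paper's: both identify the moment when $s_i$ is about to increase from $(k-1)/w_i$, use the greedy selection rule to conclude $s_j \geq s_i = (k-1)/w_i$ at that moment, and then use monotonicity of $s_j$ to reach the final value $k'/w_j$. The only cosmetic difference is that you phrase it with discrete round indices while the paper uses its continuous-time formulation; your explicit handling of $k=0$ and of tie-breaking is a welcome bit of extra care but not a different idea.
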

		\begin{proof}
			Recall that each round when $s_i(t)$ increases, its value grows by an amount of $1/w_i$.
			Consider the moment in time $t'$ when $s_i(t)$ starts to grow from $(k-1)/w_i$, i.e., $t'= \max \{t : s_i(t) = (k-1)/w_i\}$.
			At time $t'$, since $s_i(t')$ is chosen to grow, we must have that $s_i(t') \leq s_j(t')$.
			Therefore we have
			\begin{equation*}
				k'/w_j = s_j(m) \geq s_j(t') \geq s_i(t') = (k-1)/w_i,
			\end{equation*}
			as claimed.
		\end{proof}
		
		\begin{lemma} \label{lemma:density-of-i-j}
			For all $\alpha \in \left(\frac{1}{w_i},\frac{k}{w_i}\right)$, we have $\rho(\alpha) \leq \rho'(\alpha-\frac{1}{w_i})$.
		\end{lemma}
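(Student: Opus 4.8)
The plan is to convert the analytic inequality $\rho(\alpha)\le\rho'(\alpha-\tfrac{1}{w_i})$ into a combinatorial fact about which items are still unpicked at a particular step of the reversed phase. Fix $\alpha\in(\tfrac{1}{w_i},\tfrac{k}{w_i})$ and let $z,z'$ be the indices with $\alpha\in(\tfrac{z-1}{w_i},\tfrac{z}{w_i}]$ and $\alpha-\tfrac{1}{w_i}\in(\tfrac{z'-1}{w_j},\tfrac{z'}{w_j}]$. By the definitions of $\rho$ and $\rho'$ this means $\rho(\alpha)=c_i(e_z)$ and $\rho'(\alpha-\tfrac{1}{w_i})=c_i(e'_{z'})$; note $z\ge 2$ since $\alpha>\tfrac{1}{w_i}$, and we may assume $k\ge 2$ (otherwise the interval is empty and there is nothing to prove), so that $e'_{z'}$ is well defined because $0<\alpha-\tfrac{1}{w_i}<\tfrac{k-1}{w_i}\le\tfrac{k'}{w_j}$ by Lemma~\ref{lemma:size-of-i-j} (in particular $k'\ge 1$). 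Hence it suffices to prove $c_i(e_z)\le c_i(e'_{z'})$.

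The heart of the argument is the claim that, if $t_z$ and $u_{z'}$ denote the rounds in which $e_z$ and $e'_{z'}$ are picked, then $u_{z'}<t_z$. Granting this: the reversed phase processes rounds in the order $m,m-1,\dots,1$, so when agent $i$ picks $e_z$ at round $t_z$, the only items picked so far are those with round index larger than $t_z$; since $u_{z'}<t_z$, item $e'_{z'}$ has not yet been picked, and as agent $i$ takes a minimum-cost unpicked item under $c_i$ we get $c_i(e_z)\le c_i(e'_{z'})$, i.e. $\rho(\alpha)\le\rho'(\alpha-\tfrac{1}{w_i})$.

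To establish $u_{z'}<t_z$ I would return to the first phase and compare the size functions. Because the items of each agent are numbered by increasing round index, $t_z$ is the round of agent $i$'s $z$-th occurrence in the forward sequence, so the value of $s_i$ right before round $t_z$ equals $\tfrac{z-1}{w_i}$; since $\alpha\le\tfrac{z}{w_i}$, this value is at least $\alpha-\tfrac{1}{w_i}$. Moreover agent $i$ is the one selected at round $t_z$, so the value of $s_j$ right before round $t_z$ is at least that of $s_i$, hence at least $\alpha-\tfrac{1}{w_i}$. On the other hand $u_{z'}$ is the round of agent $j$'s $z'$-th occurrence, so the value of $s_j$ right before round $u_{z'}$ is $\tfrac{z'-1}{w_j}<\alpha-\tfrac{1}{w_i}$ by the choice of $z'$. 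Thus $s_j$ is strictly smaller right before round $u_{z'}$ than right before round $t_z$, and since $s_j$ is non-decreasing over the first phase, this forces $u_{z'}<t_z$.

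I expect the only delicate point to be handling the selection rule correctly: being chosen at round $t_z$ only yields $s_i\le s_j$ there (equality is possible and resolved by index), but this weak inequality is all the argument uses; one must also check that the items with round index larger than $t_z$ are exactly those already picked when agent $i$ acts at round $t_z$, because the reversed phase counts rounds down from $m$. The remaining steps are routine bookkeeping, and combining Lemma~\ref{lemma:size-of-i-j} with this lemma then gives $\frac{c_i(X_i-e_1)}{w_i}=\int_{1/w_i}^{k/w_i}\rho(\alpha)\,d\alpha\le\int_{0}^{(k-1)/w_i}\rho'(\beta)\,d\beta\le\int_{0}^{k'/w_j}\rho'(\beta)\,d\beta=\frac{c_i(X_j)}{w_j}$.
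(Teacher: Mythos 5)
Your proposal is correct and follows essentially the same route as the paper's proof: you compare the size functions at the moment of agent $i$'s $z$-th occurrence in the forward sequence, use the selection rule to get $s_i \le s_j$ there together with $\alpha - \frac{1}{w_i} \le \frac{z-1}{w_i}$, and conclude that $e'_{z'}$ is still unallocated when $i$ picks $e_z$. The only difference is presentational (discrete round indices $t_z, u_{z'}$ in place of the paper's continuous times $t_1, t_2, t^*$), plus a welcome explicit check that $\alpha - \frac{1}{w_i}$ lies in the domain of $\rho'$.
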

		\begin{proof}
			Fix any $\alpha$ and suppose that $\rho(\alpha) = c_i(e_z)$, i.e., $\alpha \in \left(\frac{z-1}{w_i},\frac{z}{w_i}\right]$.
			Let $t_1$ be the minimum\footnote{Recall that when $\alpha = z/w_i$, there can be multiple values of $t$ for which $s_i(t) = \alpha$.} such that $s_i(t_1) = \alpha$.
			Let $t^*$ be the maximum integer that is smaller than $t_1$.
			By definition, from time $t^*$ to $t^* + 1$, $s_i(t)$ grows from $(z-1)/w_i$ to $z/w_i$ and $t_1 \in (t^*, t^*+1]$.
			Let $t_2$ be minimum time for which $s_j(t_2) = \alpha - 1/w_i$.
			By definition we have $\rho'(\alpha - 1/w_i) = c_i(e'_x)$, where $e'_x$ is the item agent $j$ is consuming at time $t_2$.
			See Figure~\ref{fig:plot-s1-s2} for an example illustrating the definition of $t_1$, $t^*$ and $t_2$, with $i=1$ and $j=2$.
			
			Since at time $t^*$, $s_i$ is chosen to grow, we have
			\begin{equation*}
				s_i(t^*) = (z-1)/w_i \leq s_j(t^*).
			\end{equation*}
			
			Since $\alpha \in \left(\frac{z-1}{w_i},\frac{z}{w_i}\right]$, we have $\alpha - 1/w_i \leq (z-1)/w_i$.
			Recall that $t_2$ is the minimum such that $s_j(t_2) = \alpha - 1/w_i$.
			Since $s_j(t^*) \geq (z-1)/w_i \geq \alpha - 1/w_i$ and $s_j(t)$ is non-decreasing, we have $t_2 \leq t^*$.
			Since $t_1 \in (t^*, t^*+1]$, we have $t_2 \leq t^* < t_1$.
			In other words, in the second phase of Algorithm~\ref{alg:RWPS}, the event that ``agent $i$ includes item $e_z$ in to $X_i$'' happens strictly earlier\footnote{Recall that agents pick items following the reversed sequence, and at most one agent is consuming item, at any point in time.} than the event that ``agent $j$ includes item $e'_x$ into $X_j$''.
			Since agent $i$ picks item $e_z$ when $e'_x$ is still unallocated, we have $c_i(e_z) \leq c_i(e'_x)$, which implies $\rho(\alpha) = c_i(e_z) \leq c_i(e'_x) = \rho'(\alpha-{1}/{w_i})$ and completes the proof.
		\end{proof}
		
		Given Lemma~\ref{lemma:size-of-i-j} and~\ref{lemma:density-of-i-j}, we have
		\begin{equation*}
			\frac{c_i(X_i\setminus\{e_1\})}{w_i} 
			= \int_{\frac{1}{w_i}}^{\frac{k}{w_i}} \rho(\alpha) d\alpha \le \int_{\frac{1}{w_i}}^{\frac{k}{w_i}} \rho'\left(\alpha-\frac{1}{w_i}\right) d\alpha 
			= \int_{0}^{\frac{k-1}{w_i}} \rho'(\alpha) d\alpha \le \int_{0}^{\frac{k'}{w_j}} \rho'(\alpha) d\alpha
			= \frac{c_i(X_j)}{w_j},
		\end{equation*}
		where the first inequality follows from Lemma~\ref{lemma:density-of-i-j} and the second inequality follows from Lemma~\ref{lemma:size-of-i-j}.
		Hence agent $i$ is WEF1 towards agent $j$.
		Since agents $i$ and $j$ are chosen arbitrarily, the allocation is WEF1.
		It is straightforward that Algorithm~\ref{alg:RWPS} runs in $O(mn)$ time, which finishes the proof.
	\end{proofof}
	
	\paragraph{Remark: Allocation of Goods.}
	Note that our continuous perspective can also be applied to analyze the picking sequences for the case of goods~\cite{azizbest,journals/teco/ChakrabortyISZ21,journals/ai/ChakrabortySS21,journals/corr/abs-2209-03908}.
	We take the weighted picking sequence protocol~\cite{journals/teco/ChakrabortyISZ21} as an example and provide in Appendix~\ref{ssec:wef1-goods} an alternative and simpler proof using the continuous perspective to show that the returned allocation is WEF1.

	\subsection{Analysis of General Picking Sequences}
	\label{ssec:general-sequences}
	
	The RWPS algorithm we have introduced falls into the class of picking sequence algorithms.
	\begin{itemize}
		\item In the first phase, the algorithm decides a sequence of agents, which depends on the weights of agents, but is independent of the cost functions.
		\item In the second phase, the algorithm lets agents take turns to pick their favourite unallocated item, following the picking sequence decided in the first phase.
	\end{itemize}
	
	
	In the RWPS algorithm, a forward sequence $\sigma = (\sigma(1),\ldots,\sigma(m))$ is chosen in the first phase, while the reversed sequence $(\sigma(m),\ldots,\sigma(1))$ is used as the picking sequence.
	We refer to this class of algorithms\footnote{In fact, this class is equivalent to the general picking sequence algorithm since we can simply define the reversed one as the sequence. However, for convenience and consistency of our analysis, we let $\sigma$ be the sequence decided in the first phase.} the reversed picking sequence algorithm, and refer to $\sigma = (\sigma(1),\ldots,\sigma(m))$ as the forward sequence, $(\sigma(m),\ldots,\sigma(1))$ as the reversed sequence (which is the actual picking sequence).
	Recall that the size
	\begin{equation}
		s_i(t) = \frac{|\{ t':\sigma(t')=i, t' \leq t\}|}{w_i}
		\label{equation:s-i-t}
	\end{equation}
	of agent $i$ measures the weighted number of appearances of agent $i$ in the sequence up to time $t$, for all $t\in \{0,1,\ldots,m\}$.
	We have shown that if for all $t\in \{1,2,\ldots,m\}$, 
	\begin{equation*}
		\sigma(t) = \arg\min_{i\in N} \{ s_i(t-1) \},
	\end{equation*}
	then the resulting allocation is WEF1.
	In the following, we consider other picking sequences, and establish the conditions under which the resulting allocation satisfies other fairness requirements.
	Note that we can also use the continuous perspective we have introduced, e.g., agents continuously consumes items following the reverse sequence, to analyze the algorithms.
	
	We consider the fairness notions of {\em weakly weighted envy-freeness up to one item} (WWEF1) and {\em weighted envy-freeness up to one transfer} (WEF1T) that are first proposed for the allocation of goods, by Chakraborty et al.~\cite{journals/teco/ChakrabortyISZ21} and Aziz et al.~\cite{azizbest}, respectively.
	In the following, we extend these notions to the allocation of chores.
	
	\begin{definition}[WWEF1]
		An allocation $\bX$ is \emph{weakly weighted envy-free up to one item} (WWEF1) if for any agents $i, j\in N$, there exists an item $e\in X_i$ such that
		\begin{equation*}
			\frac{c_i(X_i-e)}{w_i} \leq \frac{c_i(X_j)}{w_j} \quad \text{or} \quad \frac{c_i(X_i)}{w_i} \leq \frac{c_i(X_j+e)}{w_j}.
		\end{equation*}
	\end{definition}
	
	\begin{definition}[WEF1T]
		An allocation $\bX$ is \emph{weighted envy-free up to one transfer} (WEF1T) if for any agents $i, j\in N$, there exists an item $e\in X_i$ such that
		\begin{equation*}
			\frac{c_i(X_i-e)}{w_i} \leq \frac{c_i(X_j+e)}{w_j}.
		\end{equation*}
	\end{definition}
	
	The above relaxations of weighted envy-freeness, together with WEF1, can be unified under the fairness notion of WEF$(x,y)$, which is first proposed for the allocation of goods by Chakraborty et al.~\cite{conf/aaai/ChakrabortySS22}.
	In the following, we extend it to the allocation of chores.
	
	\begin{definition}[WEF$(x,y)$]
		For any $x,y \in \left [ 0,1 \right ]$, an allocation $\bX$ is \emph{weighted envy-free up to $(x,y)$} (WEF$(x,y)$), if for any agents $i, j\in N$, there exists an item $e\in X_i$ such that
		\begin{equation*}
			\frac{c_i(X_i)-x\cdot c_i(e)}{w_i} \leq \frac{c_i(X_j)+y\cdot c_i(e)}{w_j} 
		\end{equation*}
	\end{definition}
	
	By definition, WEF1 is equivalent to WEF$(1,0)$; WEF1T is equivalent to WEF$(1,1)$; an allocation is WWEF1 if and only if for any agents $i,j,\in N$, agent $i$ is WEF$(1,0)$ or WEF$(0,1)$ towards agent $j$.
	In the following, we establish the conditions under which the allocation returned by the picking sequence algorithm guarantees WEF$(x,y)$.
	
	\begin{theorem}\label{the:wef(x,y)}
		A reversed picking sequence algorithm computes WEF$(x,y)$ allocations if and only if for any $t\in \{ 1,2,\ldots,m \}$ and any agents $i, j\in N$, we have
		\begin{equation*}
			s_i(t) - \frac{x}{w_i} \leq s_j(t) + \frac{y}{w_j},
		\end{equation*}
		where $s_i(t)$ is defined as in~\eqref{equation:s-i-t}.
	\end{theorem}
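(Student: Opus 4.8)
The plan is to prove the two implications separately. For the \emph{``if''} direction I would reuse the continuous perspective from the proof of Theorem~\ref{thm:compute-wef1}, replacing Lemmas~\ref{lemma:size-of-i-j} and~\ref{lemma:density-of-i-j} with ``$\delta$-shifted'' analogues for $\delta:=\frac{x}{w_i}+\frac{y}{w_j}$; for the \emph{``only if''} direction I would construct an identical-ordering instance on which the output fails WEF$(x,y)$. Concretely, for the ``if'' direction, fix agents $i,j$ with $X_i\neq\emptyset$ (the pair is trivial otherwise), write $X_i=\{e_1,\ldots,e_k\}$ and $X_j=\{e'_1,\ldots,e'_{k'}\}$ ordered by increasing round index so that $c_i(e_1)\ge\cdots\ge c_i(e_k)$, and define the non-increasing step function $\rho$ and the step function $\rho'$ exactly as in the proof of Theorem~\ref{thm:compute-wef1}, so that $\frac{c_i(X_i)}{w_i}=\int_0^{k/w_i}\rho$ and $\frac{c_i(X_j)}{w_j}=\int_0^{k'/w_j}\rho'$. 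Choosing $e_1$ as the removed item, WEF$(x,y)$ for this pair is equivalent to
\[
\frac{c_i(X_i)}{w_i}-\frac{c_i(X_j)}{w_j}\;\le\;\delta\cdot c_i(e_1).
\]
If $\delta\ge k/w_i$ (which, by the hypothesis at $t=m$, covers $X_j=\emptyset$) this is immediate since $\int_0^{k/w_i}\rho\le\frac{k}{w_i}c_i(e_1)\le\delta\,c_i(e_1)$. Otherwise I would split $\int_0^{k/w_i}\rho(\alpha)\,d\alpha=\int_0^{\delta}\rho+\int_{\delta}^{k/w_i}\rho\le\delta\,c_i(e_1)+\int_0^{k/w_i-\delta}\rho(\beta+\delta)\,d\beta$ and reduce the goal to: $(a)$ the size bound $k/w_i-\delta\le k'/w_j$, which is literally the hypothesis at $t=m$; and $(b)$ the shifted density inequality $\rho(\alpha)\le\rho'(\alpha-\delta)$ for almost all $\alpha\in(\delta,k/w_i)$. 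Given $(a)$ and $(b)$, $\int_0^{k/w_i-\delta}\rho(\beta+\delta)\,d\beta\le\int_0^{k/w_i-\delta}\rho'\le\int_0^{k'/w_j}\rho'=\frac{c_i(X_j)}{w_j}$, completing this direction.

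To prove $(b)$, fix such an $\alpha$, let $z$ satisfy $\rho(\alpha)=c_i(e_z)$, and let $\tau_i$ be the round in which agent $i$ picks $e_z$, i.e.\ the $z$-th round with $\sigma(\cdot)=i$, so $s_i(\tau_i)=z/w_i$. Applying the hypothesis at $t=\tau_i$ gives $s_j(\tau_i)\ge z/w_i-\delta>\alpha-\delta$. Let $\tau_j$ be the round in which agent $j$ picks the item $e'_p$ realizing $\rho'(\alpha-\delta)$; then $s_j(\tau_j-1)<\alpha-\delta<s_j(\tau_i)$, so monotonicity of $s_j$ forces $\tau_j-1<\tau_i$, hence $\tau_j\le\tau_i$, and since $\tau_j\neq\tau_i$ in fact $\tau_j<\tau_i$. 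Because the second phase processes rounds in \emph{decreasing} order of index, agent $i$ picks $e_z$ strictly before agent $j$ picks $e'_p$, so $e'_p$ is still unallocated when $i$ picks $e_z$ and therefore $c_i(e_z)\le c_i(e'_p)$, i.e.\ $\rho(\alpha)\le\rho'(\alpha-\delta)$. (Boundary values of $\alpha$ form a null set, irrelevant to the integrals.) This is the natural analogue of Lemma~\ref{lemma:density-of-i-j}: there the min-size rule supplied the required inequality between sizes at the critical round, whereas here it is supplied by the hypothesis invoked precisely at $t=\tau_i$.

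For the ``only if'' direction, suppose the condition fails, so there are $t_0\in\{1,\ldots,m\}$ and agents $i,j$ with $s_i(t_0)-x/w_i>s_j(t_0)+y/w_j$. Set $a:=w_i\,s_i(t_0)=|\{t'\le t_0:\sigma(t')=i\}|$ and $b:=w_j\,s_j(t_0)$, so $\frac{a-x}{w_i}>\frac{b+y}{w_j}$ (which forces $a\ge1$). I would run the algorithm on the identical-ordering instance where every agent has the common additive cost function assigning cost $\epsilon$ to a fixed set of $m-t_0$ ``cheap'' items and cost $\frac{1-(m-t_0)\epsilon}{t_0}$ to the remaining $t_0$ ``expensive'' items, for small $\epsilon>0$. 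In the second phase the first $m-t_0$ picks (rounds $m,\ldots,t_0+1$) exhaust the cheap items and the last $t_0$ picks (rounds $t_0,\ldots,1$) take the expensive ones, so agent $i$ receives exactly $a$ expensive items and agent $j$ exactly $b$. Substituting these bundles into the WEF$(x,y)$ requirement for $(i,j)$ with the most favourable choice of $e\in X_i$ — an expensive item, since enlarging $c_i(e)$ helps both sides of the inequality — the requirement becomes, in the limit $\epsilon\to0$, $\frac{a-x}{w_i}\le\frac{b+y}{w_j}$, contradicting our assumption; hence for all sufficiently small $\epsilon>0$ the returned allocation is not WEF$(x,y)$.

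The step I expect to be the main obstacle is identifying the correct shift $\delta=\frac{x}{w_i}+\frac{y}{w_j}$ and proving the shifted density inequality $(b)$ — in particular, realizing that the hypothesis must be applied at the exact round $\tau_i$ where agent $i$ makes the relevant pick, rather than at $\tau_i-1$ as a naive imitation of Lemma~\ref{lemma:density-of-i-j} would suggest (which would lose an additive $1/w_i$ and break the argument). Everything else — the equivalence of WEF$(x,y)$ for the pair with the scalar inequality, the integral splitting, and the handling of the degenerate cases ($X_i$ or $X_j$ empty, $\delta\ge k/w_i$, $a=0$, boundary $\alpha$) — is routine bookkeeping.
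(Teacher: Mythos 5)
Your proposal is correct and follows essentially the same route as the paper: the same continuous/size-function perspective, the same shift $\delta=\frac{x}{w_i}+\frac{y}{w_j}$ with the two key facts being the size bound from the hypothesis at $t=m$ and the shifted density inequality $\rho(\alpha)\le\rho'(\alpha-\delta)$ obtained by applying the hypothesis at the round where agent $i$ makes the relevant pick, plus the same cheap/expensive-item construction for necessity. Your variants (invoking the condition at the integer round $\tau_i$ rather than at the continuous time $t_1$, using $\epsilon$-cost rather than zero-cost items, and spelling out the degenerate cases) are only cosmetic refinements of the paper's argument.
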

	\begin{proof}
		We first show that the property is sufficient for ensuring WEF$(x,y)$.
		Fix any agent $i,j\in N$ and let $X_i$ and $X_j$ be the bundles that agent $i$ and $j$ receive in the final allocation, respectively.
		Similar to Theorem~\ref{thm:compute-wef1}, let $X_i = \{e_1, e_2,\dots, e_k\}$ and $X_j = \{e'_1, e'_2,\dots, e'_k\}$, where the items are ordered in increasing order of the index of rounds in which they are chosen. Therefore, we have $c_i(e_1) \geq c_i(e_2) \geq \cdots \geq c_i(e_k)$.
		In the following, we show that
		\begin{equation*}
			\frac{c_i(X_i)-x\cdot c_i(e_1)}{w_i} \leq \frac{c_i(X_j)+y\cdot c_i(e_1)}{w_j}.
		\end{equation*}
		
		Let $\rho: (0,k/w_i] \rightarrow \bR^+$ be a continuous function such that $\rho(\alpha)$ represents the cost of the item agent $i$ is consuming, when $s_i(t)$ reaches $\alpha$.
		In particular, we have
		\begin{equation*}
			\rho(\alpha) = c_i(e_z), \quad \text{ for } \alpha\in \left(\frac{z-1}{w_i}, \frac{z}{w_i} \right], \text{ where } z\in \{1,2,\ldots,k\}.
		\end{equation*}
		
		Similarly, we define $\rho': (0,k'/w_j] \rightarrow \bR^+$ be a continuous function such that $\rho'(\alpha)$ represents the cost of the item agent $j$ is consuming, under the cost function of agent $i$, when $s_j(t)$ reaches $\alpha$.
		Hence we have
		\begin{equation*}
			\rho'(\alpha) = c_i(e'_z), \quad \text{ for } \alpha\in \left(\frac{z-1}{w_j}, \frac{z}{w_j} \right], \text{ where } z\in \{1,2,\ldots,k'\}.
		\end{equation*}
		
		By definition of $\rho$ and $\rho'$, we have
		\begin{equation*}
			\frac{c_i(X_i) -x\cdot c_i(e_1)}{w_i} = \int_{\frac{x}{w_i}}^{\frac{k}{w_i}} \rho(\alpha) d\alpha, \quad \text{and} \quad
			\frac{c_i(X_j)+y\cdot c_i(e_1)}{w_j} = \int_{0}^{\frac{k'}{w_j}} \rho'(\alpha) d\alpha + \frac{y}{w_j}\cdot c_i(e_1).
		\end{equation*}
		
		Using the condition given in the theorem at $t=m$, we obtain the following immediately.
		
		\begin{claim} \label{claim:size-of-i-j}
			We have $\frac{k}{w_i}- \frac{x}{w_i}-\frac{y}{w_j} \le \frac{k'}{w_j}$.
		\end{claim}
		
		Next, we establish a claim that is very similar to Lemma~\ref{lemma:density-of-i-j}.
		
		\begin{claim}\label{claim:density-of-i-j}
			For all $\alpha \in \left(\frac{x}{w_i}+\frac{y}{w_j},\frac{k}{w_i}\right)$, we have  $\rho (\alpha) \le \rho' \left(\alpha-\frac{x}{w_i}-\frac{y}{w_j}\right)$.
		\end{claim}
		\begin{proof}			
			Fix any $\alpha$ and suppose that $\rho(\alpha) = c_i(e_z)$, i.e., $\alpha \in \left(\frac{z-1}{w_i},\frac{z}{w_i}\right]$.
			Let $t_1$ be the minimum such that $s_i(t_1) = \alpha$. 
			Using the condition given in the theorem at $t = t_1$, we have
			\begin{equation*}
				s_i(t_1) - \frac{x}{w_i}-\frac{y}{w_j} = \alpha - \frac{x}{w_i}-\frac{y}{w_j} \leq s_j(t_1).
			\end{equation*}
			
			Let $t_2$ be the minimum such that $s_j(t_2) = \alpha - x/w_i - y/w_j$.
			By definition we have $\rho'(\alpha - x/w_i - y/w_j) = c_i(e'_p)$, where $e'_p$ is the item agent $j$ is consuming at time $t_2$.			
			Since $s_j(t_2) = \alpha - \frac{x}{w_i} - \frac{y}{w_j} \leq s_j(t_1)$ and $s_j(t)$ is non-decreasing, we have $t_2 \leq t_1$.
			In other words, in the second phase of picking sequence algorithm, the event that ``agent $i$ includes item $e_z$ into $X_i$'' happens strictly earlier than the event that ``agent $j$ includes item $e'_p$ into $X_j$''.
			Since agent $i$ picks item $e_z$ when $e'_p$ is still unallocated, we have $c_i(e_z) \leq c_i(e'_p)$, which implies $\rho(\alpha) = c_i(e_z) \leq c_i(e'_p) = \rho'(\alpha-x/w_i-y/w_j)$ and finishes the proof.
		\end{proof}
		
		Combining Claim \ref{claim:size-of-i-j} and \ref{claim:density-of-i-j}, we have
		\begin{equation*}
			\begin{aligned}
				& \frac{c_i(X_i)-x\cdot c_i(e_1)}{w_i} 
				=  \int_{\frac{x}{w_i}}^{\frac{x}{w_i}+\frac{y}{w_j}} \rho(\alpha) d \alpha + \int_{\frac{x}{w_i}+\frac{y}{w_j}}^{\frac{k}{w_i}} \rho(\alpha) d \alpha \\
				\le & \frac{y}{w_j}\cdot c_i(e_1)+ \int_{\frac{x}{w_i}+\frac{y}{w_j}}^{\frac{k}{w_i}} \rho' \left( \alpha-\frac{x}{w_i}-\frac{y}{x_j} \right) d \alpha
				= \frac{y}{w_j}\cdot c_i(e_1) + \int_{0}^{\frac{k}{w_i}-\frac{x}{w_i}-\frac{y}{w_j}} \rho'(\alpha) d \alpha \\
				\le & \frac{y}{w_j}\cdot c_i(e_1) + \int_{0}^{\frac{k'}{w_j}} \rho'(\alpha) d \alpha 
				= \frac{c_i(X_j)+y\cdot c_i(e_1)}{w_j}.
			\end{aligned}
		\end{equation*}
		where the first and second inequalities hold due to Claim \ref{claim:density-of-i-j} and Claim \ref{claim:size-of-i-j}, respectively. 
		
		\smallskip
		
		Finally, we show that the condition is necessary for ensuring WEF$(x,y)$.
		Assume otherwise, and let $t \in \{1,2,\ldots,m\}$ be such that $s_i(t) - \frac{x}{w_i} > s_j(t) + \frac{y}{w_j}$ for some $i,j\in N$.
		Consider an instance with $t$ items having cost $1$ to all agents and $m - t$ items having cost $0$ to all agents.
		Following the reversed picking sequence, we have
		\begin{equation*}
			\frac{c_i(X_i)-x\cdot c_i(e)}{w_i} = s_i(t) - \frac{x}{w_i}, \quad \text{ and } \quad
			\frac{c(X_j)+y\cdot c_i(e)}{w_j} = s_j(t)  +  \frac{y}{w_j},
		\end{equation*}
		which implies that agent $i$ is not WEF$(x,y)$ towards agent $j$, and is a contradiction.
	\end{proof}

	From Theorem~\ref{the:wef(x,y)}, we have the following corollaries.
	
	\begin{corollary}\label{coro:wef1}
		A reversed picking sequence algorithm computes WEF1 allocations if and only if for any $t\in \{ 1,2,\ldots,m \}$ and any pair of agents $i,j\in N$, we have $s_i(t) - \frac{1}{w_i} \leq s_j(t)$.
	\end{corollary}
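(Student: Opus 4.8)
The plan is to derive this as the special case $x=1$, $y=0$ of Theorem~\ref{the:wef(x,y)}. First I would recall, from the discussion immediately following the definition of WEF$(x,y)$, that an allocation is WEF1 if and only if it is WEF$(1,0)$: substituting $x=1$ and $y=0$ into the defining inequality of WEF$(x,y)$ gives, for the maximal-cost item $e\in X_i$ (the one that matters, since $\rho$ is non-increasing), $\frac{c_i(X_i)-c_i(e)}{w_i}=\frac{c_i(X_i-e)}{w_i}\le \frac{c_i(X_j)}{w_j}$, which is exactly the WEF1 condition. The only point to check separately is the degenerate case $X_i=\emptyset$ in the WEF1 definition, which is trivial because then $c_i(X_i-e)=0\le c_i(X_j)/w_j\cdot w_i$ and there is nothing to remove; this matches the vacuous reading of WEF$(1,0)$ in that case.

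Given this equivalence, I would then simply instantiate Theorem~\ref{the:wef(x,y)} at $x=1$, $y=0$. The theorem says a reversed picking sequence algorithm computes WEF$(x,y)$ allocations if and only if for every $t\in\{1,2,\ldots,m\}$ and every pair $i,j\in N$ we have $s_i(t)-\frac{x}{w_i}\le s_j(t)+\frac{y}{w_j}$; plugging in $x=1$, $y=0$ turns the right-hand side into $s_j(t)+\frac{0}{w_j}=s_j(t)$, yielding precisely the claimed condition $s_i(t)-\frac{1}{w_i}\le s_j(t)$. Since Theorem~\ref{the:wef(x,y)} is an ``if and only if'' statement, both directions of the corollary follow at once, with no separate argument needed for sufficiency or necessity.

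I do not anticipate any real obstacle here: the corollary is a direct specialization, and the only care required is in verifying that the $(x,y)=(1,0)$ substitution genuinely reproduces the WEF1 condition (including the trivial empty-bundle clause), which is routine. As a sanity check one could also note that in the unweighted case $w_i=1/n$ the condition reads $|\{t':\sigma(t')=i,\,t'\le t\}| - 1 \le |\{t':\sigma(t')=j,\,t'\le t\}|$, i.e., the counts of any two agents in the first $t$ positions of $\sigma$ differ by at most one, which is exactly the round-robin property and confirms that RWPS (and any ``balanced'' forward sequence) satisfies it.
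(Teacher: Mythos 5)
Your proposal is correct and matches the paper exactly: the paper states this corollary without a separate proof, presenting it as the immediate specialization $(x,y)=(1,0)$ of Theorem~\ref{the:wef(x,y)}, which is precisely your argument. Your extra care about the empty-bundle clause and the unweighted sanity check are fine but not needed beyond what the paper implicitly assumes.
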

	
	Consequently, to ensure WEF1, RWPS is basically the only algorithm one can use.
	Any other algorithm that ensures WEF1 can be regarded as RWPS with a different tie-breaking rule.
	
	\begin{corollary}
		A reversed picking sequence algorithm computes WEF1T (WEF$(1,1)$) allocations if and only if for any $t\in \{ 1,2,\ldots,m \}$ and any pair of agents $i,j\in N$, we have $s_i(t) - \frac{1}{w_i} \leq s_j(t) + \frac{1}{w_j}$.
	\end{corollary}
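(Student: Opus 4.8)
The plan is to derive this as an immediate specialization of Theorem~\ref{the:wef(x,y)}. Recall from the discussion following the definition of WEF$(x,y)$ that, for the allocation of chores, WEF1T is precisely WEF$(1,1)$: unfolding the definition of WEF$(1,1)$, an allocation $\bX$ is WEF$(1,1)$ if and only if for every pair $i,j\in N$ there is an item $e\in X_i$ with $\frac{c_i(X_i)-c_i(e)}{w_i}\le \frac{c_i(X_j)+c_i(e)}{w_j}$, i.e.\ $\frac{c_i(X_i-e)}{w_i}\le \frac{c_i(X_j+e)}{w_j}$, which is exactly the WEF1T condition. So the first step is simply to record this equivalence.

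The second step is to invoke Theorem~\ref{the:wef(x,y)} with $x=y=1$. By that theorem, a reversed picking sequence algorithm computes WEF$(1,1)$ allocations if and only if for every $t\in\{1,2,\ldots,m\}$ and every pair of agents $i,j\in N$ we have $s_i(t)-\frac{1}{w_i}\le s_j(t)+\frac{1}{w_j}$, with $s_i(t)$ defined as in~\eqref{equation:s-i-t}. Combining this with the equivalence from the first step yields the claimed characterization verbatim, so no further argument is needed.

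There is essentially no obstacle here: the corollary is a direct instantiation, and the only thing worth a sentence of care is confirming that the chores-side definition of WEF1T really does coincide with WEF$(1,1)$ under the sign conventions used in this paper (costs are non-negative and both relaxations move the single item $e$ in the envy-reducing direction), which the unfolding above makes transparent. I would therefore present the proof as two short lines: "WEF1T $=$ WEF$(1,1)$ by definition; now apply Theorem~\ref{the:wef(x,y)} with $x=y=1$."
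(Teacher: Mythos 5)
Your proposal is correct and matches the paper exactly: the paper itself notes that WEF1T is by definition WEF$(1,1)$ and then derives this corollary as a direct instantiation of Theorem~\ref{the:wef(x,y)} with $x=y=1$. Nothing further is needed.
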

	
	\begin{corollary}
		A reversed picking sequence algorithm computes WWEF1 allocations if and only if for any $t\in \{ 1,2,\ldots,m \}$ and any pair of agents $i,j\in N$, we have $s_i(t) \leq s_j(t) + \frac{1}{ \min\{w_i, w_j\} }$.
	\end{corollary}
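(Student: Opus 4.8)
The plan is to deduce this corollary from Theorem~\ref{the:wef(x,y)}, using the observation recorded just after the definition of WEF$(x,y)$ that an allocation is WWEF1 if and only if, for every ordered pair $(i,j)$, agent $i$ is WEF$(1,0)$ or WEF$(0,1)$ towards agent $j$. The one structural fact I first want to extract from the proof of Theorem~\ref{the:wef(x,y)} is that its sufficiency argument is \emph{local}: it fixes the pair $(i,j)$ and only ever invokes the sequence inequality $s_i(t) - x/w_i \le s_j(t) + y/w_j$ for that single pair. Hence it is legitimate to apply it pair by pair, with a different choice of $(x,y)$ for each pair.

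For the \emph{if} direction, assume $s_i(t) \le s_j(t) + 1/\min\{w_i,w_j\}$ for all $t$ and all $i,j$, and fix a pair $(i,j)$. If $w_i \le w_j$ then $1/\min\{w_i,w_j\} = 1/w_i$, so the hypothesis for this pair reads $s_i(t) - 1/w_i \le s_j(t) + 0/w_j$ for all $t$; the localized sufficiency part of Theorem~\ref{the:wef(x,y)} with $(x,y)=(1,0)$ then says agent $i$ is WEF$(1,0)$ towards $j$. If $w_i > w_j$ then $1/\min\{w_i,w_j\} = 1/w_j$, the hypothesis reads $s_i(t) - 0/w_i \le s_j(t) + 1/w_j$, and the same theorem with $(x,y)=(0,1)$ gives that agent $i$ is WEF$(0,1)$ towards $j$. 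In either case agent $i$ is WWEF1 towards $j$ (the trivial sub-case $X_i=\emptyset$ gives $c_i(X_i)/w_i = 0 \le c_i(X_j)/w_j$ directly), and since $(i,j)$ was arbitrary the allocation is WWEF1.

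For the \emph{only if} direction, suppose the sequence inequality fails, i.e.\ $s_i(t) > s_j(t) + 1/\min\{w_i,w_j\}$ for some $t\in\{1,\dots,m\}$ and some $i,j$. I reuse the hard instance from the necessity part of Theorem~\ref{the:wef(x,y)}: take $t$ items of cost $1$ for every agent and $m-t$ items of cost $0$ for every agent. Running the reversed picking sequence, the $m-t$ cost-$0$ items are all taken in the first $m-t$ turns and the $t$ cost-$1$ items in the last $t$ turns, so agent $\ell$ ends up with exactly $w_\ell\, s_\ell(t)$ cost-$1$ items and $c_\ell(X_\ell)/w_\ell = s_\ell(t)$; also $s_i(t) > 1/\min\{w_i,w_j\} > 0$ forces $X_i$ to contain a cost-$1$ item, which is in particular a highest-cost item of $X_i$ --- call it $e_1$. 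For WWEF1 of $i$ towards $j$ some $e\in X_i$ must satisfy $c_i(X_i-e)/w_i \le c_i(X_j)/w_j$ or $c_i(X_i)/w_i \le (c_i(X_j)+c_i(e))/w_j$; the weakest such requirement is attained at $e=e_1$ (a cost-$0$ witness only yields the stronger bound $s_i(t)\le s_j(t)$, and any other cost-$1$ witness yields the same bound as $e_1$), and with $e=e_1$ the disjunction becomes $s_i(t)-1/w_i \le s_j(t)$ or $s_i(t)\le s_j(t)+1/w_j$, i.e.\ $s_i(t) \le s_j(t)+\max\{1/w_i,1/w_j\} = s_j(t)+1/\min\{w_i,w_j\}$, contradicting the failure assumption. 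So the output is not WWEF1, completing the proof.

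There is no serious obstacle here: the argument is essentially a repackaging of Theorem~\ref{the:wef(x,y)} for the two extreme parameter choices $(1,0)$ and $(0,1)$. The two points that do need a line of care are (i) making explicit that the sufficiency argument of Theorem~\ref{the:wef(x,y)} depends only on the pair under consideration, so that different pairs may legitimately appeal to it with different $(x,y)$; and (ii) observing that the per-item existential quantifier in the definition of WWEF1 is optimized by a highest-cost item of $X_i$, which is what lets the necessity direction reduce to a single inequality relating $s_i(t)$ and $s_j(t)$.
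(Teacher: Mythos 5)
Your proposal is correct and is essentially the intended derivation: the paper states this corollary as an immediate consequence of Theorem~\ref{the:wef(x,y)} (via the observation that WWEF1 for a pair is WEF$(1,0)$-or-WEF$(0,1)$), and your proof simply makes that derivation explicit, choosing $(x,y)=(1,0)$ or $(0,1)$ per pair according to which weight is smaller. Your two points of care --- that the sufficiency argument of Theorem~\ref{the:wef(x,y)} is per-pair, and that the necessity direction needs the single hard instance to defeat \emph{both} disjuncts simultaneously (optimizing the witness item over cost-$0$ and cost-$1$ items) --- are exactly the details the paper leaves implicit, and you handle them correctly.
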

	
	Furthermore, we show that we can compute WEF$(x,y)$ allocations for any $x+y \geq 1$, by providing an algorithm similar to Algorithm~\ref{alg:RWPS} (we defer the algorithm and analysis to Appendix~\ref{sec:wefxy}).

	%

	\section{WEF1 and PO for Bivalued Instances}
	
	In this section, we focus on the computation of allocations that are fair and efficient, and explore the existence of WEF1 and PO allocations.
	Garg et al.~\cite{conf/aaai/GargMQ22} and Ebadian et al.~\cite{conf/atal/EbadianP022} show that EF1 and PO allocations exist for the bi-valued instances (see below for the definition) when agents have equal weights.
	In this section we prove a more general result that WEF1 and PO allocations always exist and can be computed efficiently for the bi-valued instances, using a similar proof framework.
	
	We first give the definition of bi-valued instances.
	
	\begin{definition}[Bi-valued Instances]
		An instance is called \emph{bi-valued} if there exist constants $a,b \geq 0$ such that for any agent $i\in N$ and item $e\in M$ we have $c_i(e) \in \{a,b\}$.
	\end{definition}
	
	For non-zero $a,b$ with $a \neq b$, we can scale the cost functions so that $c_i(e) = \{1, k\}$ for some $k > 1$.
	If $c_i(e) = k$ we call item $e$ \emph{large} to agent $i$; otherwise we call it \emph{small} to $i$.
	Moreover, if there exists an agent $i\in N$ such that $c_i(e) = k$ for all $e\in M$, we can rescale the costs so that $c_i(e) = 1$ for all $e\in M$.
	Hence we can assume w.l.o.g. that for all $i\in N$, there exists at least one item $e\in M$ such that $c_i(e) = 1$.
	
	In this section we prove the following main result.
	
	\begin{theorem}\label{the:wef1+po-bivalued}
		There exists an algorithm that computes a WEF1 and PO allocation for any given bi-valued instance in polynomial time.
	\end{theorem}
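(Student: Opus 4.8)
The plan is to realize the allocation as an integral competitive equilibrium of the associated Fisher market, following the minimum-pain-per-buck ($\MPB$) framework used by Garg et al.~\cite{conf/aaai/GargMQ22} and Ebadian et al.~\cite{conf/atal/EbadianP022}, but with the budget of agent $i$ set to its weight $w_i$ rather than $1/n$. After the rescaling described above we may assume $c_i(e)\in\{1,k\}$ for a fixed $k>1$ and that every agent owns at least one small chore. For a positive price vector $\bp=(p_e)_{e\in M}$, let $\alpha_i(\bp)=\min_{e\in M} c_i(e)/p_e$ be agent $i$'s $\MPB$ ratio, let $D_i(\bp)=\{e\in M: c_i(e)/p_e=\alpha_i(\bp)\}$ be its $\MPB$ set, and write $p(S)=\sum_{e\in S}p_e$. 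The goal of the algorithm is to output a pair $(\bX,\bp)$ with: (i) $X_i\subseteq D_i(\bp)$ for every agent $i$; and (ii) for every ordered pair of agents $i,j$ there is a chore $e\in X_i$ with $(p(X_i)-p_e)/w_i\le p(X_j)/w_j$.

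I would first argue that (i) and (ii) together imply the theorem. Condition (i) is the hypothesis of the first welfare theorem for chores: if every agent holds only $\MPB$ chores with respect to $\bp$ then $\bX$ is fractionally Pareto optimal, hence PO (the one-line proof: a fractional Pareto improvement $\bY$ would give $\alpha_i\,p(Y_i)\le c_i(Y_i)\le c_i(X_i)=\alpha_i\,p(X_i)$, hence $p(Y_i)\le p(X_i)$ for all $i$ with strict inequality somewhere, contradicting $\sum_i p(Y_i)=\sum_i p(X_i)=p(M)$). Given (i) we have $c_i(X_i)=\alpha_i(\bp)\,p(X_i)$ and $c_i(X_j)\ge \alpha_i(\bp)\,p(X_j)$ for all $i,j$; hence (ii) yields $c_i(X_i-e)/w_i=\alpha_i(\bp)\,(p(X_i)-p_e)/w_i\le \alpha_i(\bp)\,p(X_j)/w_j\le c_i(X_j)/w_j$, which is exactly WEF1. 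Note that the fractional $\MPB$ equilibrium in which $p(X_i)=w_i$ for all $i$ is already WEF, so WEF1 is the price paid for integrality.

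To compute such a pair I would start from $p_e=1$ for all $e$ and an initial integral allocation satisfying (i): assign each agent only its small chores whenever possible, and for any leftover block of chores that are large to every agent still able to take them, raise those prices to $k$ so the chores enter their holder's $\MPB$ set. Then run a balancing phase. Let $\ell\in\argmax_i p(X_i)/w_i$ and build the directed $\MPB$ graph with an arc $i\to j$ whenever some chore of $X_i$ lies in $D_j(\bp)$. If from $\ell$ one can reach, along $\MPB$ arcs, an agent $h$ witnessing a violation of (ii), shift one chore backwards along the reaching path; otherwise multiply the prices of all chores in the $\MPB$ sets of agents reachable from $\ell$ by a common factor until a new $\MPB$ arc appears, and repeat. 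Both operations preserve (i) (removing a chore keeps the rest on $\MPB$; a received chore is on $\MPB$ by definition of the arc; and the uniform price raise is carried out only up to the first such event and on a set chosen so that $\MPB$ membership of held chores is preserved), and the loop halts exactly when (ii) holds for all pairs. Using the bi-valued structure one shows the prices stay within a bounded set of values, so there are only polynomially many price raises, and between raises a suitable potential on the sorted vector of ratios $p(X_i)/w_i$ strictly decreases, giving the polynomial running time.

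The step I expect to be the main obstacle is making the balancing phase both correct and polynomial in the \emph{weighted} setting. Because the ``up to one chore'' slack in (ii) is asymmetric in $i$ and $j$, one cannot simply track $p(X_i)$; the right object is the ratio $p(X_i)/w_i$, and one must choose the transferred chore and the stopping threshold carefully so that a transfer along the path strictly improves the global potential without creating a new WEF1 violation for any other pair — the naive choice can fail precisely when $w_\ell>w_h$. This, together with re-proving termination for the weighted potential, is where the argument genuinely departs from the unweighted analyses of~\cite{conf/aaai/GargMQ22,conf/atal/EbadianP022}; a secondary point to verify is that the initialization can always be completed while respecting (i), which is exactly where the normalization ``every agent owns a small chore'' is used.
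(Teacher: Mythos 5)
Your overall route is the same as the paper's: realize the allocation as an integral $\MPB$ (Fisher-market) equilibrium, note that the $\MPB$ condition yields PO via the first welfare theorem, that the price-space condition (ii) transfers to cost-space WEF1 because $c_i(X_j)\ge \alpha_i\, p(X_j)$, and then balance the weighted spendings $p(X_i)/w_i$ by moving chores along $\MPB$ paths away from the largest spender, raising prices by the factor $k$ when no such path exists. The two reductions in your second paragraph are correct and coincide with the paper's first lemma. The issue is that the balancing phase --- which you yourself flag as ``the main obstacle'' --- is where essentially all of the proof content lies, and your proposal does not supply it; it only names the difficulty.

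Concretely, here is what is missing and how the paper resolves it. First, the initial equilibrium is not an arbitrary $\MPB$ allocation: the paper resolves all $\MPB$ paths from low to high spenders and then partitions the agents into groups $N_1,\ldots,N_R$ ordered by decreasing $\hat p$, with the structural property that every item held by a higher group is large (cost $k$) for every agent in a strictly lower group. This is precisely what guarantees that after raising a group's prices by $k$, \emph{every} item of the big spender becomes an $\MPB$ item of the unraised least spender, so a transfer preserves the equilibrium; your initialization builds no such structure. Second, prices are raised group-wise and each group is raised \emph{exactly once}; your rule ``multiply prices until a new $\MPB$ arc appears, and repeat'' gives no control over repeated raises, and in a bi-valued instance a chore priced above $k$ is never again $\MPB$ for anyone, so uncontrolled raising breaks condition (i). Third, when the least spender has already been raised, a direct transfer from the big spender need not be $\MPB$ for her; the paper performs a three-agent swap $b\to i\to l$ through an unraised intermediary $i$ holding one of $l$'s original items, and proving that such an $i$ exists relies on the invariant that a big spender was never previously a least spender. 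Fourth, termination is not obtained from a decreasing potential on the sorted spending vector (this is exactly the kind of argument the paper observes fails for weighted chores); instead it follows from the monotonicity of the least spending together with the big-spender invariant, which bounds the number of times any agent can be a least spender by $m$. Since none of these four ingredients appears in the proposal, the argument as written has a genuine gap at its core, even though the framework is the right one.
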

	
	We first classify the items into two groups depending on their costs as follows.
	
	\begin{definition}[Item Groups] \label{def:large-small-item}
		We call item $e\in M$ a \emph{consistently large} item if for all $i\in N, c_i(e) = k$.
		Let $M^+$ include all consistently large items, and $M^-$ contain the other items:
		\begin{equation*}
			M^+ = \{e\in M : \forall i\in N, c_i(e)=k\}, \quad  M^- = \{e\in M : \exists i\in N, c_i(e)=1\}.
		\end{equation*}
	\end{definition}
	
	
	\paragraph{Fisher Market.}
	In the Fisher market, there is a \emph{price} vector $\bp$ that assigns each chore $e\in M$ a price $p(e) > 0$.
	For any subset $X_i \subseteq M$, let $p(X_i) = \sum_{e\in X_i} p(e)$.
	Given the price vector $\bp$, we define the \emph{pain-per-buck} ratio $\alpha_{i,e}$ of agent $i$ for chore $e$ to be $\alpha_{i,e} = c_i(e)/p(e)$, and the \emph{minimum pain-per-buck} (MPB) ratio $\alpha_i$ of agent $i$ to be $\alpha_i = \min_{e\in M} \{ \alpha_{i,e} \}$.
	For each agent $i$, we define $\MPB_i = \{e\in M: \alpha_{i,e} = \alpha_i\}$, and we call each item $e \in \MPB_i$ an MPB item of agent $i$.
	An allocation $\bX$ with price $\bp$ forms a (Fisher market) equilibrium $(\bX, \bp)$ if each agent only receives her MPB chores, i.e. $X_i \subseteq \MPB_i$ for any $i\in N$.
	
	\begin{definition}[pWEF1]
		An equilibrium $(\bX, \bp)$ is called \emph{price weighted envy-free up to one item} (pWEF1) if for any $i,j\in N$, there exists an item $e\in X_i$ such that
		\begin{equation*}
			\frac{p(X_i-e)}{w_i} \leq \frac{p(X_j)}{w_j}.
		\end{equation*}
	\end{definition}
	
	Throughout this section, we call $p(X_i)/w_i$ the (weighted) spending of agent $i$.
	For convenience of notation, given an equilibrium $(\bX, \bp)$, we use	 $\hat{p}_i$ to denote the (weighted) spending of agent $i$ after removing the item with maximum price, i.e.,
	\begin{equation*}
		\hat{p}_i = \min_{e\in X_i} \left\{ \frac{p(X_i-e)}{w_i} \right\}.
	\end{equation*}
	
	In the following, we say that agent $i$ \emph{strongly envies} agent $j$ if $\hat{p}_i > \frac{p(X_j)}{w_j}$.
	Note that the equilibrium is pWEF1 if and only if no agent strongly envies another agent.
	
	\begin{lemma}
		If an equilibrium $(\bX, \bp)$ is pWEF1, the allocation $\bX$ is WEF1 and PO.
	\end{lemma}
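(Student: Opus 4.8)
The plan is to exploit the single structural fact that an equilibrium supplies: $X_i\subseteq\MPB_i$ for every agent $i$, which by additivity of the cost functions gives $c_i(e)=\alpha_i\cdot p(e)$ for every $e\in X_i$, and hence $c_i(Y)=\alpha_i\cdot p(Y)$ for every $Y\subseteq X_i$. (In this section all costs are positive, so $\alpha_i>0$ for every $i$.) Two consequences are used throughout: on agent $i$'s own bundle, price and cost are \emph{proportional} with factor $\alpha_i$; and for every agent $j$ and every bundle $Y$, since $\alpha_{i,e}\geq\alpha_i$ for all $e$, we have $c_i(Y)\geq\alpha_i\cdot p(Y)$, i.e. the equilibrium prices lower-bound agent $i$'s cost of any bundle up to the factor $\alpha_i$.

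For Pareto optimality I would run the first--welfare--theorem argument. Suppose some allocation $\bX'$ (possibly fractional) Pareto-dominates $\bX$. For every agent $i$ we have $c_i(X'_i)\geq\alpha_i\cdot p(X'_i)$ while $c_i(X_i)=\alpha_i\cdot p(X_i)$, so $c_i(X'_i)\leq c_i(X_i)$ forces $p(X'_i)\leq p(X_i)$. Summing over all agents and using $\sum_{i}p(X'_i)=p(M)=\sum_{i}p(X_i)$, every inequality $p(X'_i)\leq p(X_i)$ must in fact be an equality, whence $c_i(X'_i)=c_i(X_i)$ for all $i$, contradicting the assumed strict improvement. (The same computation with fractional $\bX'$ shows $\bX$ is even fractionally PO.)

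For WEF1, fix agents $i,j$. If $X_i=\emptyset$ the condition holds vacuously, so assume $X_i\neq\emptyset$ and let $e^{*}\in X_i$ be an item attaining $\hat{p}_i$, i.e. $p(X_i-e^{*})/w_i=\hat{p}_i$. Combining proportionality on $X_i$, the pWEF1 guarantee $\hat{p}_i\leq p(X_j)/w_j$, and the price lower bound $c_i(X_j)\geq\alpha_i\cdot p(X_j)$ gives
\[
\frac{c_i(X_i-e^{*})}{w_i}=\alpha_i\cdot\frac{p(X_i-e^{*})}{w_i}=\alpha_i\cdot\hat{p}_i\leq\alpha_i\cdot\frac{p(X_j)}{w_j}\leq\frac{c_i(X_j)}{w_j},
\]
so $e^{*}$ witnesses that agent $i$ is WEF1 towards agent $j$; as $i,j$ were arbitrary, $\bX$ is WEF1.

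Neither part is really an obstacle; the one point deserving care is the passage from prices to costs in the WEF1 step. pWEF1 lets us discard the \emph{most expensive} item of $X_i$, whereas WEF1 only asks us to discard \emph{some} item of $X_i$, and these match for free: on $X_i$ the cost $c_i$ is a positive scalar multiple of $p$, so $c_i(X_i-e^{*})=\alpha_i\,p(X_i-e^{*})$ for the very item $e^{*}$ chosen by pWEF1, and no separate comparison of the price-order and cost-order of items is needed. The remainder is bookkeeping with $c_i(X_j)\geq\alpha_i p(X_j)$ and $\sum_{e}p(e)=p(M)$.
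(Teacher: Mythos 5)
Your proposal is correct and follows essentially the same route as the paper: the WEF1 part is the identical chain $c_i(X_i-e^*)/w_i=\alpha_i\,\hat{p}_i\leq\alpha_i\,p(X_j)/w_j\leq c_i(X_j)/w_j$, and your PO argument is the same first-welfare-theorem computation the paper phrases as ``the equilibrium allocation minimizes $\sum_i c_i(X_i)/\alpha_i$,'' just carried out by summing prices directly. No gaps.
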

	\begin{proof}
		We first show that the allocation $\bX$ is Pareto optimal.
		If the allocation $\bX$ with price $\bp$ is an equilibrium, then for any agent $i\in N$, any item $e\in X_i$ and any $j\neq i$ we have 
		\begin{equation*}
			\frac{c_i(e)}{\alpha_i} = \frac{c_i(e)}{\alpha_{i,e}} = p(e) = \frac{c_j(e)}{\alpha_{j,e}} \le \frac{c_j(e)}{\alpha_{j}}.
		\end{equation*}
		
		Thus the allocation minimizes the objective $\sum_{i\in N} \frac{c_i(X_i)}{\alpha_i}$. 
		Any Pareto improvement would strictly decrease this objective, which leads to a contradiction.
		So the allocation $\bX$ is PO.
		
		Next, we show that the allocation $\bX$ is WEF1.
		Since the equilibrium $(\bX, \bp)$ is pWEF1, for any agents $i,j\in N$, there exists an item $e\in X_i$ such that $\frac{p(X_i-e)}{w_i} \leq \frac{p(X_j)}{w_j}$. 
		Note that both agents $i$ and $j$ only receive items holding the MPB ratio since $(\bX, \bp)$ is an equilibrium.
		Then we have
		\begin{equation*}
			\frac{c_i(X_i-e)}{w_i} = \alpha_i \cdot \frac{p(X_i-e)}{w_i} \leq \alpha_i \cdot \frac{p(X_j)}{w_j} \leq \frac{c_i(X_j)}{w_j},
		\end{equation*}
		where last equality holds since the pain-per-buck ratio of agent $i$ on any item is at least $\alpha_i$.
	\end{proof}
	
	\begin{definition}[Big and Least Spenders] \label{def:big-least-spenders}
		Given an equilibrium $(\bX, \bp)$, an agent $b\in N$ is called a big spender if $b = \arg\max_{i\in N} \left\{\hat{p}_i\right\}$; an agent $l$ is called a least spender if $l = \arg\min_{i\in N} \left\{ \frac{p(X_i)}{w_i} \right\}$\footnote{Throughout the whole paper, when selecting a big spender we break tie by picking the agent with smallest index. The same principle applies to the selection of least spender. Therefore in the rest of the paper we call $b$ (resp. $l$) ``the'' big (resp. least) spender when this tie breaking rule is applied.}.
	\end{definition}
	
	We show that if the big spender does not strongly envy the least spender, then the allocation is pWEF1.
	
	\begin{lemma}{\label{lemma:pwef1}}
		If an equilibrium $(\bX, \bp)$ holds that the big spender $b$ does not strongly envy the least spender $l$, then the equilibrium is pWEF1.
	\end{lemma}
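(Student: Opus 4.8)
The plan is to reduce pWEF1 to a single chain of inequalities linking an arbitrary ordered pair of agents through the big spender $b$ and the least spender $l$. Recall that, by the remark preceding Definition~\ref{def:big-least-spenders}, the equilibrium $(\bX,\bp)$ is pWEF1 precisely when no agent strongly envies another, i.e., when $\hat p_i \le \frac{p(X_j)}{w_j}$ holds for every ordered pair $i,j\in N$. So it suffices to establish this inequality for arbitrary $i$ and $j$.

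First I would invoke the two extremal definitions. Since $b=\arg\max_{i\in N}\{\hat p_i\}$ is the big spender, we have $\hat p_i \le \hat p_b$ for every $i\in N$. Since $l=\arg\min_{i\in N}\{p(X_i)/w_i\}$ is the least spender, we have $\frac{p(X_l)}{w_l}\le \frac{p(X_j)}{w_j}$ for every $j\in N$. Combining these with the hypothesis of the lemma, namely that $b$ does not strongly envy $l$ (which by definition means $\hat p_b \le \frac{p(X_l)}{w_l}$), we obtain for all $i,j\in N$
\begin{equation*}
	\hat p_i \ \le\ \hat p_b \ \le\ \frac{p(X_l)}{w_l} \ \le\ \frac{p(X_j)}{w_j}.
\end{equation*}
This says exactly that no agent strongly envies any other agent, hence $(\bX,\bp)$ is pWEF1. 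Unwinding the definition of $\hat p_i = \min_{e\in X_i}\{p(X_i-e)/w_i\}$ gives, for each $i,j$, an item $e\in X_i$ witnessing $\frac{p(X_i-e)}{w_i}\le \frac{p(X_j)}{w_j}$, as required.

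There is essentially no hard step here: the result is an immediate consequence of the two maximization/minimization definitions together with transitivity of $\le$. The only point that needs a word of care is the degenerate case $X_i=\emptyset$, where $\hat p_i$ is a minimum over an empty set; in that case agent $i$ trivially cannot strongly envy anyone (one may set $\hat p_i = 0$, consistent with $p(X_i)=0$), so the chain above still holds and the conclusion is unaffected. Thus the lemma follows.
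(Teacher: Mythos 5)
Your proof is correct and follows exactly the same route as the paper's: chain $\hat p_i \le \hat p_b \le \frac{p(X_l)}{w_l} \le \frac{p(X_j)}{w_j}$ using the extremality of $b$ and $l$ together with the hypothesis. The extra remark about $X_i=\emptyset$ is a harmless bit of additional care not present in the paper.
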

	\begin{proof}
		For any agent $i,j\in N$, we show that agent $i$ does not strongly envy agent $j$.
		Note that $b$ is the big spender and $l$ is the least spender.
		From the definitions we have 
		\begin{equation*}
			\min_{e\in X_i} \left\{ \frac{p(X_i-e)}{w_i} \right\} = \hat{p}_i \leq \hat{p}_b = \min_{e\in X_b} \left\{ \frac{p(X_b-e)}{w_b} \right\} \quad \text{and} \quad
			\frac{p(X_l)}{w_l} \leq \frac{p(X_j)}{w_j}.
		\end{equation*}
		
		Recall that the big spender $b$ does not strongly envy the least spender $l$, i.e.
		\begin{equation*}
			\min_{e\in X_b} \left\{ \frac{p(X_b-e)}{w_b} \right\} \leq \frac{p(X_l)}{w_l}.
		\end{equation*}
		
		Hence we have $ \min_{e\in X_i} \left\{ \frac{p(X_i-e)}{w_i} \right\} \leq \frac{p(X_j)}{w_j}$, and thus $(\bX, \bp)$ is pWEF1.
	\end{proof}
	
	Given the above lemma, to see if $(\bX, \bp)$ is pWEF1, it suffices to consider the envy from the big spender $b$ to the least spender $l$.
	If $b$ strongly envies $l$, then we try to reallocate some item from $X_b$ to $X_l$, and possibly update the price of some items, while ensuring that the resulting allocation and price form a new equilibrium.
	We show that such reallocations are always possible, and by polynomially many reallocations, we can eliminate the envy from the big spender to the least spender. 
	To begin with, we first compute an initial equilibrium $(\bX^0, \bp^0)$, based on which we partition the agents into different groups.

	\subsection{Initial Equilibrium and Agent Groups}
	
	In the following, we give an algorithm (Algorithm~\ref{alg:leximin}) that computes an initial equilibrium $(\bX, \bp)$ and agent groups $\{N_r\}_{r\in [R]}$ with the some useful properties (see Lemma~\ref{lemma:leximin}).
	
	\paragraph{The Initial Price.}
	We set the price vector $\bp$ as $p(e) = \min_{i\in N} \{c_i(e)\}$.
	Therefore we have $p(e) = 1$ for all $e\in M^-$ and $p(e) = k$ for all $e\in M^+$.
	
	\paragraph{The Allocation.}
	We first compute an allocation $\bX$ that minimizes the social cost: for each $e\in M^-$ we allocate it to an arbitrary agent $i$ with $c_i(e) = 1$; items in $M^+$ are allocated arbitrarily.
	Then we construct a directed graph $G_X$ based on $\bX$ as follows.
	For each pair of $i,j\in N$, if there exists an item $e\in X_i$ such that $e\in \MPB_j$ then we create an MPB edge from $j$ to $i$.
	While there exists a path from an agent $j$ to another agent $i$ such that $\hat{p}_i > \frac{p(X_j)}{w_j}$, we implement a sequence of item transfers backward along the path.
	When there is no path of such type, we finish the computation of the initial allocation $\bX$.
	
	\paragraph{The Agent Groups.}
	We select the big spender $b_1$ and recognize the agents that can reach $b_1$ via MPB paths.
	We add these agents to the first agent group $N_1$, together with agent $b_1$.
	After identifying group $N_1$, we repeat the above procedure by picking the big spender $b_2$ among the remaining agents $N\setminus N_1$ and let $N_2$ contain $b_2$ and the agents that can reach $b_2$ via MPB paths.
	Recursively, we partition agents into groups $(N_1, N_2, \cdots, N_R)$.
	We call $N_1$ the \emph{highest} group and $N_R$ the \emph{lowest}.
	By construction, each group $N_r$ has a \emph{representative} agent $b_r$, to which every agent in $N_r \setminus \{ b_r \}$ has an MPB path.
	
	\begin{algorithm}[htbp]
		\caption{Computation of Initial Price, Allocation and Agent Groups} \label{alg:leximin}
		\KwIn{A bi-valued instance $<M, N, \bw, \bc>$}
		initialize $X_i \gets \emptyset$ for all $i\in N$, $P \gets M$ \;
		set $p(e) =1$ for all $e\in M^-$ and $p(e) = k$ for all $e\in M^+$ \;
		
		\tcp{Phase 1: Computation of the Initial Allocation}
		
		compute a social cost minimizing allocation $\bX$ \;
		construct a direct graph $G_X = (N,E)$: add an MPB edge from $j$ to $i$ if $X_i \cap \MPB_j \neq \emptyset$ \;
		\While{there exists a path $i_k \to \cdots \to i_0$ such that $\hat{p}_{i_0} > \frac{p(X_{i_k})}{w_{i_k}}$}{
			\For{$l = 1,2,\cdots,k$}{
				pick an item $e\in X_{i_{l-1}} \cap \MPB_{i_l}$, break tie by picking the item with maximum price \;
				update $X_{i_{l-1}} \gets X_{i_{l-1}} - e$ and $X_{i_l} \gets X_{i_l} + e$ \;
			}
		}
		
		\tcp{Phase 2: Computation of Agent Groups}
		
		initialize $R \gets 0, N' \gets N$ \;
		\While{$N' \neq \emptyset$}{
			let $b \gets \argmax_{i\in N'} \{ \hat{p}_i \}$, break tie by picking the agent with smallest index  \;
			update $R\gets R+1$ and let $N_R \gets \{b\} \cup \{ i\in N' : \text{exists an MPB path from $i$ to $b$ in $G_X$} \}$ \;
			$N' \gets N' \setminus N_R$ \;
		}
		\KwOut{$\bX = \{X_1, \cdots, X_n\}$, $\bp$, $\{N_r\}_{r\in[R]} = \{N_1, \cdots, N_R\}$.}
	\end{algorithm} 
	
	We say that a group $N_r$ is pWEF1 if all agents in $N_r$ are pWEF1 towards each other.
	
	\begin{lemma}\label{lemma:leximin}
		Algorithm~\ref{alg:leximin} returns an allocation $\bX$ with price $\bp$ and agent groups $\{N_r\}_{r\in[R]}$ with the following properties
		\begin{enumerate}
			\item $(\bX, \bp)$ is an equilibrium, and $\alpha_i = 1$ for all $i\in N$.
			\item For all $i,j\in N$, if $j$ is from a group lower than $i$ (i.e., there exist $r < r'$ such that $i\in N_r, j\in N_{r'}$) then for all $e\in X_i$ we have $c_j(e) = k$.
			\item All consistently large items are allocated to the lowest group: $M^+ \subseteq \bigcup_{i\in N_R} X_i$.
			\item For all $r\in [R]$, the agent group $N_r$ is pWEF1.
		\end{enumerate}
	\end{lemma}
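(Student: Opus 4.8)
The plan is to check the four properties in order, deriving the three structural ones from the two-phase construction together with the stopping condition of the \texttt{while} loop in Phase~1, and to isolate the polynomial-time termination of that loop as the only genuinely technical point. For Property~1, note first that $p(e)=\min_{j\in N}c_j(e)\le c_i(e)$ for every $i$ and $e$, so $\alpha_{i,e}\ge 1$ always and hence $\alpha_i\ge 1$; by the assumption (made in this section) that every agent values at least one item at $1$, agent $i$ owns an item $e$ with $c_i(e)=1$, which cannot be consistently large, so $e\in M^-$, $p(e)=1$, $\alpha_{i,e}=1$, and therefore $\alpha_i=1$. For the equilibrium claim I would observe that the initial social-cost-minimizing allocation already satisfies $X_i\subseteq\MPB_i$---an item of $M^-$ is given to an agent valuing it at $1=p(e)$, and an item of $M^+$ has $\alpha_{i,e}=k/k=1=\alpha_i$ for every $i$---and that, since prices never change inside Algorithm~\ref{alg:leximin}, the sets $\MPB_i$ are fixed and every transfer moves an item only into a bundle for which it is an MPB item, while removals cannot violate MPB-containment; hence ``$\bX$ is an equilibrium with all $\alpha_i=1$'' is an invariant of Phase~1 and holds for the output.

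For Properties~2 and~3 the engine is a single observation about the grouping: if, at the moment $N_r$ is formed, agent $j$ still belongs to the residual set and there is an MPB edge $j\to i$ in $G_X$ with $i\in N_r$, then $j$ has an MPB path $j\to i\to\cdots\to b_r$ and is placed in $N_r$ as well. For Property~2, suppose $i\in N_r$, $j\in N_{r'}$ with $r<r'$, and some $e\in X_i$ has $c_j(e)=1$; then $\alpha_{j,e}=1/p(e)\ge\alpha_j=1$ forces $p(e)=1$, so $e\in\MPB_j$ and the edge $j\to i$ exists, while $j$, being in a strictly lower group, is still in the residual set when $N_r$ is formed---a contradiction. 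For Property~3, if $e\in M^+$ is held by $i\in N_r$ then $\alpha_{j,e}=k/k=1=\alpha_j$ for \emph{every} $j\in N$, so every agent still present when $N_r$ is formed reaches $b_r$ through $i$ and is absorbed into $N_r$; thus $N_r$ exhausts the residual set, which forces $r=R$ and gives $M^+\subseteq\bigcup_{i\in N_R}X_i$.

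For Property~4, fix $N_r$ with representative $b_r$. Since $b_r$ is chosen to maximize $\hat{p}_i$ over a residual set containing $N_r$, it is the big spender of $N_r$; let $l$ be the least spender of $N_r$. Every agent of $N_r$, and in particular $l$, has an MPB path to $b_r$. If $b_r$ strongly envied $l$ (so $l\ne b_r$ and $\hat{p}_{b_r}>p(X_l)/w_l$), that path would satisfy the \texttt{while}-loop condition, contradicting the fact that the loop has terminated; hence $b_r$ does not strongly envy $l$. Applying Lemma~\ref{lemma:pwef1} with the ground set $N_r$ in place of $N$ (its proof uses only that $b_r$ and $l$ are the big and least spenders of the ground set in question) then shows that $N_r$ is pWEF1.

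The one part that needs real work is proving that the Phase-1 \texttt{while} loop terminates, and in a number of iterations polynomial in $m$ and $n$. I would do this with a potential-function argument in the style of Garg et al.~\cite{conf/aaai/GargMQ22} and Ebadian et al.~\cite{conf/atal/EbadianP022}: a round of transfers along a bad path shifts weighted spending from an agent with large $\hat{p}$ to one with strictly smaller spending, which should strictly decrease the leximin-ordered vector of weighted spendings; since in a bi-valued instance all prices, hence all spendings, lie in a finite set of rationals with bounded denominators, this bounds the number of rounds polynomially. The delicate points are that a path-transfer also touches the intermediate bundles and that the item removed from the top of the path is the maximum-price \emph{MPB} item rather than the global maximum-price item, so making the decrease of the potential (and the per-round work) rigorous is exactly where the argument must be carried out carefully.
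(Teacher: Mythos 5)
Your proof of the four stated properties is correct and follows essentially the same route as the paper's: property~1 from the choice of prices together with the fact that the initial allocation is MPB-consistent and every transfer follows an MPB edge; properties~2 and~3 from the single observation that an MPB edge into a member of $N_r$ from an agent still in the residual set absorbs that agent into $N_r$; and property~4 from the fact that an unresolved MPB path from a member of $N_r$ to $b_r$ would contradict termination of the Phase-1 loop (your detour through Lemma~\ref{lemma:pwef1} applied to the ground set $N_r$ is a harmless repackaging of the paper's direct chain $\hat{p}_i \leq \hat{p}_{b_r} \leq p(X_j)/w_j$). The one caveat is the termination of the Phase-1 loop, which the paper proves in a \emph{separate} lemma by always resolving the path maximizing $\hat{p}_{i_0}$, showing the end-agents' values $\hat{p}_{i_0}$ are non-increasing across rounds and strictly drop by at least $1/w_i$ between two appearances of the same end-agent, so each agent is an end-agent at most $km$ times; your proposed leximin potential is not obviously monotone (the start-agent's spending, after receiving an item of price up to $k$, need not stay below the end-agent's old spending), so that part of your sketch would need to be replaced or completed along the paper's lines.
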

	\begin{proof}
		For property 1: by the way we set the initial price, we have $\alpha_i = \min_{e\in M} \left\{ \frac{c_i(e)}{p(e)} \right\} = 1$ for all $i\in N$.
		Since we start from the social cost minimizing allocation (in which $X_i \subseteq \MPB_i$ for all $i\in N$) and reallocate an item $e$ to agent $i$ only if $e\in \MPB_i$, we can ensure that $X_i \subseteq \MPB_i$ for all $i\in N$ in the final allocation.
		Hence $(\bX, \bp)$ is an equilibrium.
		
		For property 2: suppose that there exists $e\in X_i$ such that $c_j(e) =1$.
		Then we have $e\in \MPB_j$ and there is an MPB edge from $j$ to $i$.		
		Let $b_r$ be the representative agent of group $N_r$.
		By the construction of group $N_r$, $i$ can reach $b_r$ via an MPB path.
		Hence $j$ can also reach $b_r$ via an MPB path,	which is a contradiction because it implies that $j$ should be included into $N_r$.
		
		For property 3: following a similar argument, suppose that there exists an item $e \in M^+$ that is allocated to an agent $i\in N_r$ with $r < R$.
		Then every agent $j\in N_R$ has an MPB edge to $i$ since $e\in \MPB_j$.
		Thus $j$ can reach the representative agent $b_r$ of $N_r$, which leads to a contradiction since $j\notin N_r$.
		
		For property 4: recall that every agent $i\in N_r$ can reach the representative $b_r$ via an MPB path, and $\hat{p}_{b_r} \geq \hat{p}_i$.
		Moreover, we have $\hat{p}_{b_r} \leq \frac{p(X_i)}{w_i}$ because otherwise the path from $i$ to $b_r$ should have been resolved in the first phase of Algorithm~\ref{alg:leximin} when $\bX$ is computed.
		
		Hence for all $i, j\in N_r$, we have $\hat{p}_i \leq \hat{p}_{b_r} \leq \frac{p(X_j)}{w_j}$, which implies that $i$ is pWEF1 toward $j$.        
	\end{proof}
	
	\begin{lemma}
		Algorithm~\ref{alg:leximin} runs in polynomial time.
	\end{lemma}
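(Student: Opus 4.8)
The plan is to show that each of the three phases of Algorithm~\ref{alg:leximin} terminates after polynomially many steps, and that each step is itself polynomial-time. The price vector is computed directly in $O(m)$ time, so the work is in (i) the initial social-cost-minimizing allocation, (ii) the \texttt{while} loop of Phase~1 that resolves MPB paths with strong envy, and (iii) the \texttt{while} loop of Phase~2 that peels off agent groups. Phase~2 is easy: each iteration removes at least one agent (the representative $b$) from $N'$, the reachability computation in $G_X$ is a graph search in $O(n^2)$ time, so Phase~2 costs $O(n^3)$ overall. Computing a social-cost-minimizing allocation is also easy since the instance is bi-valued: assign each $e \in M^-$ to any agent who values it at $1$, and assign $M^+$ arbitrarily; this takes $O(mn)$ time.

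\textbf{The main obstacle} is bounding the number of iterations of the Phase~1 \texttt{while} loop. Here I would exhibit a potential function that strictly decreases with each round of transfers along a path $i_k \to \cdots \to i_0$. The natural candidate is the social cost $\sum_{i\in N} c_i(X_i)$, or equivalently (since all transfers are along MPB edges and $\alpha_i = 1$ throughout) $\sum_{i\in N} p(X_i)$, which is invariant under MPB transfers --- so social cost alone will not work. Instead I would use a lexicographic / leximin-type potential on the sorted vector of weighted spendings $\big(p(X_i)/w_i\big)_{i\in N}$: transferring an item backward along a path from $j$ (with small spending $p(X_j)/w_j$) to $i$ (with $\hat p_i > p(X_j)/w_j$, hence large spending) moves weight from a high-spending agent to a low-spending one, which strictly decreases the sorted spending vector in the leximin order. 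Since the prices never change in Phase~1 and every $p(e) \in \{1,k\}$, the set of achievable spending vectors is finite (indeed the multiset of prices is fixed), so a strictly leximin-decreasing sequence must be finite; a more careful accounting --- e.g., each transfer strictly decreases $\hat p_b$ for the current big spender $b$, or decreases the number of agents attaining the maximum spending --- gives a polynomial bound, say $O(\mathrm{poly}(n,m))$ iterations, each costing $O(n^2 + m)$ to find a violating path by graph search and perform the transfers.

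\textbf{Putting it together}, I would (1) note the price vector and initial allocation are computed in $O(mn)$ time; (2) argue that each iteration of the Phase~1 loop can be implemented in $O(nm)$ time --- rebuild $G_X$, then search for a path $i_k \to \cdots \to i_0$ with $\hat p_{i_0} > p(X_{i_k})/w_{i_k}$, e.g.\ by trying each potential source/sink pair --- and (3) invoke the leximin potential argument to bound the iteration count polynomially. I expect the cleanest presentation is to state the potential decreases strictly at every iteration and that it takes values in a set of polynomial size (or is bounded below and decreases by a polynomially-bounded quantum), so the total running time of Algorithm~\ref{alg:leximin} is polynomial, e.g.\ $O(\mathrm{poly}(n,m))$. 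The one subtlety to be careful about is that a single iteration of the outer \texttt{while} performs a whole \emph{chain} of $k \le n$ transfers; but $k \le n$, so this only adds an $O(n)$ factor and does not affect the polynomial bound.
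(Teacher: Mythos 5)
Your decomposition of the algorithm and the per-iteration cost estimates match the paper, and the easy phases (initial price, social-cost-minimizing allocation, group computation) are handled correctly. The gap is in the one place that actually matters: bounding the number of iterations of the Phase~1 \texttt{while} loop. Your leximin potential on the sorted spending vector, even granting that it strictly decreases, only shows that the loop \emph{terminates}, because the set of achievable spending vectors, while finite, can be exponentially large in $m$ and $n$; "strictly decreasing over a finite poset" is not a polynomial bound. You acknowledge this and defer to "a more careful accounting," but that accounting is the entire content of the proof, and the two candidates you name do not work as stated: the end-agent of the resolved path need not be the big spender, so $\hat p_b$ need not change in a given round, and the number of agents attaining the maximum spending need not decrease either.

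The paper's argument hinges on two design choices you do not use. First, among all violating paths it resolves the one maximizing $\hat p_{i_0}$ of the end-agent; second, when an intermediate agent $i_l$ passes an MPB item to $i_{l+1}$ it passes the one of \emph{maximum price}. The second rule is what guarantees that no intermediate agent's spending increases (otherwise $i_l$ could receive a price-$k$ item and hand over a price-$1$ item, and your claim that the chain "moves weight from a high spender to a low spender" fails for the agents in the middle). With both rules, the paper shows the sequence $\hat p^t_{i_0^t}$ of end-agent values is non-increasing across rounds and strictly decreases (by at least $1/w_i$, using $p(e)\in\{1,k\}$) between two appearances of the same agent as end-agent, so each agent is an end-agent at most $km$ times and the loop runs at most $knm$ rounds. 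To repair your proof you would need either to supply this charging argument or to exhibit a concrete polynomially-bounded quantum by which your leximin potential drops; as written, neither is present.
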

	\begin{proof}
        Observe that (1) the computation of the initial price $\bp$ takes $O(m)$ time; (2) computing the social cost minimizing allocation takes $O(nm)$ time; (3) in Phase 2, given the initial allocation, computing the agent groups takes $O(nm)$ time because each group can be computed by identifying a connected component of $G_X$. 
		Therefore, to prove that the algorithm runs in polynomial time, it suffices to argue that the while loops in lines 5 - 8 of Algorithm~\ref{alg:leximin} finish in polynomial time.
		Observe that computing $G_X$ takes $O(nm)$ time and resolving an MPB path takes $O(m)$ time.
		In the following, we show that by carefully choosing the MPB paths to resolve, the while loops break after $O(knm)$ rounds.
		
		Recall that in each while loop, we identify a path $i_k \to \cdots \to i_0$ with $\hat{p}_{i_0} > {p(X_{i_k})}/{w_{i_k}}$, and resolve the path by transferring an MPB item from agent $i_{l-1}$ to $i_{l}$, for all $l=1,2,\ldots,k$.
		We call $i_k$ the start-agent of the path and $i_0$ the end-agent.
        When there exist multiple such paths, we choose the one maximizing $\hat{p}_{i_0}$.
        Observe that such a path can be identified in $O(nm)$ time.
        We refer to a while loop as a round, and index the rounds by $t=1,2,\ldots$.
        We use $X^t_i$ to denote the bundle of agent $i$ at the beginning of round $t$.
        Likewise we define $\hat{p}^t_i$ and the path $i_k^t \to \cdots \to i_0^t$ at the beginning of round $t$.
        
        \begin{claim}\label{claim:decreasing-end-agent}
            We have $\hat{p}^0_{i_0^0} \geq \hat{p}^1_{i_0^1} \geq \cdots \geq  \hat{p}^t_{i_0^t}$.
        \end{claim}
        \begin{proof}
        Consider a round $t$ in which path $i_k^t \to \cdots \to i_0^t$ is chosen.
        We show that for all agent $i_l^t$, where $l \neq k$, her spending would not increase after the item transfers in this round, i.e. $p(X^t_{i_l^t}) \geq p(X^{t+1}_{i_l^t})$.
        Assume otherwise, then it must be that $i_l^t$ receives some item $e$ with $p(e) = k$ from $i_{l-1}^t$ and $i_l^t$ transfers an item $e'$ with $p(e') = 1$ to agent $i_{l+1}^t$.
        However, this is impossible because the consistently large item $e$ is an MPB item to all agents, which should be transferred to agent $i_{l+1}^t$ as we break tie by choosing the item with maximum price.
        Hence in each round, only the spending of the start-agent would increase.
        This implies $\hat{p}^{t+1}_{i_0^{t+1}} \leq \hat{p}^t_{i_0^t}$ because at the beginning of round $t+1$, compared to round $t$, the only agent whose spending is increased is $i_k^t$, and its current spending is $\hat{p}^{t+1}_{i_k^t} \leq {p(X^t_{i_k^t})}/{w_{i_k^t}} < \hat{p}^t_{i_0^t}$.
        \end{proof}
            
        We further show that if an agent is identified as an end agent twice, then her spending (up to the removal of one item) is strictly smaller at its second appearance.
        
        \begin{claim}
            If agent $i$ is the end-agent in both rounds $t_1$ and $t_2$, where $t_1 < t_2$, then we have $\hat{p}^{t_1}_i > \hat{p}^{t_2}_i$.
        \end{claim}
        \begin{proof}
            Assume otherwise, i.e., $\hat{p}^{t_1}_i = \hat{p}^{t_2}_i$, then $i$ must receive some item as a start-agent in some round between $t_1$ and $t_2$.
            Let $t < t_2$ be the last round in which $i$ is a start-agent.
            Let $j$ be the corresponding end-agent.
            We have $\hat{p}^{t}_j > \frac{p(X^{t}_i)}{w_i}$.
            Since $t$ is the last round before $t_2$ in which $i$'s spending increases, we have $\hat{p}^{t_2}_i \leq \frac{p(X^t_i)}{w_i} < \hat{p}^{t}_j$, which is a contradiction because from Claim~\ref{claim:decreasing-end-agent}, we have $\hat{p}^{t_2}_i = \hat{p}^{t_1}_i \geq \hat{p}^{t}_j$.
        \end{proof}
        
        Note that for each agent $i$, the value of $\hat{p}_i$ is at most $\frac{km-1}{w_i}$.
        As argued above, each agent can be identified as an end-agent at most $km$ times because each of its appearances decreases $\hat{p}_i$ by at least $\frac{1}{w_i}$.
        Hence the total number of rounds is at most $knm$.
        In summary, Algorithm~\ref{alg:leximin} finishes in $O(k n^2 m^2)$ time.
	\end{proof}
	
	\subsection{The Allocation Algorithm and the Invariants}\label{sec:wef1+po-bi}
	
	In this section we present an algorithm that starts from the initial equilibrium (denoted by $(\bX^0, \bp^0)$), and constructs a pWEF1 equilibrium $(\bX, \bp)$ by a sequence of item reallocations and price raises.
	The algorithm proceeds in rounds.
	In each round we identify the big spender $b$ and the least spender $l$.
	\begin{itemize}
		\item If $b$ is pWEF1 towards $l$ then the algorithm terminates and outputs the allocation.
		By Lemma~\ref{lemma:pwef1}, we can guarantee that the output equilibrium is pWEF1.
		\item Otherwise we reallocate an item from $b$ to $l$ following an MPB path, during which we may raise the price of all items owned by agents from the group containing $b$ by a factor of $k$.
	\end{itemize}
	
	Before we present the full details of the algorithm, we remark that throughout the whole process, the following invariants are always maintained. 
	For convenience of notation, we use $N_{\leq i}$ to denote $\bigcup_{j\leq i} N_j$. 
	Likewise, we define $N_{<i}, N_{\geq i}$ and $N_{>i}$.
	
	\begin{invariant}[Equilibrium Invariant] \label{invariant:equilibrium}
		At any point of time we have $(\bX, \bp)$ being an equilibrium.
	\end{invariant}
	
	\begin{invariant}[pWEF1 Invariant] \label{invariant:pwef1}
		For all $r\in [R]$, agent group $N_r$ is pWEF1 in equilibrium $(\bX, \bp)$.
	\end{invariant}
	
	\begin{invariant}[Raised Group Invariant] \label{invariant:raised-group}
		There exists $r^* \in [R]$ such that all groups $N_1,\ldots,N_{r^* - 1}$ are raised exactly once; all groups $N_{r^*},\ldots,N_R$ are not raised.
		Moreover
		\begin{itemize}
			\item for all $i\in N_{< r^*}$ we have $\alpha_i = 1/k$ and $X_i \subseteq X^0_i$, i.e., agent $i$ did not receive any new item;
			\item for all $i\in N_{\geq r^*}$ we have $\alpha_i = 1$ and $X^0_i \subseteq X_i$, i.e., agent $i$ did not lose any item.
		\end{itemize}
	\end{invariant}
	
	Note that both invariants hold true at the beginning of the algorithm when $(\bX, \bp) = (\bX^0, \bp^0)$ and all groups are not raised (e.g., $r^* = 1$), by Lemma~\ref{lemma:leximin}.
	Moreover, Invariant~\ref{invariant:raised-group} implies that the last group $N_R$ is never raised because $R\geq r^*$ for all $r^* \in [R]$, which further implies that $M^+\subseteq \bigcup_{i\in N_R} X_i$ because unraised agents did not lose any item.
	
	\begin{corollary} \label{corollary:property-last-group}
		Group $N_R$ is not raised, and $M^+\subseteq \bigcup_{i\in N_R} X_i$.
	\end{corollary}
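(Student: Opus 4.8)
The plan is to read the statement off directly from Invariant~\ref{invariant:raised-group}, with a small assist from Lemma~\ref{lemma:leximin}. First I would invoke the invariant: at any point in time there exists an index $r^*\in[R]$ such that the groups $N_1,\ldots,N_{r^*-1}$ have each been raised exactly once while the groups $N_{r^*},\ldots,N_R$ have not been raised. Since $r^*\le R$ by the very fact that $r^*\in[R]$, the last group $N_R$ always lies in the second block, and hence $N_R$ is never raised. This already gives the first half of the corollary.

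For the second half, I would use the structural consequence of Invariant~\ref{invariant:raised-group} for unraised groups. Because $N_R$ is not raised, the corresponding bullet of the invariant gives $X^0_i\subseteq X_i$ for every $i\in N_R$; in words, no agent of $N_R$ ever loses an item relative to the initial allocation $\bX^0$. On the other hand, property 3 of Lemma~\ref{lemma:leximin} states that in the initial allocation every consistently large item is held by the lowest group, i.e. $M^+\subseteq\bigcup_{i\in N_R}X^0_i$. Chaining the two facts yields $M^+\subseteq\bigcup_{i\in N_R}X^0_i\subseteq\bigcup_{i\in N_R}X_i$, which is exactly the claimed containment.

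I do not expect any genuine obstacle: the corollary is a bookkeeping consequence of the invariants. The only two points deserving a moment's care are that Invariant~\ref{invariant:raised-group} is asserted to hold at \emph{every} point of the algorithm (so the conclusion applies to every intermediate configuration, not just the final output), and that the index $r^*$ is guaranteed to lie in $[R]$ rather than possibly exceeding it — both are built into the statement of the invariant. If anything merits emphasis in the write-up, it is the monotonicity direction underlying the invariant, namely that raised agents can only lose items while unraised agents can only gain them, so a consistently large item, once placed in $N_R$ by $\bX^0$, can never be transferred out of $N_R$.
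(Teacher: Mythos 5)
Your proof is correct and follows exactly the paper's argument: $r^*\in[R]$ forces $N_R$ into the unraised block of Invariant~\ref{invariant:raised-group}, and then property 3 of Lemma~\ref{lemma:leximin} combined with $X^0_i\subseteq X_i$ for unraised agents gives the containment. No differences worth noting.
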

	
	Next, we present the full details of the algorithm (see Algorithm~\ref{alg:ef1andpo} for the pseudo-code).
	
	\paragraph{The Full Algorithm.}
	We first call Algorithm~\ref{alg:leximin}, which returns the initial equilibrium $(\bX, \bp)$, and modify the equilibrium until it becomes pWEF1.
	For analysis purpose, we use $(\bX^0, \bp^0)$ to denote this initial equilibrium.
	In each round (of the while loop), we identify the big spender $b$ and the least spender $l$.
	If $b$ is pWEF1 toward $l$ then we output $(\bX, \bp)$ and terminate (Lemma~\ref{lemma:pwef1} ensures that the equilibrium is pWEF1).
	Otherwise we check whether $l$ is raised.
	If $l$ is not raised, then we raise the group containing $b$ if this group has not been raised, and then allocate an item from $b$ to $l$ (we can show that all items in $X_b$ are MPB items to $l$).
	If $l$ is raised, then we can show (see Lemma~\ref{lemma:existence-b-i-l}) that $b$ must be raised already, and there exists an unraised agent $i$ that received item from $l$ in some past round.
	Then we return the item we have reallocated from $l$ to $i$ back to $X_l$, and reallocate an item from $b$ to $l$.
	
	\begin{algorithm}[H]
		\caption{Find a WEF1 and PO allocation} \label{alg:ef1andpo}
		\SetKw{Break}{Break}
		\KwIn{A bi-valued instance $<M, N, \bw, \bc>$}
		let $(\bX, \bp, \{N_r\}_{r\in [R]})$ be returned by Algorithm~\ref{alg:leximin} ; \ \tcp{Denote the equilibrium by $(\bX^0,\bp^0)$}
		initialize the set of unraised agents $U\gets N$ \;
		\While{ \textbf{True}}{
			let $b \gets \arg\max_{i\in N} \{ \hat{p}_i \}$ be the big spender \;
			let $l \gets \arg\min_{i\in N} \left\{ \frac{p(X_i)}{w_i} \right\}$ be the least spender \;
			\If{$\hat{p}_b \leq \frac{p(X_l)}{w_l}$}{
				\textbf{output} $(\bX, \bp)$ and \textbf{terminate}.
			}
			suppose that $b \in N_r$ and $l\in N_{r'}$ ; \qquad\qquad \tcp{we have $r \neq r'$ due to Invariant~\ref{invariant:pwef1}}
			\uIf {$l \in U$}{
				\If{$b \in U$}{
					raise the price of all chores in $\bigcup_{i\in N_r} X_i$ by a factor of $k$ \;
					$U \gets U \setminus N_r$ \;
				}
				pick an arbitrary $e\in X_b$ ; \qquad\qquad\qquad \tcp{we have $X_b\subseteq \MPB_l$, see Lemma~\ref{lemma:X-b-in-MPB-l}}
				update $X_b \gets X_b - e$, $X_l \gets X_l + e$ \;}
			\Else {
				we have $b\in N\setminus U$ and $\exists i \in U$ with $X_i \cap X^0_l \neq \emptyset$ ; \qquad \qquad \qquad \tcp{see Lemma~\ref{lemma:existence-b-i-l}}
				pick any $e_1\in X_b \cap \MPB_i$ and $e_2\in X_i\cap X^0_l$ \;
				update $X_b \gets X_b - e_1$, $X_l \gets X_l + e_2$ and $X_i \gets X_i + e_1 - e_2$ \;
			}
		}
	\end{algorithm} 
	
	\subsection{Properties of the Algorithm}
	
	\paragraph{Notations.}
	In the following, we refer to a while loop as a round, and index the rounds by $t = 0,1,2,\ldots$.
	We use $\bX^t$ and $\bp^t$ to denote the allocation and the price at the beginning of round $t$\footnote{Note that the notations are consistent with the notation for the initial equilibrium $(\bX^0, \bp^0)$ returned by Algorithm~\ref{alg:leximin}.}.
    We use $b^t$ and $l^t$ to denote the big spender and the least spender, respectively, we identify at the beginning of round $t$.
	Note that $\bX^{t+1}$ and $\bp^{t+1}$ are the allocation and the price at the end of round $t$.
	
	\smallskip
	
	In addition to the invariants we have introduced, we introduce a few more invariants that involve multiple rounds.
	It is trivial to check that these invariants hold true at the beginning of the first round (when $t=0$).
	
	\begin{invariant}[Least Spending Invariant] \label{invariant:least-spending}
		The spending of the least spender across different rounds is non-decreasing: $\frac{p^0(X^0_{l^0})}{w_{l^0}} \leq \frac{p^1(X^1_{l^1})}{w_{l^1}} \leq \cdots \leq \frac{p^t(X^t_{l^t})}{w_{l^t}}$.
	\end{invariant}
	
	\begin{invariant} [Big Spender Invariant] \label{invariant:big-spender}
		The big spender $b^t$ identified at the beginning of round $t$ has never been identified as a least spender in rounds $1,2,\ldots,t$.
		Moreover, if $b$ has not been raised, i.e., $b \in N_r$ for some $r \geq r^*$, then this property holds true for all $i\in N_r$.
	\end{invariant}
	
	\begin{invariant}[Price Invariant] \label{invariant:price}
		In each round, the spending of agent $i$ does not change, unless $i$ is raised during this round or $i$ is the big or least spender in this round.
		Moreover, for all $e\in M$, $p(e)\in \{1,k\}$.
	\end{invariant}
	
	\smallskip
	
	
	In the following, we fix some round $t$, and assume that at the beginning of round $t$, the Equilibrium Invariant (Invariants~\ref{invariant:equilibrium}), the pWEF1 Invariant (Invariant~\ref{invariant:pwef1}), the Raised Group Invariant (Invariant~\ref{invariant:raised-group}), the Least Spending Invariant (Invariant~\ref{invariant:least-spending}), the Big Spender Invariant (Invariant~\ref{invariant:big-spender}) and the Price Invariant (Invariant~\ref{invariant:price}) are maintained.
	Recall that all invariants hold true at the beginning of round $0$.
	Moreover, we assume the algorithm is well defined thus far.
	We assume that $b^t$ strongly envies $l^t$ (otherwise the algorithm terminates).
	We show several properties of the algorithm, which will be used to show that the invariants are maintained at the end of round $t$.
	When the context is clear, we drop the superscript $t$ for ease of notation.
	We first show a few properties regarding the big spender $b$ and least spender $l$ in round $t$, which show that the algorithm is well defined.
	Let $r^*$ be defined as in Invariant~\ref{invariant:raised-group}, i.e., all agents in $N_{< r^*}$ are raised exactly once and all agents in $N_{\geq r^*}$ are not raised.
	Using Invariant~\ref{invariant:big-spender}, we prove that if the big spender $b$ has not been raised, then $b$ must be in the unraised group with smallest index.
	This would be helpful in showing that Invariant~\ref{invariant:raised-group} is maintained at the end of round $t$: when we raise the group containing $b$, we need to ensure that the raised group is $N_{r^*}$.
	
	\begin{lemma} \label{lemma:b-in-N-r*}
		If $b \in U$, then we have $b\in N_{r^*}$.
	\end{lemma}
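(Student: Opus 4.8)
The plan is to combine the Big Spender Invariant with an argument by contradiction on the group index of $b$, all inside the standing induction (we may assume every invariant holds at the start of round $t$ and that Lemma~\ref{lemma:b-in-N-r*} has already been established for every earlier round). First I would invoke the ``moreover'' clause of Invariant~\ref{invariant:big-spender}: since $b\in U$, the big spender $b$ lies in some group $N_r$ with $r\ge r^*$, and moreover no agent of $N_r$ has ever been identified as a least spender in rounds $1,\ldots,t$. So the whole burden is to exclude the possibility $r>r^*$.

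Assume, for contradiction, that $r>r^*$. Then $N_{r^*}$ is an unraised group of strictly smaller index, and it is nonempty; let $b_{r^*}$ be its representative. The goal is to exhibit an agent $j\in N_{r^*}$ with $\hat p_j\ge\hat p_b$, which contradicts the fact that the big spender $b$ — an agent of maximum $\hat p$ — lies in a group \emph{below} $N_{r^*}$. To build such a $j$ I would push everything back to the initial equilibrium: by Invariant~\ref{invariant:raised-group}, every agent $i\in N_{\ge r^*}$ is unraised with $\alpha_i=1$ and $X^0_i\subseteq X_i$; since such an agent always holds only MPB chores at ratio $1$, each chore $e\in X^0_i$ still has $p(e)=c_i(e)=p^0(e)$ throughout, so $p(X_i)\ge p^0(X^0_i)$. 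Next I would use that the representatives produced by Algorithm~\ref{alg:leximin} have non-increasing initial $\hat p^0$-values (they are chosen greedily as successive big spenders), that the raised groups $N_1,\ldots,N_{r^*-1}$ are exactly the prefix consumed by earlier raises (each raise targets the current big spender's group, which by the inductive hypothesis of this lemma was $N_{r^*}$ at that time, so raises happen in increasing group order), and that by Invariant~\ref{invariant:least-spending} the least spending only grows — together with property~2 of Lemma~\ref{lemma:leximin} (an unraised agent never receives a chore that was large to it, so chores cannot flow from $N_{r^*}$ down into $N_r$) and the within-group pWEF1 Invariant. Combining these, the unraised agent of largest $\hat p$ must sit in $N_{r^*}$ rather than in a deeper group; hence $\hat p_b\le\max_{j\in N_{r^*}}\hat p_j$, which is exactly the desired contradiction unless $b\in N_{r^*}$.

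The main obstacle — and the step that needs real care — is the cross-round bookkeeping of spendings. A raise multiplies the prices, hence the spendings, of an entire group by $k$, and item transfers shuffle bundles, so ``an unraised agent's spending only increases'' is the only clean monotonicity available and it is not literally true for every agent's $\hat p$. The crux is therefore to prove rigorously that, once $N_{r^*-1}$ has been raised and then drained of items by subsequent transfers, the maximum of $\hat p$ over the \emph{unraised} agents is attained inside $N_{r^*}$; this is precisely where the initial group ordering from Algorithm~\ref{alg:leximin}, the impossibility of downward chore flow into unraised low groups (property~2 of Lemma~\ref{lemma:leximin}), and the pWEF1 and Least Spending Invariants all have to be orchestrated together.
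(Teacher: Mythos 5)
Your overall strategy is the right one -- assume $b\in N_r$ with $r>r^*$, compare $b$ against the representative of $N_{r^*}$, and use the greedy ordering of the initial $\hat p^0$-values from Algorithm~\ref{alg:leximin} to derive a contradiction -- but the proposal has a genuine gap exactly where you flag ``the crux.'' You never close the comparison, and the reason is that you are missing the one observation that makes it close immediately: since $b\in U$, Invariant~\ref{invariant:raised-group} gives $X^0_b\subseteq X_b$ (unraised agents never lose items, and the prices of items they keep are never raised), while the Big Spender Invariant says $b$ was never a least spender, and in this algorithm an unraised agent's bundle strictly grows beyond $X^0_b$ only by being selected as a least spender. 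Hence $X_b=X^0_b$ \emph{exactly} and $\hat p_b=\hat p^0_b$. With that equality in hand, take $b'=\argmax_{i\in N_{r^*}}\hat p^0_i$ (the representative of $N_{r^*}$): you already have $\hat p_{b'}\ge\hat p^0_{b'}$ (it only gained items), and $\hat p^0_{b'}\ge\hat p^0_b$ because $b$ was still in the candidate pool $N'$ when $N_{r^*}$ was formed. The chain $\hat p^0_b=\hat p_b\ge\hat p_{b'}\ge\hat p^0_{b'}\ge\hat p^0_b$ forces equality everywhere, and the common tie-breaking rule (smallest index) then says $b'$, not $b$, should have been selected as the big spender in round $t$ -- contradiction.

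Because you do not pin down $\hat p_b$ to its initial value, you are led to attempt a much stronger and genuinely harder statement -- that the maximum of $\hat p$ over \emph{all} unraised agents is attained in $N_{r^*}$ -- and to invoke machinery (the Least Spending Invariant, the pWEF1 Invariant, property~2 of Lemma~\ref{lemma:leximin}, ``raises happen in increasing group order'') that is not needed and that you do not actually assemble into a proof. That stronger claim is not what the lemma requires: you only need to compare the one agent $b$, under the hypothesis that it is the global big spender and unraised, against the single representative $b_{r^*}$. As written, the argument stops at an acknowledged obstacle rather than resolving it, so the proof is incomplete.
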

	\begin{proof}
		By Invariant~\ref{invariant:raised-group}, for all unraised agent $i\in U$, we have $X^0_i \subseteq X_i$.
		That is, agent $i$ did not lose any item.
		Moreover, by the design of the algorithm, $X^0_i$ is a proper subset of $X_i$ if and only if $i$ was identified as a least spender in some round before $t$.
		
		Suppose that $b\in N_r$ for some $r\geq r^* +1$.
		By Invariant~\ref{invariant:big-spender}, agent $b$ has never been identified as a  least spender, which (together with $b\in U$) implies that $X_b = X^0_b$ and $\hat{p}_b = \hat{p}^0_b$.
		Let $b' = \arg\max_{i\in N_{r^*}} \{ \hat{p}^0_i \}$.
		Note that $\hat{p}_{b'} \geq \hat{p}^0_{b'}$ because $b'$ did not lose any item.
		Since $b$ is the big spender, we have
		\begin{equation*}
			\hat{p}^0_b = \hat{p}_b \geq \hat{p}_{b'} \geq  \hat{p}^0_{b'} \geq  \hat{p}^0_{b},
		\end{equation*}
		where the last inequality holds due to the principle for computing the initial agent groups.
		Therefore all inequalities should hold with equality, which leads to a contradiction because $b'$ should be the big spender in round $t$ (recall that we use the same rule for breaking tie, i.e., by selecting the agent with minimum index, every time we select a big spender).
	\end{proof}
	
	The above lemma implies the following almost straightforwardly.
	
	\begin{lemma} \label{lemma:X-b-in-MPB-l}
		If $l \in U$, then we have $X_b \subseteq \MPB_l$ in line 13.
	\end{lemma}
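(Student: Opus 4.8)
The plan is to combine Lemma~\ref{lemma:b-in-N-r*} with the Raised Group Invariant (Invariant~\ref{invariant:raised-group}) and property~2 of Lemma~\ref{lemma:leximin}. First I would pin down the relative position of the groups containing $b$ and $l$, and then analyze line 13 according to whether a price raise occurs in lines 11--12.

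Write $b\in N_r$ and $l\in N_{r'}$. If $b\notin U$ then $b$ has been raised, so $r<r^*$ by Invariant~\ref{invariant:raised-group}; since $l\in U$ we have $r'\ge r^*$, hence $r<r'$. If $b\in U$ then Lemma~\ref{lemma:b-in-N-r*} gives $b\in N_{r^*}$, so $r=r^*\le r'$, and $r'=r^*$ is impossible since then $b$ and $l$ would lie in the same group $N_{r^*}$, which is pWEF1 by Invariant~\ref{invariant:pwef1}, contradicting that $b$ strongly envies $l$ at the start of the round. So in all cases $r<r'$, i.e.\ $l$ is in a group lower than $b$; and since $l\in U$, Invariant~\ref{invariant:raised-group} also gives $\alpha_l=1$.

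Next, the case $b\notin U$ (no raise happens). Here Invariant~\ref{invariant:raised-group} gives $\alpha_b=1/k$ and $X_b\subseteq X^0_b$. Fix $e\in X_b$. Since $(\bX,\bp)$ is an equilibrium, $c_b(e)/p(e)=\alpha_b=1/k$, and as $c_b(e),p(e)\in\{1,k\}$ this forces $p(e)=k$. Since $e\in X^0_b$ and $l$ is in a group lower than $b$, property~2 of Lemma~\ref{lemma:leximin} gives $c_l(e)=k$. Hence $\alpha_{l,e}=c_l(e)/p(e)=1=\alpha_l$, so $e\in\MPB_l$; as $e$ was arbitrary, $X_b\subseteq\MPB_l$. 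In the remaining case $b\in U$, Lemma~\ref{lemma:b-in-N-r*} gives $b\in N_{r^*}$, so lines 11--12 raise $N_{r^*}$. I would argue that immediately before the raise every item of $X_b$ has price $1$: first, $b\notin N_R$ (otherwise $r'=r^*=R$, already ruled out), so property~3 of Lemma~\ref{lemma:leximin} yields $X^0_b\cap M^+=\emptyset$ and every item of $X^0_b$ has initial price $1$; second, since $b$ is unraised and, being the big spender, was never a least spender (Big Spender Invariant, Invariant~\ref{invariant:big-spender}), one checks from the description of the algorithm that $b$ has received no new item, so $X_b=X^0_b$ and no item of $X_b$ has ever been repriced. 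Thus after the raise every $e\in X_b$ has $p(e)=k$, while $c_l(e)=k$ by property~2 of Lemma~\ref{lemma:leximin} ($e\in X^0_b$, $l$ lower than $b$), so $\alpha_{l,e}=1=\alpha_l$ and $e\in\MPB_l$, giving $X_b\subseteq\MPB_l$ at line 13.

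The main obstacle is the claim, in the last case, that $b$ has received no new item before the raise (equivalently, that every item of $X_b$ still has price $1$). The clean facts---$\alpha_b=1$, $X^0_b\subseteq X_b$, $X^0_b\cap M^+=\emptyset$---only control the \emph{initial} items of $b$; ruling out that $b$ picked up a price-$k$ item through one of the transfers in the $l\notin U$ branch requires tracing how items move between groups, and this is exactly where the Big Spender Invariant (and the preservation of the Price Invariant, Invariant~\ref{invariant:price}) is needed. If one prefers, this case can instead be closed by invoking that the Price Invariant is maintained through the raise: then all items of $X_b$ have price $k$ afterwards, and combined with $c_l(e)=k$ one again gets $e\in\MPB_l$.
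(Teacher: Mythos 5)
Your proposal is correct and follows essentially the same route as the paper's proof: show $l$'s group is strictly lower than $b$'s, use the Big Spender Invariant to get $X_b\subseteq X^0_b$, apply property 2 of Lemma~\ref{lemma:leximin} to get $c_l(e)=k$, argue $p(e)=k$ for $e\in X_b$ once $b$ is raised, and conclude via $\alpha_l=1$. You merely spell out more explicitly the case split on whether the raise occurs in the current round and why the items of $X_b$ had price $1$ beforehand, which the paper compresses into the single assertion that $b$ has been raised by line 13.
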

	\begin{proof}
		Suppose that $l\in N_{r'}$ and $b\in N_r$.
		Since $b$ strongly envies $l$, by Invariant~\ref{invariant:pwef1}, we know that $b$ and $l$ are from different groups. 
		Additionally, by Lemma~\ref{lemma:b-in-N-r*}, we have $r \leq r^* < r'$. 
		By Invariant~\ref{invariant:big-spender}, we have $X_b \subseteq X^0_b$.
		Then by the property of agent groups (see Lemma~\ref{lemma:leximin}), we have $c_l(e) = k$ for all $e\in X_b$.
		Since in line 13, agent $b$ has been raised already, for all $e\in X_b$ we have $p(e) = k$, which implies that $e\in \MPB_l$ because $\alpha_l = 1$ (by Invariant~\ref{invariant:raised-group}).
	\end{proof}
	
	The next lemma shows that line 16 of Algorithm~\ref{alg:ef1andpo} is well defined.
	
	\begin{lemma} \label{lemma:existence-b-i-l}
		If $l$ has been raised, then $b$ has also been raised, and there exists an unraised agent $i\in U$ that received some item from $l$ in some round before $t$.
	\end{lemma}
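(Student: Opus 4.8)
The statement has two parts: (a) the big spender $b$ has already been raised, and (b) there is an unraised agent $i\in U$ holding an item that originally belonged to $l$, i.e.\ $X_i\cap X^0_l\neq\emptyset$. I would prove both by contradiction, drawing on the invariants assumed to hold at the start of round $t$ — the Equilibrium, pWEF1, Raised Group, Least Spending, Big Spender and Price Invariants — together with Lemma~\ref{lemma:b-in-N-r*} and the way Algorithm~\ref{alg:leximin} constructs the agent groups.

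For part (a), suppose $b\in U$. Then Lemma~\ref{lemma:b-in-N-r*} gives $b\in N_{r^*}$; the Big Spender Invariant says $b$ was never a least spender, so by the Raised Group Invariant (under which unraised agents never lose an item) we get $X_b=X^0_b$, and since $b$ is unraised its prices are unchanged, hence $\hat p_b=\hat p^0_b$. A short argument using the ``moreover'' clause of the Big Spender Invariant, together with the fact that the big-spender tie-break (smallest index) is the same one used to choose group representatives in Algorithm~\ref{alg:leximin}, shows that $b$ must in fact be the representative $b_{r^*}$ of $N_{r^*}$. On the other side, $l$ is raised, so $l\in N_{r'}$ with $r'<r^*$, $\alpha_l=1/k$, and by the Equilibrium and Price Invariants every item of $X_l$ has price $k$; moreover $X^0_l\subseteq M^-$ by property~3 of Lemma~\ref{lemma:leximin} (since $r'<R$), so each item of $X^0_l$ had initial price $1$. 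The plan is then to bound $\hat p_b=\hat p^0_{b_{r^*}}$ from above using (i) the group-construction order $\hat p^0_{b_1}\geq\cdots\geq\hat p^0_{b_R}$ and the fact that $\hat p^0_{b_r}$ never exceeds the initial spending of a member of $N_r$, and (ii) the Least Spending Invariant, which says the spending of the current least spender $l$ is at least the initial least spending; concluding $\hat p_b\leq p(X_l)/w_l$ then contradicts the hypothesis that $b$ strongly envies $l$.

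For part (b), since $l$ is raised, its group $N_{r'}$ was raised in some earlier round $t_0<t$, and by inspection of Algorithm~\ref{alg:ef1andpo} this can happen only when the big spender $b^{t_0}$ lies in $N_{r'}$; the pWEF1 Invariant then forces $l\neq l^{t_0}$, since a big spender never strongly envies a member of its own group. Using the ``moreover'' clause of the Big Spender Invariant at round $t_0$, agent $l$ was never a least spender up to $t_0$, and combined with the Raised Group Invariant this yields $X^{t_0}_l=X^0_l$. Next I would argue that $l$ must part with at least one of its original items before round $t$: otherwise $X_l=X^0_l$, all at price $k$, and the spending bound of part (a) once more contradicts that $b$ strongly envies $l$. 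Finally, I would track a discarded original item of $l$ forward in time: checking the two transfer patterns of Algorithm~\ref{alg:ef1andpo} (lines 13--14 and 16--18) shows that each time $l$ loses an item it is handed to some agent in $U$, and the Raised Group Invariant — an unraised agent never loses one of its own original items, and a raised agent only ever re-acquires its own original items — then guarantees that some item of $X^0_l$ is currently held by an unraised agent $i$; this is the desired agent, and the choice $e_1\in X_b\cap\MPB_i$ in line 16 is legitimate because $b$ is raised and $\alpha_i=1$, so the raised-up items of $X_b$ are MPB items for $i$.

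The step I expect to be hardest is the item-tracking in part (b): following the items of $X^0_l$ through arbitrarily many rounds under two different transfer shapes and certifying that one of them currently sits with an unraised rather than a raised agent. A secondary difficulty, common to both parts, is making the spending inequalities close — in particular ruling out the degenerate configuration $X_l=X^0_l$ — which requires combining the group-construction order of Algorithm~\ref{alg:leximin} with the Least Spending and Big Spender Invariants in precisely the right way.
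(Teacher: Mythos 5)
Your two-part decomposition and the invariants you invoke are the right ones, but the inequality chain you sketch for part (a) does not close. To rule out $b\in U$ you want $\hat{p}_b=\hat{p}^0_{b_{r^*}}\leq p(X_l)/w_l$, obtained from the group-construction order $\hat{p}^0_{b_1}\geq\cdots\geq\hat{p}^0_{b_R}$ together with the Least Spending Invariant. The group order plus property 4 of Lemma~\ref{lemma:leximin} indeed gives $\hat{p}^0_{b_{r^*}}\leq \hat{p}^0_{b_{r'}}\leq p^0(X^0_l)/w_l$ for $l\in N_{r'}$ with $r'<r^*$, but the remaining step $p^0(X^0_l)\leq p(X_l)$ fails precisely when $X_l\subsetneq X^0_l$, i.e.\ when $l$ has already lost items as a big spender --- which is exactly the situation part (b) asserts must occur. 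The Least Spending Invariant does not rescue this: it lower-bounds $p(X_l)/w_l$ only by the \emph{global} initial minimum spending, which need not dominate $\hat{p}^0_{b_{r^*}}$ (the initial least spender may lie in a group of index larger than $r^*$, with no MPB path to $b_{r^*}$, so Phase~1 of Algorithm~\ref{alg:leximin} never compares them). The paper closes this by reversing your order: it first proves $X_l\subsetneq X^0_l$, via the pWEF1 Invariant between $l$ and the big spender of $l$'s own group at the round $t_0$ that group is raised, with the factor $k$ from the raise absorbing a possible later raise of $b$'s group; only then does it prove that $b$ is raised, by taking the \emph{last} round $t'$ in which $l$ lost an item as a big spender and observing that an unraised $b$ never gains items, so $\hat{p}^{t'}_b\geq\hat{p}_b>p(X_l)/w_l\geq\hat{p}^{t'}_l$, contradicting that $l$ was the big spender at $t'$. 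Your part (a) has no access to such a round $t'$, and the ``spending bound of part (a)'' you reuse inside part (b) (where $b$ is raised, so $\hat{p}_b=\hat{p}^0_b$ no longer holds) is likewise unavailable; what is actually needed there is the chain $\hat{p}_b\leq k\hat{p}^{t_0}_b\leq k\hat{p}^{t_0}_{b^{t_0}}\leq k\cdot p^{t_0}(X^{t_0}_l)/w_l=p(X_l)/w_l$.

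On part (b), the forward item-tracking you flag as the hardest step is unnecessary. Once $X^0_l\setminus X_l\neq\emptyset$ is established, take any $e\in X^0_l\setminus X_l$ and let $i$ be its current owner; since $e\in X^0_l$ and $i\neq l$ we have $X_i\not\subseteq X^0_i$, and the Raised Group Invariant immediately forces $i\in U$ --- no case analysis of the transfer patterns across rounds is needed. Your derivation of $X^{t_0}_l=X^0_l$ from the ``moreover'' clause of the Big Spender Invariant is correct and matches how the paper handles the case where $l$ loses and later retrieves items.
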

	\begin{proof}
		We first show that $X_l$ is a proper subset of $X^0_l$.
		By Invariant~\ref{invariant:raised-group}, since $l$ is raised, we have $X_l \subseteq X^0_l$.
		Suppose $X_l = X^0_l$, then either (1) $l$ has never been identified as a big spender (and thus did not lose any item); or (2) $l$ was identified as a big spender and lost some item, but was identified as a (raised) least spender later and retrieved the lost item.
		
		For case (1), let $t' < t$ be the round in which $l$ is raised.
		By assumption $l$ is in the same group as $b^{t'}$ and thus $b^{t'}$ is pWEF1 towards $l$ in $(\bX^{t'},\bp^{t'})$, by Invariant~\ref{invariant:pwef1}.
		Thus we have 
		\begin{equation*}
			\frac{p(X_l)}{w_l} \geq \frac{k\cdot p^{t'}(X^{t'}_{l})}{w_l} \geq k\cdot \hat{p}^{t'}_{b^{t'}}  \geq k\cdot \hat{p}^{t'}_{b} \geq \hat{p}_b,
		\end{equation*}
		where the first inequality holds because $l$ did not lose any item, and $l$ is raised during round $t'$; the second inequality holds because $b^{t'}$ is pWEF1 toward $l$ in $(\bX^{t'},\bp^{t'})$; the third inequality holds because $b^{t'}$ is the big spender at the beginning of round $t'$; the last inequality holds because $b$ did not receive any item before round $t$, by Invariant~\ref{invariant:big-spender}.
		Then we have a contradiction because $b$ strongly envies $l$ (in $(\bX,\bp)$).
		
		For case (2), let $t_1$ be the round during which $l$ is raised and $t_2 \geq t_1$ be the first round in which $l$ is identified as a big spender.
		By definition of $t_2$ and Invariant~\ref{invariant:big-spender}, $l$ did no lose or receive any item before round $t_2$.
		Therefore we have $X^{t_1}_l = X^0_l = X_l$.
		Since $l$ is raised during round $t_1$, following the same analysis as in case (1) we have
		\begin{equation*}
			\frac{p(X_l)}{w_l} = \frac{k\cdot p^{t_1}(X^{t_1}_{l})}{w_l} \geq k\cdot \hat{p}^{t_1}_{b^{t_1}}  \geq k\cdot \hat{p}^{t_1}_{b} \geq \hat{p}_b,
		\end{equation*}
		which also contradicts with the fact that $b$ strongly envies $l$ (in $(\bX,\bp)$).
		
		Therefore, $l$ must have lost some item (as a big spender) before round $t$ and the item has not been retrieved.
		Let $e\in X^0_l \setminus X_l$ be any such item and suppose $e\in X_i$.
		Then by Invariant~\ref{invariant:raised-group}, $i$ has not been raised when round $t$ begins, because $X_i \not\subseteq X^0_i$.
		It remains to show that $b$ is raised (when round $t$ begins).
		
		Assume that $b$ has not been raised when round $t$ begins.
		Let $t' < t$ be the last round during which agent $l$ loses some item as a big spender. 
		Then we have 
		\begin{equation*}
			\hat{p}^{t'}_l \leq \frac{p(X)}{w_l} < \hat{p}_b \leq \hat{p}^{t'}_b,
		\end{equation*}
		where the first inequality holds because $t'$ is the last round during which $l$ loses some item, the second inequality holds because $b$ strongly envies $l$ in $(\bX,\bp)$, the third inequality holds since we assume that $b$ has never been raised, and did not receive any item (by Invariant~\ref{invariant:big-spender}).
		Therefore we have a contradiction because $b$ should be the big spender in round $t'$.
		Hence $b$ must have been raised when round $t$ begins, and the proof is complete.
	\end{proof}

	\subsection{Maintenance of Invariants}
	
	We show in this section that assuming all the stated invariants hold true when round $t$ begins (after identifying the big and least spenders), all of them will continue to hold at the end of round $t$ (assuming that the algorithm does not terminate during round $t$).
	
	\begin{lemma}
		The Equilibrium Invariant (Invariant~\ref{invariant:equilibrium}) is maintained at the end of round $t$.
	\end{lemma}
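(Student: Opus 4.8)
The plan is to verify that the defining property of an equilibrium, namely $X_i\subseteq\MPB_i$ for every $i\in N$, survives the transfers and the possible price raise of round $t$. I would split the argument into the two branches of Algorithm~\ref{alg:ef1andpo}. In both branches the big spender $b$ only loses chores, so $X_b\subseteq\MPB_b$ is immediate; and in the second branch the auxiliary agent $i$ loses $e_2$ but gains $e_1$, which is an MPB chore of $i$ by the choice $e_1\in X_b\cap\MPB_i$ made in line~16 (and no prices change in that branch), so $X_i\subseteq\MPB_i$ is immediate as well. Hence the real content is (a) the chore handed to the least spender $l$ is an MPB chore of $l$ under the current prices, and (b) when a price raise occurs it neither leaves the price set $\{1,k\}$ nor destroys anybody's MPB membership.

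For the branch $l\in U$: if in addition $b\in U$, the algorithm multiplies by $k$ the prices of all chores in $\bigcup_{i\in N_r}X_i$, where $r=r^*$ by Lemma~\ref{lemma:b-in-N-r*}. I would first observe, via Invariant~\ref{invariant:big-spender}, that no agent of $N_{r^*}$ has ever been a least spender, and that none has been a big spender before round $t$ either (that would already have triggered the raise of $N_{r^*}$), so $X_i=X^0_i$ for all $i\in N_{r^*}$; together with Lemma~\ref{lemma:leximin}(1),(3) and $r^*<R$ (Corollary~\ref{corollary:property-last-group}) this shows every such chore currently has price $1$, hence after the raise price $k$, preserving Invariant~\ref{invariant:price}. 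Then I check MPB membership group by group: for $i\in N_{r^*}$ every pain-per-buck ratio of $i$ on a chore of the raised set is divided by $k$ and the rest are unchanged, so $\alpha_i$ becomes $1/k$, attained in particular on $X_i$; for a raised agent $j\in N_{<r^*}$ (so $\alpha_j=1/k$) the prices on $X_j$ are untouched and each raised chore $e$ now has ratio $c_j(e)/k\ge 1/k=\alpha_j$; and for an unraised agent $j\in N_{>r^*}$ (so $\alpha_j=1$) Lemma~\ref{lemma:leximin}(2) gives $c_j(e)=k$ for every chore $e$ in $\bigcup_{i\in N_{r^*}}X^0_i$, so after the raise that ratio is $k/k=1=\alpha_j$; in all three cases $X_j\subseteq\MPB_j$ is preserved. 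After the (possible) raise, the chore $e$ moved from $b$ to $l$ lies in $\MPB_l$ by Lemma~\ref{lemma:X-b-in-MPB-l}, so $X_l+e\subseteq\MPB_l$, finishing this branch.

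For the branch $l\notin U$: by Lemma~\ref{lemma:existence-b-i-l} there is an unraised $i\in U$ with $e_2\in X_i\cap X^0_l$, and the only nontrivial point is $e_2\in\MPB_l$. I would argue that $e_2\notin M^+$ (otherwise $e_2$ would lie in $\bigcup_{a\in N_R}X^0_a$ by Lemma~\ref{lemma:leximin}(3), contradicting that $l$ is raised while $N_R$ is not), so $c_l(e_2)=1$ and $p^0(e_2)=1$; next, since $l$ is raised, its group was raised at some round $t_1<t$, and by the same bookkeeping as above ($l$'s group was unraised before $t_1$ and no agent of it was a least or big spender before $t_1$) $l$ still owns $e_2$ at round $t_1$, so $e_2$'s price becomes $k$ at round $t_1$; as prices never decrease and stay in $\{1,k\}$, $p(e_2)=k$ now, whence $\alpha_{l,e_2}=c_l(e_2)/p(e_2)=1/k=\alpha_l$ and $e_2\in\MPB_l$. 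Combined with the easy observations, $X_l+e_2\subseteq\MPB_l$ and $X_i+e_1-e_2\subseteq\MPB_i$, so $(\bX,\bp)$ is still an equilibrium at the end of round $t$. I expect the main obstacle to be precisely this bookkeeping: pinning down, through Invariants~\ref{invariant:raised-group}, \ref{invariant:big-spender} and \ref{invariant:price} and Lemma~\ref{lemma:leximin}, which chores have already had their price raised and that the chores owned by $N_{r^*}$ are large to every lower group even after the earlier reallocations.
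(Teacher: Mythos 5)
Your proof is correct and follows essentially the same route as the paper's: the same split into (i) showing the price raise preserves every agent's MPB membership, checked group by group for $N_{<r^*}$, $N_{r^*}$ and $N_{>r^*}$, and (ii) showing the transferred chores are MPB chores of their recipients, via Lemmas~\ref{lemma:X-b-in-MPB-l} and~\ref{lemma:existence-b-i-l}. The only substantive point the paper covers that you elide is verifying that $X_b\subseteq\MPB_i$ in the second branch (every chore in $X_b$ has price $k$ and cost $k$ to the unraised agent $i$), rather than taking the existence of $e_1\in X_b\cap\MPB_i$ for granted from line~16 of the algorithm; conversely, your bookkeeping showing $p(e_2)=k$ spells out a step the paper only asserts.
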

	\begin{proof}
		To prove that $(\bX^{t+1}, \bp^{t+1})$ is an equilibrium, we need to argue that 
		\begin{itemize}
			\item[(1)] raising the price of a group does not cause $X_i\nsubseteq \MPB_i$ for any $i\in N$.
			\item[(2)] all reallocations happened during round $t$ follow MPB edges.
		\end{itemize}
		
		For the first property, note that by Invariant~\ref{invariant:big-spender} and~\ref{invariant:raised-group}, if agent $i\in N_{r^*}$ is raised during this round, then we have $X_i = X^0_i$ and $p(e) = 1$ for all $e\in X_i$.
		\begin{itemize}
			\item For all $j \in N_{<r^*}$, since $\alpha_j = 1/k$, raising group $N_{r^*}$ does not affect $X_j \subseteq \MPB_j$.
			\item For all $j \in N_{r^*}$, $\alpha_j$ decreases from $1$ to $1/k$ but for all $e\in X_j$ we have $\alpha_{j,e} = 1/k$, which implies $X_j \subseteq \MPB_j$.
			\item For all $j\in N_{>r^*}$, by Lemma~\ref{lemma:leximin}, we have $c_j(e) = k$ for all item $e$ whose price is raised (from $1$ to $k$). Hence $\alpha_j$ remains $1$ and we still have $X_j \subseteq \MPB_j$.
		\end{itemize}

		For the second property, we first consider the case when $l$ is not raised.
		By Lemma~\ref{lemma:X-b-in-MPB-l}, the item allocated to $l$ is in $\MPB_l$.
		Now suppose that $l$ is raised.
		Then by Lemma~\ref{lemma:existence-b-i-l}, $b$ is also raised, and the algorithm reallocates an item in $e_2\in X_i\cap X^0_l$ to $l$ and an item $e_1$ from $b$ to $i$.
		By Invariant~\ref{invariant:raised-group}, we have $\alpha_b = \alpha_l = 1/k$ and $\alpha_i = 1$.
		When $e_2$ was allocated from $l$ to $i$, $p(e_2)$ has already been raised (from $1$ to $k$).
		Thus $\alpha_{l,e_2} = \frac{c_l(e_2)}{p(e_2)} = 1/k$, which implies that $e_2\in \MPB_l$.
		Since $b$ is in a raised group and $i$ is not raised, we have $c_i(e_1) = k$.
		Furthermore, since $p(e_1) = k$, we have $\alpha_{i,e_1} = 1$, which implies that $e_1\in \MPB_i$.
	\end{proof}
	
	\begin{lemma}
		The pWEF1 Invariant (Invariant~\ref{invariant:pwef1}) is maintained at the end of round $t$.
	\end{lemma}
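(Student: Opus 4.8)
\section*{Proof proposal for the pWEF1 Invariant lemma}

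The plan is to verify pWEF1 of every agent group $N_r$ at the end of round $t$ by a case analysis mirroring the algorithm, using Lemmas~\ref{lemma:b-in-N-r*}, \ref{lemma:X-b-in-MPB-l} and~\ref{lemma:existence-b-i-l} to pin down which transfers and price raises happen. The first step is to record that the only agents whose bundle or price changes in round $t$ are: the big spender $b$ (who loses one item), the least spender $l$ (who gains one item), the agent $i$ of line~16 (who swaps one item for another, only when $l\notin U$), and, when the group $N_{r^*}\ni b$ is raised, every agent of $N_{r^*}$ (whose item prices get multiplied by $k$). Any group containing none of these agents is untouched and stays pWEF1, so it remains to treat the group of $b$, the group of $l$, the group of $i$, and (when raised) $N_{r^*}$.

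The two price facts doing the real work, both consequences of the Price and Raised Group Invariants, are: (a) every item $b$ loses is a maximum-price item of $X_b$ -- if $b$ is already raised then $\alpha_b=1/k$ forces each item of $X_b$ to have price $k$, while if $b\in U$ then $b\in N_{r^*}$ by Lemma~\ref{lemma:b-in-N-r*} and after raising $N_{r^*}$ all items of $X_b=X^0_b$ have price $k$; hence after $b$ loses an item the new value $p(X_b)/w_b$ equals $\hat{p}_b$ as computed before the loss (up to the factor $k$ of a raise). And (b) when $l\notin U$, the items $e_1,e_2$ exchanged at $i$ both have price $k$: $e_1\in X_b$ with $b$ raised gives $p(e_1)=k$, and $e_2\in X^0_l\cap X_i$ is an item the raised agent $l$ lost as a big spender, so -- since an unraised agent never loses an item -- $e_2$ was still in $X_l$ when $l$'s group was raised and thus acquired price $k$ then and never again. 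Fact (b) immediately gives that $i$'s bundle has the same multiset of prices before and after round $t$, so $p(X_i)/w_i$ and $\hat{p}_i$ are unchanged and $i$'s group stays pWEF1.

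For the group of $b$: by fact (a) the new $p(X_b)/w_b$ equals the old $\hat{p}_b$ (possibly scaled by $k$), which is at least $\hat{p}_p$ for every $p$ in the group because $b$ is the big spender; and $\hat{p}_b$ only decreases when $b$ loses an item (the elementary inequality $\min_{f\in S-e}p(S-e-f)\le\min_{f\in S}p(S-f)$), so the new $\hat{p}_b$ is at most the old one, which the old pWEF1 of the group bounds by $p(X_p)/w_p$; a raise multiplies both sides by $k$ and so is harmless. For the group of $l$: removing the newly received item shows the new $\hat{p}_l$ is at most the old $p(X_l)/w_l$, which is the old least spending and hence $\le p(X_p)/w_p$ for $p$ in $l$'s group; and every $\hat{p}_p$ in that group was already at most $p(X_l)/w_l$, which only increased. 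Finally, when $N_{r^*}$ is raised (only in the subcase $l\in U$, $b\in U$), all pWEF1 inequalities within $N_{r^*}$ not involving $b$ survive because both sides scale by $k$, and the inequalities involving $b$ are exactly the two estimates above, scaled by $k$.

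I expect the main obstacle to be establishing price fact (b), i.e.\ that $p(e_2)=k$: it rests on the observation that unraised agents never lose items, so the item $e_2\in X^0_l\setminus X_l$ was still held by $l$ when $l$'s group was raised (and got its price multiplied by $k$ only then). A secondary bookkeeping point is combining a price raise at $N_{r^*}$ with the item removal at $b$; the clean statement is that raising then removing gives new $\hat{p}_b\le k\cdot(\text{old }\hat{p}_b)$ while new $p(X_b)/w_b = k\,(p(X_b)-p(e))/w_b \ge k\cdot(\text{old }\hat{p}_b)$, so both required inequalities for $N_{r^*}$ go through.
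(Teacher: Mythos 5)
Your proof is correct and follows essentially the same route as the paper: decompose by groups, observe that only $b$, $l$, the swap agent, and the raised group are affected, then use that $b$'s post-round spending is at least its old $\hat{p}_b$ and that $l$'s new $\hat{p}_l$ is at most its old (minimum) spending. The only difference is that you re-derive inline the facts that the intermediate agent's spending is unchanged and that transferred items have price $k$, which the paper instead delegates to the separately maintained Price Invariant (Invariant~\ref{invariant:price}).
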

	\begin{proof}
		By Invariant~\ref{invariant:price}, during round $t$, only the price of the big spender $b$, the least spender $l$ and agents whose price is raised during this round would change.
		Moreover, raising the price of a whole group does not affect the pWEF1 property within this group.
		Hence it suffices to consider the case when no agent is raised during this round and argue that the group $N_r$ containing $b$ and the group $N_{r'}$ containing $l$ remain pWEF1 when round $t$ ends.
		By Invariant~\ref{invariant:pwef1}, both groups $N_r$ and $N_r'$ are pWEF1 when round $t$ begins.
		\begin{itemize}
			\item For $N_r$, since only the spending of $b$ decreases, it suffices to argue that no agent $i\in N_r\setminus\{b\}$ envies $b$ when round $t$ ends.
			Note that by Invariant~\ref{invariant:price}, the spending of $i$ does not change.
			Since the spending of $b$ at the end of this round is at least $\hat{p}_b$ (since $b$ loses one item in this round), and $\hat{p}_i \leq \hat{p}_b$ (since $b$ is the big spender), we conclude that $i$ is pWEF1 towards $b$ when round $t$ ends.
			
			\item For $N_{r'}$, since only the spending of $l$ increases, it suffices to argue that agent $l$ does not envy any other agent $i\in N_{r'}$ when round $t$ ends, which trivially holds because agent $l$ has the minimum spending when round $t$ begins, and $l$ receives only one item during this round.
		\end{itemize}
		
		Hence the equilibrium $(\bX^{t+1}, \bp^{t+1})$ is also pWEF1.
	\end{proof}

	\begin{lemma}
		The Raised Group Invariant (Invariant~\ref{invariant:raised-group}) is maintained at the end of round $t$.
	\end{lemma}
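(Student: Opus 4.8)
The plan is to prove that Invariant~\ref{invariant:raised-group} survives round $t$ by splitting into two cases according to whether a group is raised during the round, and in each case tracking only the few agents whose bundles or prices actually move.

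\textbf{Case 1: no group is raised in round $t$.} This is exactly the situation where either $l\in U$ and $b\notin U$ (line~13 moves one item from $b$ to $l$), or $l\notin U$ (line~16 moves $e_2\in X_i\cap X^0_l$ from an unraised agent $i$ back to $l$ and $e_1$ from $b$ to $i$). Since no price changes, the threshold $r^*$ and every MPB ratio $\alpha_j$ stay the same, so only the bundle-containment part of the invariant needs re-checking, and only for the agents whose bundles move. By Invariant~\ref{invariant:raised-group} and the design of Algorithm~\ref{alg:ef1andpo}, $b$ lies in a raised group $N_{<r^*}$ and only loses an item, so $X_b\subseteq X^0_b$ is preserved; $l$ only gains items, which preserves $X^0_l\subseteq X_l$ if $l$ is unraised and, if $l$ is raised, the item $e_2$ it receives lies in $X^0_l$, so $X_l\subseteq X^0_l$ still holds; the auxiliary agent $i$ is unraised, hence in $N_{\geq r^*}$, gains $e_1$ and loses $e_2\in X^0_l$, and since $l\neq i$ we have $e_2\notin X^0_i$, so $X^0_i\subseteq X_i$ survives the swap.

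\textbf{Case 2: a group is raised in round $t$.} This happens only when $l\in U$ and $b\in U$, and the raised group is the one containing $b$. By Lemma~\ref{lemma:b-in-N-r*}, $b\in N_{r^*}$, so the raised group is exactly $N_{r^*}$ and the threshold becomes $r^*+1$; moreover $r^*<R$, since otherwise $b$ and the unraised least spender $l$ would both lie in $N_R$, contradicting Invariant~\ref{invariant:pwef1} together with the fact that $b$ strongly envies $l$. Hence $N_1,\dots,N_{r^*}$ are now each raised exactly once and $N_{r^*+1},\dots,N_R$ are still unraised. It remains to check the per-group conditions for the new threshold. For $i\in N_{<r^*}$: nothing in $X_i$ is touched and prices only rise, so $\alpha_i$ cannot increase; since $c_i(e)/p(e)\geq 1/k$ always, $\alpha_i$ stays $1/k$. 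For $i\in N_{r^*}$: by the Big Spender Invariant (Invariant~\ref{invariant:big-spender}) no agent of $N_{r^*}$ has ever been a least spender, and a short argument from the structure of the transfers shows none has received a new item, so $X_i=X^0_i$; since $r^*<R$, Lemma~\ref{lemma:leximin}(3) gives $X^0_i\cap M^+=\emptyset$, so every item of $X_i$ had price $1$ and cost $1$ to $i$, whence after the raise each has ratio $1/k$ and $\alpha_i=1/k$, while only $b$ loses an item so $X_i\subseteq X^0_i$. For $i\in N_{>r^*}$: every raised item lies in $\bigcup_{j\in N_{r^*}}X^0_j$ and, by Lemma~\ref{lemma:leximin}(2), has cost $k$ to $i$, so its ratio moves from $k$ to $1$ and never drops below $\alpha_i=1$; thus $\alpha_i$ stays $1$, and since $l$ lies in a group different from $N_{r^*}$ (Invariant~\ref{invariant:pwef1}) and is the only agent of $N_{>r^*}$ whose bundle changes, only by gaining an item, $X^0_i\subseteq X_i$ is preserved.

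The main obstacle is the price-raise analysis in Case~2: one must show that after raising $N_{r^*}$ the MPB ratios land exactly on $1/k$ for $i\in N_{\leq r^*}$ and exactly on $1$ for $i\in N_{>r^*}$. This is where the Big Spender and Price Invariants (to guarantee that the agents of $N_{r^*}$ still hold precisely their original, price-$1$ bundles), Lemma~\ref{lemma:leximin}(2)--(3) (to control the costs that the raised items have for agents in lower and higher groups), and the elementary bound $c_i(e)/p(e)\in[1/k,k]$ must all be combined. The bundle-containment checks, by contrast, reduce to the observations that $b$ only loses items, that $l$ only gains items, and that any auxiliary agent $i$ only loses an item ($e_2$) that does not belong to its own initial bundle $X^0_i$.
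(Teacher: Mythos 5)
Your proof is correct and follows essentially the same route as the paper's: reduce the interesting work to the case where a group is raised, invoke Lemma~\ref{lemma:b-in-N-r*} to place $b$ in $N_{r^*}$, rule out $r^*=R$, use the Big Spender Invariant to conclude that agents of $N_{r^*}$ still hold exactly $X^0_i$, and then track how the MPB ratios shift under the price raise. You are more explicit than the paper in two places — verifying the bundle containments in the no-raise case and deriving $r^*\leq R-1$ directly from the pWEF1 invariant rather than via Lemma~\ref{lemma:existence-b-i-l} — but these are presentational differences, not a different argument.
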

	\begin{proof}
		It suffices to consider the case when the group containing $b$ is raised during this round.
		Recall that by Lemma~\ref{lemma:b-in-N-r*}, we have $b\in N_{r^*}$.
		Hence raising $N_{r^*}$ causes $r^*$ to be increased by one, at the end of round $t$.
		Note that we must have $r^* \leq R-1$ because otherwise ($b\in N_R$) the least spender $l\notin N_R$ is raised, which implies that $b$ is already raised (by Lemma~\ref{lemma:existence-b-i-l}), and is a contradiction.
		Moreover, by Invariant~\ref{invariant:big-spender}, every agent $i\in N_{r^*}$ did not receive any item, which implies $X^{t+1}_i \subseteq X^0_i$.
		Since for all $i\in N_{r^*}$ we have $\alpha_i = 1$ when round $t$ begins, we have $\alpha_i = 1/k$ when round $t$ ends.
		Therefore Invariant~\ref{invariant:raised-group} is maintained.
	\end{proof}

	\begin{lemma}
		The Least Spending Invariant (Invariant~\ref{invariant:least-spending}) is maintained at the end of round $t$.
	\end{lemma}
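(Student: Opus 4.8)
The plan is to show that by the end of round $t$ no agent's spending $p(X_i)/w_i$ drops below $\frac{p^t(X^t_{l^t})}{w_{l^t}}$; since the least spending at the end of round $t$ is a minimum of such quantities, this yields the invariant. To do so, I would first catalogue every operation performed during round $t$ together with its effect on each agent's spending.

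The operations are: possibly a price raise of all chores in $\bigcup_{i\in N_r} X_i$ where $b\in N_r$ (this happens only when $b\in U$, and by Lemma~\ref{lemma:b-in-N-r*} the raised group is exactly $N_{r^*}$), followed by one or two item transfers. The raise multiplies the spending of each agent in $N_{r^*}$ by $k>1$ and leaves the spending of every other agent untouched, since their bundles are disjoint from the affected chores. For the transfers: in the branch $l\in U$, agent $l$ only gains an item, so its spending does not decrease, while $b$ only loses an item. In the branch where $l$ is raised, in addition the unraised agent $i$ swaps $e_2\in X^0_l$ for $e_1\in X_b$; but, precisely as in the proof that the Equilibrium Invariant is maintained, we have $p(e_1)=p(e_2)=k$, so agent $i$'s spending is unchanged. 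Consequently, the big spender $b$ is the only agent whose spending can decrease during round $t$.

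It therefore remains to check that $b$'s spending at the end of round $t$ still exceeds $\frac{p^t(X^t_{l^t})}{w_{l^t}}$. Since $b$ strongly envies $l$ at the start of the round, $\hat{p}_b > \frac{p^t(X^t_{l^t})}{w_{l^t}}$ with the start-of-round prices. If $N_{r^*}$ is raised this round then every chore of $X_b$ has its price scaled by $k$, so $\min_{e\in X_b}\{p(X_b-e)/w_b\}$ becomes $k\hat{p}_b\ge\hat{p}_b$ (and if no raise occurs this quantity equals $\hat{p}_b$). After $b$ then loses one item $e$, its spending is $p(X_b-e)/w_b\ge\min_{e'\in X_b}\{p(X_b-e')/w_b\}\ge\hat{p}_b>\frac{p^t(X^t_{l^t})}{w_{l^t}}$; here the right-hand side is unaffected by the raise, because whenever a raise occurs we are in the branch $l\in U$, so $l$ is unraised and hence lies in a group $N_{r'}$ with $r'>r^*$ (different from $N_{r^*}$ since $b$ strongly envies $l$).

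Putting this together: at the end of round $t$, every agent $j\neq b$ has spending at least as large as at the start, hence at least $\frac{p^t(X^t_{l^t})}{w_{l^t}}$ since $l^t$ was the least spender; and $b$'s spending at the end of round $t$ also exceeds $\frac{p^t(X^t_{l^t})}{w_{l^t}}$. Taking the minimum over all agents gives $\frac{p^{t+1}(X^{t+1}_{l^{t+1}})}{w_{l^{t+1}}}\ge\frac{p^t(X^t_{l^t})}{w_{l^t}}$, which extends the Least Spending Invariant through the end of round $t$. I expect the only real friction to be the case-bookkeeping in the second paragraph — specifically the claim that the transfer leaves agent $i$'s spending unchanged, which rests on $p(e_1)=p(e_2)=k$, and the need to apply the price raise before the item transfer so that the factor $k$ helps rather than hurts the lower bound on $b$.
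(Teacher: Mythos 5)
Your proposal is correct and follows essentially the same route as the paper: show that the only agent whose spending can decrease during round $t$ is the big spender $b$, and that since $b$ loses only one item its end-of-round spending is still at least $\hat{p}_b > \frac{p^t(X^t_{l^t})}{w_{l^t}}$. The only cosmetic difference is that the paper cites the Price Invariant (Invariant~\ref{invariant:price}) for the first step, whereas you re-derive the relevant case analysis inline.
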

	\begin{proof}
		To show that $\frac{p^{t+1}(X^{t+1}_{l^{t+1}})}{w_{l^{t+1}}} \geq \frac{p(X_l)}{w_l}$, it suffices to argue that the spending of every agent $i\in N$ is at least $\frac{p(X_l)}{w_l}$. 
		By Invariant~\ref{invariant:price}, only the spending of $b$ decreases.
		Since $b$ loses only one item, her spending at the end of round $t$ is at least $\hat{p}_b$.
		Since $b$ strongly envies $l$, we have $\hat{p}_b\geq \frac{p(X_l)}{w_l}$, which implies that the spending of $b$ when round $t$ ends is at least $\frac{p(X_l)}{w_l}$. 		
	\end{proof}
	
	\begin{lemma}
		The Big Spender Invariant (Invariant~\ref{invariant:big-spender}) is maintained at the end of round $t$.
	\end{lemma}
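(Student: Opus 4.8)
The statement packages two claims about the big spender $b:=b^{t+1}$ identified at the start of round $t+1$: (i) $b$ was never a least spender in rounds $1,\ldots,t+1$; and (ii) if $b$ is unraised --- equivalently, by Lemma~\ref{lemma:b-in-N-r*} applied at the start of round $t+1$, if $b\in N_{r^*}$ for the current $r^*$ --- then (i) holds for every agent of $b$'s group. The plan is to prove (i) first and then (ii), because (ii) invokes Lemma~\ref{lemma:b-in-N-r*} at round $t+1$, whose proof needs exactly claim (i) at round $t+1$. The only machinery I would use, besides the invariants assumed to hold up to round $t$, is the Least Spending Invariant (least spending non-decreasing), the Price Invariant (an agent's spending changes in a round only if it is raised or is that round's big or least spender), and pWEF1 of each group in both the initial and the current equilibrium (property~4 of Lemma~\ref{lemma:leximin} and Invariant~\ref{invariant:pwef1}).

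For (i) I would trace the spending of $b$ backwards. If $b=b^t$ we are done by the invariant at round $t$ together with $b\neq l^{t+1}$ (since $b$ strongly envies $l^{t+1}$). Otherwise, suppose for contradiction $b=l^s$ with $s\le t$. Because $b$ strongly envies $l^{t+1}$ and the least spending only grows, $\hat{p}^{t+1}_b$ exceeds the least spending of every round $\le t+1$; since $b$'s spending at round $s$ equals that round's least spending, $b$'s spending strictly increased between $s$ and $t+1$, so by the Price Invariant there is a last round $s'\in\{s,\ldots,t\}$ where $b$ is raised or is the big or least spender. If $b=b^{s'}$, the invariant at round $s'$ already says $b$ was never a least spender up to round $s'\ge s$, contradiction. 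If $b=l^{s'}$ (and $b\ne b^{s'}$), then in round $s'$ the bundle of $b$ grows by a single item whose price is not touched (a raise affects only the big spender's group, which the pWEF1 Invariant keeps disjoint from the least spender's) and its spending stays fixed thereafter; a short computation on the resulting price multiset gives $\hat{p}^{t+1}_b\le \frac{p^{s'}(X^{s'}_b)}{w_b}$, the round-$s'$ least spending, which is below $\hat{p}^{t+1}_b$, contradiction. If $b$ is raised in round $s'$ but is not $b^{s'}$, then $b$ sits in the group raised in round $s'$; its big spender $b^{s'}$ is unraised at that point and, by Lemma~\ref{lemma:b-in-N-r*}, lies in that same group, so the ``moreover'' clause of the invariant at round $s'$ says every agent of the group --- including $b$ --- was never a least spender up to round $s'$, contradicting $b=l^s$.

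For (ii), assume $b$ is unraised, so $b\in N_{r^*}$. By (i) $b$ was never a least spender, and by the Raised Group Invariant an unraised agent never loses an item, so $b$ still holds exactly $X^0_b$ with untouched prices and $\hat{p}^{t+1}_b=\hat{p}^0_b$. Take any $j\in N_{r^*}$ and suppose $j$ is a least spender in some round; let $s$ be the first such. Then $j$ has lost nothing (unraised) and gained nothing before round $s$ (minimality of $s$), so $X^s_j=X^0_j$ with untouched prices and the round-$s$ least spending equals $\frac{p^0(X^0_j)}{w_j}$. pWEF1 of $N_{r^*}$ in the initial equilibrium gives $\hat{p}^0_b\le \frac{p^0(X^0_j)}{w_j}$, so $\hat{p}^{t+1}_b$ is at most the round-$s$ least spending, hence at most the round-$(t+1)$ least spending, which is strictly below $\hat{p}^{t+1}_b$ because $b$ strongly envies $l^{t+1}$ --- contradiction. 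And $l^{t+1}\notin N_{r^*}$, since otherwise pWEF1 of $N_{r^*}$ in the current equilibrium would forbid $b$ from strongly envying $l^{t+1}$. So no agent of $N_{r^*}$ was ever a least spender in rounds $1,\ldots,t+1$.

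The delicate points I expect are the ordering of the two claims --- claim (ii) is essentially vacuous unless we can first say, via Lemma~\ref{lemma:b-in-N-r*} and hence via claim (i), that an unraised big spender must be the untouched initial representative of $N_{r^*}$ --- and the $b=l^{s'}$ case of claim (i), where one has to argue carefully that gaining one item and then staying put cannot push the reduced spending above the round-$s'$ least spending; this is where one case-splits on whether the new item is itself the highest-priced one in the bundle and uses $p(e)\in\{1,k\}$, and where one should also confirm that any later changes to $b$'s bundle are price-neutral. The raised-agent case of claim (i) crucially consumes the ``moreover'' part of the inductive hypothesis, so the whole argument closes only because that clause is carried along in the induction.
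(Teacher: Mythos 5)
Your proof is correct in substance and reaches the same core contradiction as the paper's --- bounding $\hat{p}^{t+1}_{b^{t+1}}$ by the least spending of an earlier round and invoking the Least Spending Invariant against strong envy --- but the case organization is genuinely different. The paper splits on whether $b^{t+1}$ is raised at the start of round $t+1$: in the unraised case it proves the group statement directly by taking the \emph{last} round $t'$ in which some member of $b^{t+1}$'s group was a least spender and using the group's pWEF1 \emph{at round $t'$} (Invariant~\ref{invariant:pwef1}); in the raised case it proves only the individual statement via the last round in which $b^{t+1}$ itself was a least spender. You instead prove the individual claim uniformly by a ``last round in which $b$'s spending could have changed'' trichotomy (big spender / least spender / raised-but-not-big), and then prove the group claim separately via the \emph{first} least-spender appearance of a group member and the \emph{initial} equilibrium's pWEF1 (property~4 of Lemma~\ref{lemma:leximin}). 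Your ordering of the two claims, so that Lemma~\ref{lemma:b-in-N-r*} can be invoked at round $t+1$ only after claim (i) is in hand, correctly resolves the circularity, and your case (a) ($b=b^{s'}$) is a shortcut the paper does not use but which is valid. Both routes buy the same thing; yours makes the dependence on the ``moreover'' clause of Invariant~\ref{invariant:big-spender} more explicit, while the paper's is slightly shorter because the unraised case handles both statements at once.

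Two loose ends you flag deserve to actually be closed. First, the Price Invariant controls only the \emph{spending} $p(X_i)$, not $\hat{p}_i$, and after round $s'$ agent $b$ can still act as the intermediate agent in a three-way reallocation (that role is not ``raised, big, or least spender''); you need the observation that such a swap exchanges two items both of price $k$, hence preserves the price multiset of the bundle and in particular preserves $\hat{p}_b$ and the bound $\max_{e'\in X_b} p(e') \geq p(e)$. Second, in part (ii) the equality ``round-$s$ least spending $= p^0(X^0_j)/w_j$'' requires ruling out that $j$ lost spending as a big spender before round $s$; the clean way is to note that you only need the inequality $p^s(X^s_j) \geq p^0(X^0_j)$, which follows from the Raised Group Invariant ($X^0_j \subseteq X^s_j$ for unraised $j$) together with the fact that prices of items held by an unraised agent are never raised. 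With those two points filled in, the argument is complete.
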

	\begin{proof}
		Recall that the big and least spenders in round $t+1$ are decided by the equilibrium we compute at the end of round $t$.
		To prove the lemma, we need to show that 
		\begin{itemize}
			\item[(1)] at the beginning of round $t+1$, the big spender $b^{t+1}$ has never been identified as a least spender in rounds $1,2,\ldots,t+1$;
			\item[(2)] if $b^{t+1} \in N_r$ has not been raised (when round $t+1$ begins) then all agent $i\in N_r$ has never been identified as a least spender in rounds $1,2,\ldots,t+1$.
		\end{itemize}
		
		We first consider the case when $b^{t+1}$ has not been raised when round $t+1$ begins.
		Assume the contrary of statement, and let $t' \leq t$ be the last round in which some agent $i\in N_r$ was identified as the least spender, i.e., $i = l^{t'}$.
		Note that $i$ can be $b^{t+1}$.
		By definition, in rounds $t'+1,t'+2,\ldots,t$, agents in $N_r$ did not receive any item; in round $t'$ only agent $i\in N_r$ receives one item.
		\begin{itemize}
			\item If $i\neq b^{t+1}$, we have
			\begin{equation*}
				\hat{p}^{t+1}_{b^{t+1}} \leq \hat{p}^{t'}_{b^{t+1}} \leq \frac{p^{t'}(X^{t'}_i)}{w_i} \leq \frac{p^{t+1}(X^{t+1}_{l^{t+1}})}{w_{l^{t+1}}},
			\end{equation*}
			where the first inequality holds because $b^{t+1}$ did not receive any item and is not raised in rounds $t', t'+1,\ldots, t$; the second inequality holds because agents $i$ and $b^{t+1}$ are in the same group; and the last inequality follows from Invariant~\ref{invariant:least-spending}.
			
			\item Similarly, if $i = b^{t+1}$, we have
			\begin{equation}
				\hat{p}^{t+1}_{b^{t+1}} \leq \hat{p}^{t'+1}_{b^{t+1}} \leq \frac{p^{t'}(X^{t'}_{b^{t+1}})}{w_{b^{t+1}}} \leq \frac{p^{t+1}(X^{t+1}_{l^{t+1}})}{w_{l^{t+1}}}, \label{equation:big-spender-not-as-least}
			\end{equation}
			where the first inequality holds because $b^{t+1}$ did not receive any item and is not raised in rounds $t'+1, t'+2, \ldots, t$; the second inequality holds because agent $b^{t+1}$ loses only one item during round $t'$; and the last inequality follows from Invariant~\ref{invariant:least-spending}.
		\end{itemize}
		
		Therefore we have a contradiction since $b^{t+1}$ strongly envies $l^{t+1}$.
		
		It remains to consider the case that $b^{t+1}$ has been raised when round $t+1$ begins, and prove statement (1).
		Assume otherwise and let $t' \leq t$ be the last round in which $b^{t+1}$ is identified as the least spender.
		If $b^{t+1}$ has been raised when round $t'$ begins, then Inequality~\eqref{equation:big-spender-not-as-least} remains valid, and we have a contradiction.
		If $b^{t+1}$ is not raised when round $t'$ begins, then $b^{t+1}$ can not be the big spender or raised in rounds $t'+1, t'+2, \ldots, t$ because $X^{t'+1}_{b^{t+1}} \nsubseteq X^0_{b^{t+1}}$, by Invariant~\ref{invariant:big-spender} and~\ref{invariant:raised-group}.
		Therefore we also have a contradiction as we assumed that $b^{t+1}$ has been raised when round $t+1$ begins. 
	\end{proof}

	\begin{lemma}
		The Price Invariant (Invariant~\ref{invariant:price}) is maintained at the end of round $t$.
	\end{lemma}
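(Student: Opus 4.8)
The plan is to verify the two halves of the Price Invariant separately: first that every price stays in $\{1,k\}$, and then that the only agents whose weighted spending changes during round $t$ are the big spender, the least spender, and agents raised during the round.

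For the price half, the only operation in round $t$ that changes any price is the raising of the group $N_r$ containing $b$, which by Lemma~\ref{lemma:b-in-N-r*} equals $N_{r^*}$. It therefore suffices to show that every item currently held by an agent of $N_{r^*}$ has price $1$; raising then turns exactly these into price-$k$ items while leaving all other prices (which lie in $\{1,k\}$ by the Price Invariant at the start of the round) untouched. First, $r^*\leq R-1$: otherwise $b\in N_R$, so by Invariant~\ref{invariant:raised-group} the groups $N_1,\ldots,N_{R-1}$ are all raised, hence the least spender $l$ is raised, and then Lemma~\ref{lemma:existence-b-i-l} forces $b$ to be raised, contradicting $b\in U$. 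Next we claim $X_i=X^0_i$ for every $i\in N_{r^*}$: unraised agents never lose an item by Invariant~\ref{invariant:raised-group}, and by Invariant~\ref{invariant:big-spender} (applied to the unraised big spender $b\in N_{r^*}$) no agent of $N_{r^*}$ has ever been a least spender; since an unraised agent gains an item only by being the least spender in the ``$l\in U$'' branch, or by being the special agent of the ``$l\notin U$'' branch --- and the latter already presupposes that this agent holds an item of $X^0_l$, i.e.\ that it gained something in an earlier round --- considering the first round in which a given agent receives an item shows that an agent that was never a least spender never receives one. Hence $X_i=X^0_i$ for $i\in N_{r^*}$; since $r^*\leq R-1$, Corollary~\ref{corollary:property-last-group} gives $X^0_i\cap M^+=\emptyset$, so every item of $X^0_i$ lies in $M^-$ and had initial price $1$, and as $i$ was never raised and never lost (hence never re-acquired) these items, each still has price $1$.

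For the spending half, the bundles modified in round $t$ are $X_b$ (which loses one item), $X_l$ (which gains one item), and, in the ``$l\notin U$'' branch, the bundle $X_i$ of the unraised agent $i$ from Lemma~\ref{lemma:existence-b-i-l}, which gains $e_1$ and loses $e_2$. The big and least spenders are exempt by definition of the invariant, and every agent of the raised group $N_{r^*}$ is exempt as ``raised during this round''. For the agent $i$ of the ``$l\notin U$'' branch we must check the spending is genuinely unchanged, since $i$ is unraised and distinct from both $b$ (raised, by Lemma~\ref{lemma:existence-b-i-l}) and $l$ (not in $U$): here $e_1\in X_b$ with $b$ raised gives $p(e_1)=k$, and $p(e_2)=k$ as established in the proof of the Equilibrium Invariant, so $p(X_i)$ --- and hence $p(X_i)/w_i$ --- does not change. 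Finally, any other agent $j$ has its bundle untouched by the reallocations, and none of its items is among those whose price is raised, since only the items of $\bigcup_{m\in N_{r^*}}X_m$ are raised and bundles are disjoint; so $j$'s spending is unchanged. Combining the two halves, the Price Invariant holds at the end of round $t$.

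The step I expect to be the main obstacle is the claim in the price half that no agent of $N_{r^*}$ has ever received an item: it requires unwinding the two mechanisms by which an unraised agent can gain an item and observing that the ``transfer along an MPB path'' mechanism can never be an agent's first acquisition, because that branch already presupposes the agent holds an item originally belonging to the least spender. The remainder is bookkeeping over which bundles and prices the round touches.
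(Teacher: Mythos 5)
Your proof is correct and follows essentially the same two-part structure as the paper's: first showing every item held by the group being raised has price $1$ (so raising keeps prices in $\{1,k\}$), then checking that only $b$, $l$, and raised agents change spending, with the intermediate agent of the ``$l\notin U$'' branch neutralized by $p(e_1)=p(e_2)=k$. You supply somewhat more detail than the paper (the induction showing a never-least-spender never receives an item, and deriving $p(e)=1$ via $e\in M^-$ and Corollary~\ref{corollary:property-last-group} rather than via the MPB identity $p(e)=c_i(e)/\alpha_i$), but the underlying argument is the same.
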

	\begin{proof}
		We first show that $p(e) \in \{1,k\}$ for all $e\in M$ when round $t$ ends.
		Suppose $N_r$ is raised during round $t$.
		By Invariant~\ref{invariant:big-spender}, for all $i\in N_r$, agent $i$ did not receive any item before round $t$, i.e. $X_i \subseteq X^0_i$.
		By Invariant~\ref{invariant:raised-group}, for all $i\in N_r$ we have $\alpha_i = 1$, which implies that for all $e\in X_i \subseteq \MPB_i$, $p(e) = \frac{c_i(e)}{\alpha_i} = 1$.
		Hence raising the price of $e$ does not violate the invariant.
		
		Next, we show that for all agent $i$ that is not raised, nor the big or least spender, the spending of $i$ does not change during round $t$.
		If $l$ is not raised when round $t$ begins, then the algorithm reallocates an item from $b$ to $l$, and the statement trivially holds.
		If $l$ is already raised when round $t$ begins, then it suffices to consider that case when the reallocate happens following path $b \to i \to l$.
		By Lemma~\ref{lemma:existence-b-i-l}, $i$ (which is not raised) receives an item $e_1 \in X_b\cap \MPB_i$ and loses an item $e_2 \in X_i \cap X^0_l$.
		Since every item our algorithm reallocates is raised, we have $p(e_1) = p(e_2) = k$, which implies that the spending of $i$ remains unchanged after updating $X_i \gets X_i + e_1 - e_2$.
	\end{proof}
	
	Given that all invariants are maintained, to prove Theorem~\ref{the:wef1+po-bivalued}, it suffices to argue that the algorithm terminates in a polynomial number of rounds.
	
	\begin{lemma}
		Algorithm~\ref{alg:ef1andpo} terminates in polynomial time.
	\end{lemma}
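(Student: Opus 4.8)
The plan is to prove termination by bounding the number of iterations of the while loop of Algorithm~\ref{alg:ef1andpo} by a polynomial in $n,m$ and $k$, and by observing that each iteration can be carried out in $O(nm)$ time; together these give a polynomial running time.

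First I would classify the rounds. Call a round a \emph{raising round} if the price of some group is raised during it, and a \emph{non-raising round} otherwise. By Lemma~\ref{lemma:b-in-N-r*} a raising round always raises the group $N_{r^*}$, and (as shown when proving that Invariant~\ref{invariant:raised-group} is maintained) each raising round increases $r^*$ by exactly one. Since $r^*$ never exceeds $R$, and $N_R$ is never raised by Corollary~\ref{corollary:property-last-group}, there are at most $R-1\le n-1$ raising rounds. These split the execution into at most $R\le n$ maximal blocks of consecutive non-raising rounds, which I call \emph{phases}; within a phase the price vector $\bp$ is constant (the only price changes are the raises) and $p(e)\in\{1,k\}$ for every $e$ by Invariant~\ref{invariant:price}, so $p(M)\le mk$.

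Next I would bound the number of rounds inside a fixed phase. In every round the big spender $b$ loses exactly one item, namely the one it transfers away (in either branch of the conditional). Writing $\hat p_i=(p(X_i)-\max_{e\in X_i}p(e))/w_i$, which coincides with the definition $\min_{e\in X_i}\{p(X_i-e)/w_i\}$, removing one item from a bundle with $|X_i|\ge 2$ decreases $p(X_i)-\max_{e\in X_i}p(e)$ by at least $1$ (if the removed item is a maximizer, the value drops by the next-largest price, which is at least $1$; otherwise it drops by the removed item's price, again at least $1$), so $\hat p_i$ drops by at least $1/w_i$; and a big spender has $\hat p_b>p(X_l)/w_l\ge 0$, so indeed $|X_b|\ge 2$. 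I then want to argue that, for a fixed agent $i$, $\hat p_i$ is non-increasing between any two consecutive rounds of the phase in which $i$ is the big spender. This is where the earlier invariants do the work: by Invariant~\ref{invariant:big-spender}, if $i$ is the big spender in the later of two such rounds then $i$ has never been a least spender up to that round, so in the rounds strictly in between $i$ never receives a net new item (only the least spender does), never loses an item (only the big spender does), and whenever $i$ serves as the intermediate agent of a transfer $b\to i\to l$ it exchanges one price-$k$ item for another, leaving $p(X_i)$ and $\max_{e\in X_i}p(e)$, hence $\hat p_i$, unchanged; prices are frozen within the phase, so nothing else changes $\hat p_i$ either. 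Consequently the values of $\hat p_i$ at the starts of the big-spender rounds of $i$ within the phase form a strictly decreasing sequence with consecutive gaps at least $1/w_i$, all lying in $(0,p(M)/w_i]\subseteq(0,mk/w_i]$. Hence $i$ is the big spender $O(mk)$ times per phase, each phase has $O(nmk)$ non-raising rounds, and the total number of rounds is $O(n^2mk)$.

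Finally, a single round can be carried out in $O(nm)$ time: computing all spendings $p(X_i)/w_i$ and all $\hat p_i$ takes $O(nm)$; identifying $b$ and $l$ takes $O(n)$; locating the intermediate agent $i$ of Lemma~\ref{lemma:existence-b-i-l} and the two transferred items when $l$ is raised, or the transferred item of $X_b\subseteq\MPB_l$ from Lemma~\ref{lemma:X-b-in-MPB-l} when $l$ is not raised, takes $O(nm)$; and the reallocation and the price raise take $O(m)$. Therefore Algorithm~\ref{alg:ef1andpo} terminates after $O(n^2mk)$ rounds and runs in polynomial time. The main obstacle is the monotonicity claim in the previous paragraph, that $\hat p_i$ does not increase between two big-spender rounds of $i$: it is precisely the place where one must combine Invariant~\ref{invariant:big-spender} (to rule out $i$ ever being a least spender in between, hence to rule out $i$ gaining items), the exact form of a transfer along a path $b\to i\to l$ (the intermediate agent is $\hat p$-neutral because both transferred items have price $k$), and Invariant~\ref{invariant:price} (prices frozen within a phase), and making sure these three ingredients really line up is the crux of the argument.
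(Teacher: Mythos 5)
Your argument is correct but takes a genuinely different route from the paper's. You bound the number of rounds by counting \emph{big-spender} appearances: within each of the at most $n$ price-constant phases, every time agent $i$ is the (strongly envious) big spender, $\hat{p}_i$ drops by at least $1/w_i$ (since all prices lie in $\{1,k\}$) and never increases in between, so $i$ is the big spender $O(mk)$ times per phase, giving $O(n^2mk)$ rounds overall. The paper instead counts \emph{least-spender} appearances: each time $i$ is the least spender, $|X_i|$ grows by one, and by Invariant~\ref{invariant:big-spender} agent $i$ never again serves as a big spender (so never loses an item; intermediate transfers are size-neutral) after its first least-spender appearance; hence $i$ is the least spender at most $m$ times and the total number of rounds is at most $nm$. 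The paper's count is both simpler --- it needs no phase decomposition and no monotonicity argument for $\hat{p}_i$ --- and quantitatively stronger: your bound carries a factor of $k$, which is only pseudo-polynomial if $k$ is encoded in binary. (The same blemish appears in the paper's own analysis of Algorithm~\ref{alg:leximin}, so this is arguably within the paper's standards, but for the present lemma the $k$ factor is entirely avoidable.) Your supporting claims do check out: a non-terminating big spender has $\hat{p}_b>0$ and hence $|X_b|\geq 2$, removing any item from such a bundle decreases $p(X_b)-\max_{e\in X_b}p(e)$ by at least $1$, the intermediate agent in a $b\to i\to l$ transfer swaps two price-$k$ items and so is $\hat{p}$-neutral, and the Big Spender Invariant rules out $i$ gaining items as a least spender between two of its big-spender rounds.
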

	\begin{proof}
		It can be verified that each round of Algorithm~\ref{alg:ef1andpo} finishes in $O(m+n)$ time.
		Therefore it suffices to argue that the algorithm terminates after a polynomial number of rounds.	
        We consider the least spenders across different rounds and we focus on the appearances of an arbitrary agent $i$.
        By the design of our algorithm, in every round in which $i$ is the least spender, the size of $X_i$ increases by $1$.
        Moreover, by Invariant~\ref{invariant:big-spender}, agent $i$ would not be a big spender after the first time $i$ is identified as a least spender.
        Hence $i$ can be identified as a least spender at most $m$ times because each of its appearance (as a least spender) increases its bundle size by one.
        Therefore, the total number of rounds is at most $nm$, which implies that the algorithm terminates in $O((m+n)nm)$ time.
	\end{proof}
	
	\section{Price of Fairness} \label{sec:pof}
	
	In this section we focus on the price of fairness, which is introduced in~\cite{journals/ior/BertsimasFT11,journals/mst/CaragiannisKKK12} to capture the efficiency loss due to the fairness constraints.
	As in previous works~\cite{conf/atal/SunCD21,journals/mst/BeiLMS21,conf/www/0037L022}, we assume that the cost functions are normalized, i.e. $c_i(M) = 1$ for all $i\in N$.
	We use $\sc(\bX) = \sum_{i\in N} c_i(X_i)$ to denote the social cost of allocation $\bX$.
	In this section we analyze the price of fairness (PoF) for WEF1.	
	Given an instance $\mathcal{I}$, the PoF with respect to WEF1 is the ratio between the minimum social cost of WEF1 allocations and the (unconstrained) optimal social cost.
	
	\begin{definition}[PoF for WEF1]
		The price of fairness for WEF1 on a given instance $\mathcal{I}$ is defined as
		\begin{equation*}
			\POF(\mathcal{I}) = \min_{\text{WEF1 allocation } \bX} \left\{ \frac{\sc(\bX)}{\opt (\mathcal{I})} \right\},   
		\end{equation*}
		where $\opt(\mathcal{I})$ denotes the unconstrained optimal social cost.
		The price of fairness for WEF1 is defined as the maximum price of fairness  over all instances, e.g., $\POF = \sup_{\mathcal{I}} \{ \POF(\mathcal{I}) \}$.
	\end{definition}
	
	For the unweighted setting, Sun et al.~\cite{conf/atal/SunCD21} have shown that the price of EF1 is unbounded when $n\geq 3$.
	For $n=2$, the price of EF1 is $\frac{5}{4}$ and there exists an algorithm that computes an EF1 allocation $\bX$ achieving a social cost $\sc(\bX) \leq \frac{5}{4}\cdot \opt(\mathcal{I})$ for any given instance $\mathcal{I}$ on two agents.
	In this section, we consider the price of fairness for WEF1 (price of WEF1).
	The result of Sun et al.~\cite{conf/atal/SunCD21} implies that the price of WEF1 is unbounded for $n\geq 3$.
	Thus we consider the case when $n=2$ and show that the price of WEF1 is $(4+\alpha)/4$, where $\alpha = \frac{\max\{w_1, w_2\}}{\min\{w_1, w_2\}}\geq 1$ is the weight ratio between the two agents. 
	Note that the ratio coincides with $\frac{5}{4}$ in the unweighted setting when $\alpha = 1$.
	
	\begin{theorem}\label{theorem:pof-wef1}
		The price of WEF1 is $(4+\alpha)/4$ for instances with two agents and $\frac{\max\{w_1, w_2\}}{\min\{w_1, w_2\}} = \alpha$.
	\end{theorem}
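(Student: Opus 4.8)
The plan is to prove matching upper and lower bounds on $\POF(\mathcal{I})$ for two agents. Throughout, write $w_1 \le w_2$ without loss of generality, so $\alpha = w_2/w_1$ and $w_1 + w_2 = 1$. For the upper bound, I would describe an explicit algorithm that, given any two-agent instance, produces a WEF1 allocation $\bX$ with $\sc(\bX) \le \frac{4+\alpha}{4}\cdot\opt(\mathcal{I})$. The natural starting point is an optimal allocation $\bX^\opt$, in which each item $e$ goes to the agent minimizing $c_i(e)$; if $\bX^\opt$ is already WEF1 we are done. Otherwise some agent strongly envies the other even after removing her costliest item. I would then move items one at a time from the envious agent's bundle to the other, always moving the item whose removal most helps (or, following the structure of earlier sections, processing items in a carefully chosen order), stopping as soon as WEF1 is achieved. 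The key quantitative claim is that the last transfer — the one that tips the allocation into WEF1 — costs at most a controlled amount relative to $\opt$, and this is where the factor $\frac{4+\alpha}{4}$ emerges: bounding the cost increase of the single ``overshoot'' item against $\opt(\mathcal{I})$, using $c_i(M)=1$ and the weight ratio $\alpha$.

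For the lower bound, I would exhibit a family of instances on two agents with weights $w_1,w_2$ (with $w_2/w_1=\alpha$) on which every WEF1 allocation has social cost at least $\frac{4+\alpha}{4}-\varepsilon$ times the optimum. The standard construction in this line of work (cf.\ Sun et al.~\cite{conf/atal/SunCD21} for $\alpha=1$) uses a small number of items — likely two or three — with costs chosen so that the unique efficient allocation badly violates WEF1 for the heavier agent, while any WEF1 allocation is forced to hand an expensive item to the ``wrong'' agent. I would pick the item costs as functions of $\alpha$ (introducing a parameter $\epsilon>0$ to be sent to $0$), verify that the optimal social cost is some value $\opt$, enumerate the WEF1 allocations (there are only a constant number with two agents and few items), and check that the cheapest of them has social cost $\frac{4+\alpha}{4}\opt$ in the limit. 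The WEF1 constraint $c_i(X_i - e)/w_i \le c_i(X_j)/w_j$ is what does the work here: the asymmetric weights let the lighter agent tolerate less cost, forcing the heavier agent to absorb an inefficient item.

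The main obstacle I anticipate is the upper-bound analysis: one must argue that the greedy/sequential transfer process actually terminates at a WEF1 allocation whose social cost is bounded by $\frac{4+\alpha}{4}\opt$, rather than merely at \emph{some} WEF1 allocation. This requires (i) showing WEF1 is reached before too much damage is done — i.e.\ controlling the allocation at the exact moment the envy inequality first holds — and (ii) handling the asymmetry between ``agent $1$ envies agent $2$'' and ``agent $2$ envies agent $1$'', since the two directions scale differently with $\alpha$. I expect the cleanest route is a case analysis on which agent is the envious one in $\bX^\opt$ and on whether the critical item is ``large'' or ``small'' to each agent, bounding in each case the ratio $\sc(\bX)/\opt$ by $\frac{4+\alpha}{4}$ via elementary inequalities using $c_i(M)=1$. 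The lower-bound construction, by contrast, should be routine once the right item costs are guessed — the difficulty there is only in reverse-engineering the costs so that the two bounds meet exactly.
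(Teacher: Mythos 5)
Your two-part plan (a hard instance for the lower bound, an explicit algorithm for the upper bound) matches the paper's structure, but both halves are left at the level of intention, and the upper-bound sketch as written would not close. The missing idea is the \emph{adjusted-winner ordering}: the paper sorts the items by the cost ratio $\frac{c_1(e)}{c_2(e)}$ and cuts the sorted list at the maximum index $f$ with $\frac{c_2(R(f+1))}{w_2} > \frac{c_2(L(f-1))}{w_1}$, giving $X_1 = L(f)$, $X_2 = R(f+1)$. This ordering is used twice and cannot be replaced by your greedy ``move the item whose removal most helps'' rule. First, it is what guarantees the \emph{receiving} agent stays WEF1 after absorbing extra chores, via the mediant inequality $\frac{c_1(L(f-1))}{c_2(L(f-1))} \le \frac{c_1(e_f)}{c_2(e_f)} \le \frac{c_1(X_2)}{c_2(X_2)}$; with an arbitrary transfer order you have no control over the reverse direction of envy. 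Second, it is what bounds the efficiency loss: because the items handed to agent $1$ are exactly those where the two agents' costs are closest, one gets $c_1(X_1) \le 1 - \frac{c_1(O_2)}{c_2(O_2)}\cdot c_2(X_2)$. Your proposed mechanism for where $\frac{4+\alpha}{4}$ comes from --- bounding the cost of a single ``overshoot'' item against $\opt(\mathcal{I})$ --- is not how the constant arises and would not give it: many items may cross over, and the bound is a global optimization. Setting $A = c_1(O_1)$, $B = c_2(O_2)$, using the stopping rule to get $c_2(X_2) > \frac{w_2}{w_1}\cdot c_2(O_1)$, one maximizes $\frac{\sc(\bX)}{A+B}$ over $A$ and $B$ (the maximum is at $A=0$, and completing a square in $1/B$ yields $1+\frac{w_2}{4w_1} \le \frac{4+\alpha}{4}$). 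None of this is recoverable from ``control the allocation at the moment the envy inequality first holds.''

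The lower bound is also not actually established: you assert that reverse-engineering the costs is routine, but the tightness of the theorem rests entirely on producing them. The paper's instance has three items with $c_1 = (0, \tfrac12, \tfrac12)$ and $c_2 = \bigl(\tfrac{\alpha}{\alpha+2}-2\epsilon,\ \tfrac{1}{\alpha+2}+\epsilon,\ \tfrac{1}{\alpha+2}+\epsilon\bigr)$ and weights $w_1 = \tfrac{\alpha}{1+\alpha}$, $w_2 = \tfrac{1}{1+\alpha}$; the optimum $X_1=\{e_1\}$, $X_2=\{e_2,e_3\}$ fails WEF1 for agent $2$, forcing any WEF1 allocation to give agent $1$ one of the half-cost items, which yields the ratio $\frac{4+\alpha}{4}+O(\epsilon)$. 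Without an instance of this kind in hand, your argument proves at most that the price of WEF1 is \emph{some} quantity, not that it equals $\frac{4+\alpha}{4}$.
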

	
	To prove the theorem, we first provide an instance $\mathcal{I}$ for which any WEF1 allocation has social cost at least $\frac{4+\alpha}{4} \cdot \opt(\mathcal{I})$; then we present an algorithm that computes a WEF1 allocation with social cost at most $\frac{4+\alpha}{4} \cdot \opt(\mathcal{I})$, for any given instance $\mathcal{I}$.
	
	\smallskip
	
	Consider the following instance $\mathcal{I}$ with $w_1 = \frac{\alpha}{1+\alpha}$ and $w_2 = \frac{1}{1+\alpha}$, for some $\alpha\geq 1$.
	The cost functions of the two agents are shown in Table~\ref{tab:hard-pof}, where $\epsilon>0$ is arbitrarily small.
	
	\begin{table}[htbp]
		\centering
		\begin{tabular}{c|c|c|c}
			&  $e_1$ & $e_2$ & $e_3$ \\ \hline
			Agent 1   & $0$ & $\cfrac{1}{2}$ & $\cfrac{1}{2}$ \\
			Agent 2   & $\cfrac{\alpha}{\alpha+2}-2\epsilon$ & $\cfrac{1}{\alpha+2}+\epsilon$ & $\cfrac{1}{\alpha+2}+\epsilon$
		\end{tabular}
		\caption{Hard instance for lower bounding the price of WEF1.}
		\label{tab:hard-pof}
	\end{table}
	
	For the given instance the optimal social cost is $\opt(\mathcal{I}) = \frac{2}{\alpha+2} + 2\epsilon$, achieved by the allocation $X_1 = \{e_1\}$, $X_2 = \{e_2, e_3\}$.
	However, the allocation is not WEF1 since for all $e\in X_2$, we have
	\begin{equation*}
		\frac{c_2(X_2 - e)}{w_2} = \frac{\alpha+1}{\alpha+2} + \epsilon\cdot(\alpha+1) > \frac{\alpha+1}{\alpha+2} - 2\epsilon\cdot(1+\frac{1}{\alpha}) = \frac{c_2(X_1)}{w_1}.
	\end{equation*}
	
	Hence, any WEF1 allocation $\bX$ allocates at most one item in $\{e_2, e_3\}$ to agent $2$, leading to
	\begin{equation*}
		sc(\bX) \geq \frac{1}{2} + \frac{1}{\alpha+2} + \epsilon.
	\end{equation*}
	
	Therefore the price of WEF1 for instances with two agents and $\frac{\max\{w_1, w_2\}}{\min\{w_1, w_2\}} = \alpha$ is 
	\begin{equation*}
		\POF \geq \frac{\frac{1}{2} + \frac{1}{\alpha+2} + \epsilon}{\frac{2}{\alpha+2} + 2\epsilon} = \frac{\alpha+4+2\epsilon\cdot(\alpha+2)}{4+4\epsilon\cdot(\alpha+2)} = \frac{4+\alpha}{4} + O(\epsilon).
	\end{equation*}
	
	\smallskip
	
	Next, we present an algorithm that computes WEF1 allocations achieving the price of WEF1.
	
	\begin{lemma}\label{lemma:upperbound-price}
		There exists an algorithm that computes a WEF1 allocation with social cost at most $\frac{4+\alpha}{4}\cdot \opt(\mathcal{I})$, for any given instance $\mathcal{I}$ with two agents having weight ratio $\frac{\max\{w_1, w_2\}}{\min\{w_1, w_2\}} = \alpha$.
	\end{lemma}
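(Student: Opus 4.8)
\medskip
\noindent\textbf{Proof plan.}
For two agents it is convenient to rewrite WEF1 using $c_i(M)=1$ and $c_i(X_j)=1-c_i(X_i)$: agent $i$ is WEF1 towards the other agent iff $X_i=\emptyset$ or $c_i(X_i)\le w_i+w_j\cdot\max_{e\in X_i}c_i(e)$, equivalently (deleting a $c_i$-costliest item $e^*$ from $X_i$) $c_i(X_i-e^*)\le w_i\cdot c_i(M-e^*)$. The algorithm first builds a social-cost optimal allocation $\bX^{\opt}$ by assigning each item $e$ to an agent in $\argmin_i c_i(e)$, so that $\opt(\cI)=\sum_{e\in M}\min\{c_1(e),c_2(e)\}=c_1(X^{\opt}_1)+c_2(X^{\opt}_2)$ and every $e\in X^{\opt}_i$ has $c_i(e)\le c_j(e)$. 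If $\bX^{\opt}$ is WEF1 we return it (ratio $1$). Otherwise let $p$ be a violating agent, $q$ the other one, and repeatedly move from $X_p$ to $X_q$ the item of largest $c_p$-cost still in $X_p$, stopping as soon as $p$ satisfies the WEF1 inequality above; output the current allocation.

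For correctness I would first show that $\bX^{\opt}$ has at most one violating agent: if $i$ violates then $c_i(X^{\opt}_i)>w_i$, so if both $1$ and $2$ violated we would get $c_1(X^{\opt}_2)=1-c_1(X^{\opt}_1)<w_2<c_2(X^{\opt}_2)\le c_1(X^{\opt}_2)$, a contradiction, where the last step uses that each item of $X^{\opt}_2$ is assigned to its cheaper agent. Hence only $p$ must be repaired. Moreover $X_p$ is never emptied during the transfer phase, since a single-item bundle is always WEF1 ($c_p(e)\le 1$), so the process is well defined. Next I would show that agent $q$ remains WEF1 throughout the transfer phase, up to and including the round in which $p$ becomes WEF1, so the output is WEF1; each transfer strictly decreases $p$'s WEF1 deficit, while the moved items all belong to $X^{\opt}_p$ (so their $c_q$-costs sum to at most $1-c_q(X^{\opt}_q)\le 1$, and at most one of them has $c_q$-cost exceeding $1/2$), which caps how much $c_q(X_q)$ can grow before $p$ is repaired. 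Termination within $|X^{\opt}_p|\le m$ transfers and the polynomial running time are then immediate.

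The crux, and the step I expect to be the main obstacle, is the efficiency guarantee $\sc(\bX)\le\frac{4+\alpha}{4}\opt(\cI)$, i.e.\ that the extra cost $\Delta=\sum_{f\text{ moved}}\bigl(c_q(f)-c_p(f)\bigr)$ is at most $\tfrac{\alpha}{4}\opt(\cI)$. Writing $f_1,\dots,f_s$ for the moved items in order of removal and $f_{s+1}$ for the $c_p$-costliest item still in $X_p$ when we stop, the stopping rule yields $c_p(X^{\opt}_p)-\sum_{j<s}c_p(f_j)>w_p+w_q c_p(f_s)$ and $c_p(X^{\opt}_p)-\sum_{j\le s}c_p(f_j)\le w_p+w_q c_p(f_{s+1})$; combining these with $c_p(f_j)\le c_q(f_j)\le 1$, $\sum_{j\le s}c_q(f_j)\le 1-c_q(X^{\opt}_q)$, $w_q/w_p\le\alpha$, and $\opt(\cI)=c_p(X^{\opt}_p)+c_q(X^{\opt}_q)$, I would bound $\Delta/\opt(\cI)$ by a short case analysis (essentially $s=1$ versus $s\ge 2$, together with the magnitude of $c_p(f_1)$), the worst case reproducing the hard instance in which $p$ holds two items of equal $c_p$-cost and the moved one costs $\tfrac12$ to $q$. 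Since the constant $\tfrac{4+\alpha}{4}$ is tight, this last computation has no slack and will need to be carried out carefully; it may also be necessary to choose, among equally good options, the moved item minimizing $c_q(f)-c_p(f)$ to make the extremal bound go through.
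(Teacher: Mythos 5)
Your high-level skeleton (start from the social-cost-minimizing allocation, observe that at most one agent can violate WEF1 there, and transfer items from that agent until WEF1 is restored) matches the paper's weighted adjusted winner algorithm, and your argument that only one agent can violate is correct. However, the item-selection rule you propose --- repeatedly move the item of largest $c_p$-cost --- is the wrong rule, and the efficiency bound you flag as ``the crux'' genuinely fails for it. The paper instead sorts the items by the cost-ratio $c_1(e)/c_2(e)$ and moves a prefix of the overloaded agent's items in increasing order of this ratio, i.e., it always transfers the items whose reallocation incurs the smallest \emph{relative} efficiency loss; this is exactly what makes the tight constant $\frac{4+\alpha}{4}$ go through (via the inequality $\frac{c_1(L(f-1))}{c_2(L(f-1))} \le \frac{c_1(X_2)}{c_2(X_2)}$, which has no analogue under your rule).

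Here is a concrete counterexample to your rule with $w_1=w_2=\tfrac12$ (so $\alpha=1$ and the target ratio is $\tfrac54$). Take five items with
\begin{equation*}
c_1 = (0,\ 0.5,\ 0.25,\ 0.25,\ 0), \qquad c_2 = (0.1,\ 0.25,\ 0.2,\ 0.2,\ 0.25).
\end{equation*}
The optimal allocation is $O_1=\{e_1,e_5\}$, $O_2=\{e_2,e_3,e_4\}$ with $\opt(\mathcal{I})=0.65$; agent $2$ violates WEF1 there since $c_2(O_2-e_2)=0.4 > 0.35 = c_2(O_1)$. Your rule moves $e_2$, the unique item of largest $c_2$-cost in $O_2$ (so the tie-breaking refinement you mention does not apply), after which agent $2$ is WEF1 and the algorithm stops with social cost $c_1(\{e_1,e_5,e_2\})+c_2(\{e_3,e_4\}) = 0.5+0.4 = 0.9$, a ratio of $0.9/0.65\approx 1.385 > \tfrac54$. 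Moving $e_3$ (the smallest cost-ratio item, as the paper's algorithm effectively does) also restores WEF1 but yields social cost $0.25+0.45=0.7$, ratio about $1.077$. The problem is that ``largest $c_p$-cost'' maximizes the relief to $p$ per item moved but ignores the loss $c_q(f)-c_p(f)$, which is what the $\frac{4+\alpha}{4}$ bound must control; your partial bounds ($\sum c_q(f_j)\le 1-c_q(X_q^{\opt})$, at most one moved item with $c_q$-cost above $\tfrac12$) are not strong enough to recover it, as the example shows. To fix the proof you would need to change the selection rule to transfer items in increasing order of $c_q(e)/c_p(e)$, at which point you essentially arrive at the paper's Algorithm and its analysis via the quantities $A=c_1(O_1)$, $B=c_2(O_2)$.
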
	
	\begin{proof}
		We consider the weighted adjusted winner algorithm (see Algorithm~\ref{alg:WAW}) for two agents.
		We index the items in non-decreasing order of their cost-ratios $\frac{c_1(e)}{c_2(e)}$, i.e.,
		\begin{equation*}
			\frac{c_1(e_1)}{c_2(e_1)} \le \frac{c_1(e_2)}{c_2(e_2)} \le \dots \le \frac{c_1(e_m)}{c_2(e_m)}.
		\end{equation*}
		
		Let $O_1 = \{ e\in M: c_1(e) < c_2(e) \}$ and $O_2 = \{ e\in M: c_1(e) \geq c_2(e) \}$. 
		Note that the allocation $\mathbf{O} = (O_1, O_2)$ minimizes the social cost, i.e., $\sc(\mathbf{O}) = \opt(\mathcal{I})$.
		If $\mathbf{O}$ is WEF1 then our algorithm terminates and outputs $\mathbf{O}$.
		Otherwise we compute a WEF1 allocation as follows.
		
		For all integer $t$, let $L(t) = \{ e_1,e_2,\ldots,e_t\}$ and $R(t) = \{ e_t, e_{t+1}, \ldots, e_m \}$.
		Note that $L(t)$ and $R(t)$ can be empty, e.g., when $t < 1$ or $t > m$.
		Let $f$ be the maximum index satisfying
		\begin{equation*}
			\frac{c_2(R(f+1))}{w_2} > \frac{c_2(L(f-1))}{w_1}.
		\end{equation*}
		
		We return the allocation $\bX$ with $X_1 = L(f)$ and $X_2 = R(f+1)$.
		
		\begin{algorithm}[htbp]
			\caption{Weighted Adjusted Winner Algorithm} \label{alg:WAW}
			\KwIn{Instance with two agents satisfying $\frac{c_1(e_1)}{c_2(e_1)} \le \frac{c_1(e_2)}{c_2(e_2)} \le \dots \le \frac{c_1(e_m)}{c_2(e_m)}$}
			let $O_1 = \{ e\in M: c_1(e) < c_2(e) \}$ and $O_2 = \{ e\in M: c_1(e) \geq c_2(e) \}$ \;
			\uIf{allocation $(O_1, O_2)$ is WEF1}{
				\KwOut{$\mathbf{O} = \{O_1, O_2\}$.}
			}
			\Else{
				assume w.l.o.g. that $\frac{c_1(O_1)}{w_1} \leq \frac{c_2(O_2)}{w_2}$ (otherwise we reverse the index of agents and items) \;
				find the maximum index $f$ such that $\frac{c_2(R(f+1))}{w_2} > \frac{c_2(L(f-1))}{w_1}$ \;
				$X_1 \gets L(f)$, $X_2 \gets R(f+1)$ \;
				\KwOut{$\bX = \{X_1, X_2\}$.}
			}
		\end{algorithm}
		
		In the following we show that the allocation $\bX$ is WEF1 and $\sc(\bX) \leq \frac{4+\alpha}{4}\cdot \sc(\mathbf{O})$.
		By definition of $f$ we know that agent $2$ is WEF1 towards agent $1$ because
		\begin{equation*}
			\frac{c_2(X_2 - e_{f+1})}{w_2} = \frac{c_2(R(f+2))}{w_2} \leq \frac{c_2(L(f))}{w_1} = \frac{c_2(X_1)}{w_1}.
		\end{equation*}
		
		We also have that agent $1$ is WEF1 towards agent $2$ because
		\begin{align*}
			\frac{c_1(X_1 - e_f)}{w_1} & = \frac{c_1(L(f-1))}{w_1} = \frac{c_1(L(f-1))}{c_2(L(f-1))} \cdot \frac{c_2(L(f-1))}{w_1}  \\
			& < \frac{c_1(L(f-1))}{c_2(L(f-1))} \cdot \frac{c_2(R(f+1))}{w_2} 
			\leq \frac{c_1(X_2)}{c_2(X_2)} \cdot \frac{c_2(R(f+1))}{w_2} = \frac{c_1(X_2)}{w_2},
		\end{align*}
		where the second inequality follows from (recall that the items are sorted in non-decreasing order of $\frac{c_1(e)}{c_2(e)}$)
		\begin{equation*}
			\frac{c_1(L(f-1))}{c_2(L(f-1))} = \frac{\sum_{t=1}^{f-1} c_1(e_t)}{\sum_{t=1}^{f-1} c_2(e_t)} \le \frac{c_1(e_{f})}{c_2(e_{f})} \le \frac{\sum_{t = f+1}^{m} c_1(e_t)}{\sum_{t = f+1}^{m} c_2(e_t)} = \frac{c_1(X_2)}{c_2(X_2)}.
		\end{equation*}
		
		Next, we prove that $\sc(\bX) = c_1(X_1) + c_2(X_2) \leq \frac{4+\alpha}{4}\cdot \sc(\mathbf{O})$.
		
		Recall that $O_1 = \{ e\in M: c_1(e) < c_2(e) \}$ and $O_2 = \{ e\in M: c_1(e) \geq c_2(e) \}$, which implies $c_1(O_1) \leq c_2(O_1)$ and $c_2(O_2) \leq c_1(O_2)$.
		Moreover, we assumed w.l.o.g. that $\frac{c_1(O_1)}{w_1} \leq \frac{c_2(O_2)}{w_2}$.
		For ease of notation in the following we use $A$ and $B$ to denote $c_1(O_1)$ and $c_2(O_2)$, respectively. 
		Then we have $\frac{A}{w_1} \leq \frac{B}{w_2}$ and $\sc(\mathbf{O}) = A + B$.
		Since the cost functions are normalized, we have
		\begin{equation*}
			c_1(O_2) = 1-A \quad \text{and} \quad c_2(O_1) = 1- B.
		\end{equation*}
		
		In the following, we upper bound $\sc(\bX)$ by a function of $A$ and $B$.
		
		\begin{claim} \label{claim:A-B-bound-f-at-right}
			We have $A \leq w_1$, $B \geq w_2$ and $O_1 \subsetneq X_1$.
		\end{claim}
		\begin{proof}
			By definition and assumption we have
			\begin{equation*}
				\frac{A}{w_1} \le \frac{B}{w_2} \leq \frac{c_1(O_2)}{w_2} = \frac{1-A}{w_2},
			\end{equation*}
			which implies $A \leq  w_1$. 
			The above inequality also implies that agent $1$ does not envy agent $2$ in allocation $\mathbf{O}$. 
			Since $\mathbf{O}$ is not WEF1, agent $2$ must envy agent $1$, i.e.,
			\begin{equation*}
				\frac{B}{w_2} > \frac{c_2(O_1)}{w_1} = \frac{1-B}{w_2},
			\end{equation*}
			which implies $B \geq w_2$.
			For the last property, if $O_1$ is not a proper subset of $X_1$, then we have $X_1 \subseteq O_1$ and $O_2\subseteq X_2$.
			Since agent $2$ is WEF1 towards agent $1$ in allocation $\bX$, agent $2$ should also be WEF1 towards agent $1$ in $\mathbf{O}$, which implies that $\mathbf{O}$ is WEF1 and is a contradiction.
		\end{proof}
		
		\begin{claim}
			We have $\sc(\bX) \leq 1-\frac{1-A-B}{B}\cdot \frac{w_2}{w_1} \cdot (1-B)$
		\end{claim}
		\begin{proof}
			Given Claim~\ref{claim:A-B-bound-f-at-right}, we have $O_1\subsetneq X_1$ and $X_2 \subsetneq O_2$.
			Therefore we have
			\begin{equation*}
				c_1(X_1) = 1-c_1(X_2) \le 1-\frac{c_1(O_2)}{c_2(O_2)}\cdot c_2(X_2),
			\end{equation*}
			where the inequality holds because $X_2$ is a proper subset of $O_2$ and it contains the items in $O_2$ with maximum cost-ratios.
			Furthermore, by definition of $f$ we have 
			\begin{equation*}
				c_2(X_2) = c_2(R(f+1))>\frac{w_2}{w_1}\cdot c_2(L(f-1)) \geq \frac{w_2}{w_1}\cdot c_2(O_1).
			\end{equation*}
			
			Putting the bounds together, we get
			\begin{equation*}
				\begin{aligned}
					c_1(X_1) + c_2(X_2) 
					& \leq 1-\frac{c_1(O_2)}{c_2(O_2)}\cdot c_2(X_2) + c_2(X_2) = 1- \left( \frac{c_1(O_2)}{c_2(O_2)}-1 \right) \cdot c_2(X_2)\\
					&\le 1- \left( \frac{c_1(O_2)}{c_2(O_2)}-1 \right) \cdot \frac{w_2}{w_1} \cdot c_2(O_1) = 1 - \frac{1-A-B}{B} \cdot \frac{w_2}{w_1} \cdot (1-B),
				\end{aligned}
			\end{equation*}
			where the second inequality holds because $c_1(O_2) \ge c_2(O_2)$ and $c_2(X_2) > \frac{w_2}{w_1} \cdot c_2(O_1)$.
		\end{proof}
		
		Given the above claim, letting $C = \frac{w_2}{w_1}$, we have
		\begin{equation*}
			\frac{\sc(\bX)}{\sc(\mathbf{O})} \le \frac{1-\frac{1-A-B}{B}\cdot C \cdot (1-B)}{A+B} 
			= \frac{C}{A+B} \cdot \left( \frac{1}{C}-\frac{1}{B} + 1 \right) + \frac{C(1-B)}{B}.
		\end{equation*}
		
		By Claim~\ref{claim:A-B-bound-f-at-right} we have $B \geq w_2$, which imples
		\begin{equation*}
			\frac{1}{C} - \frac{1}{B} + 1 = \frac{1}{w_2} - \frac{1}{B} \geq 0.
		\end{equation*}
		
		Hence the upper bound on $\frac{\sc(\bX)}{\sc(\mathbf{O})}$ is maximized when $A = 0$, which gives
		\begin{align*}
			\frac{\sc(\bX)}{\sc(\mathbf{O})} & \leq \frac{C}{B} \left( \frac{1}{C}-\frac{1}{B} + 1 \right) + \frac{C(1-B)}{B} = \frac{2C+1}{B} - \frac{C}{B^2} - C \\
			& = \left( \frac{2C+1}{2\sqrt{C}} \right)^2 - \left( \frac{2C+1}{2\sqrt{C}} - \frac{\sqrt{C}}{B} \right)^2 - C \\
			& \leq \left( \frac{2C+1}{2\sqrt{C}} \right)^2 - C = 1 + \frac{1}{4C} = 1 + \frac{w_2}{4\cdot w_1} \leq \frac{4+\alpha}{4},
		\end{align*}
		where the last inequality follows from the definition $\alpha = \frac{\max\{w_1,w_2\}}{\min\{w_1,w_2\}}$.
		Hence we have $\sc(\bX) \leq \frac{4+\alpha}{4}\cdot \sc(\mathbf{O})$, and the proof is complete.
	\end{proof}

	\section{Conclusion} \label{sec:conclusion}
	
	In this paper, we consider the fairness notion of weighted EF1 and try to paint a complete picture of WEF1 for the allocation of indivisible chores.
	We show that WEF1 allocations always exist for chores and propose a polynomial time algorithm to compute one, based on the weighted picking sequence algorithm.
	We further consider the picking sequences that satisfy other fairness notions, e.g. WEF$(x,y)$ with $x+y\geq 1$.
	We also consider allocations that are fair and efficient, by showing that WEF1 and PO allocations exist for bi-valued instances.
	Finally, we consider the price of fairness regarding WEF1 and provide a tight characterization of the price of WEF1 for two agents.
	
	Our work leaves many interesting problems open.
	For example, whether WEF1 allocations exist in the mixed manner (mixture of goods and chores) remains unknown.
	It is also unknown whether (weighted) EF1 and PO allocations exist for the allocation of chores, when agents have general additive cost functions.
	We believe that it would also be an interesting direction to explore the existence of WEF1 allocations when agents have cost functions beyond additive. 
	
	\bibliographystyle{abbrv}
	\bibliography{wef1}
	
	\newpage
	\appendix	
	
	\section{Analysis of Other Picking Sequence Algorithms}
	
	\subsection{Allocation of Goods}\label{ssec:wef1-goods}
	In this section, we apply our continuous perspective to the allocation of goods.
	We provide an alternative and simpler proof of Theorem 3.3 in~\cite{journals/teco/ChakrabortyISZ21}.
	In the allocation of goods, each agent has an additive valuation function $v_i: 2^M \to \bR^+ \cup \{0\}$, and agents want to maximize the utility of their bundles.
	'
	\begin{definition}[WEF1 for Goods]
		An allocation is \emph{weighted envy-free up to one good} (WEF1) if for any $i,j\in N$, there exists an item $e\in X_j$ such that
		\begin{equation*}
			\frac{v_i(X_i)}{w_i} \geq \frac{v_i(X_j-e)}{w_j}.
		\end{equation*}
	\end{definition}
	
	\begin{lemma}\label{lemma:wef1-goods}
		The weighted picking sequence protocol~\cite{journals/teco/ChakrabortyISZ21} computes WEF1 allocations for the allocation of goods in polynomial time. 
	\end{lemma}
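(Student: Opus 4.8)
The plan is to reuse the continuous perspective of Section~\ref{ssec:RWPS} almost verbatim, flipping every inequality to account for the fact that agents now maximize value and that the protocol uses the \emph{forward} sequence $(\sigma(1),\ldots,\sigma(m))$ directly (without reversing it) as the picking sequence. The key structural observation is that, since each agent always takes her most-valued unallocated good and the pool of available goods only shrinks over the rounds, the goods agent $i$ receives, written $X_i=\{e_1,\ldots,e_k\}$ in the order she picks them, satisfy $v_i(e_1)\geq v_i(e_2)\geq\cdots\geq v_i(e_k)$ --- the same monotonicity that the proof of Theorem~\ref{thm:compute-wef1} exploited for chores under the reversed sequence. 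Fixing two agents $i,j$, the good to be removed from the envied bundle will be $e'_1$, the \emph{first} good that agent $j$ picked.

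First I would set up the continuous picture as in Section~\ref{ssec:RWPS}: let $s_i$ be the size function of~\eqref{equation:s-i-t}, and define $\rho\colon(0,k/w_i]\to\bR^+$ by $\rho(\alpha)=v_i(e_z)$ for $\alpha\in((z-1)/w_i,z/w_i]$ (a non-increasing step function), and $\rho'\colon(0,k'/w_j]\to\bR^+$ by $\rho'(\alpha)=v_i(e'_z)$ for $\alpha\in((z-1)/w_j,z/w_j]$, recording agent $i$'s values for $j$'s goods in $j$'s picking order. Then
\[
\frac{v_i(X_i)}{w_i}=\int_{0}^{k/w_i}\rho(\alpha)\,d\alpha,\qquad
\frac{v_i(X_j-e'_1)}{w_j}=\int_{1/w_j}^{k'/w_j}\rho'(\alpha)\,d\alpha=\int_{0}^{(k'-1)/w_j}\rho'\Bigl(\alpha+\tfrac{1}{w_j}\Bigr)\,d\alpha ,
\]
so WEF1 of $i$ towards $j$ reduces to the single inequality $\int_{0}^{k/w_i}\rho\geq\int_{0}^{(k'-1)/w_j}\rho'(\cdot+\tfrac{1}{w_j})$ (the degenerate cases $X_i=\emptyset$ or $X_j=\emptyset$ being immediate, the latter because then agent $i$ does not envy $j$ at all).

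This inequality follows, exactly as in the chores case, from two lemmas. The size lemma asserts $(k'-1)/w_j\leq k/w_i$: at the last moment $t'$ at which $s_j$ begins to grow from $(k'-1)/w_j$, agent $j$ is the chosen agent, hence $s_j(t')\leq s_i(t')$, and so $(k'-1)/w_j=s_j(t')\leq s_i(t')\leq s_i(m)=k/w_i$; this is the mirror of Lemma~\ref{lemma:size-of-i-j} with the roles of $i$ and $j$ interchanged, because we now delete from the other bundle. The density lemma asserts $\rho(\alpha)\geq\rho'\bigl(\alpha+\tfrac{1}{w_j}\bigr)$ for every $\alpha\in(0,(k'-1)/w_j)$, and is the mirror of Lemma~\ref{lemma:density-of-i-j}: writing $\rho(\alpha)=v_i(e_p)$ with $\alpha\in((p-1)/w_i,p/w_i]$, and $\rho'\bigl(\alpha+\tfrac{1}{w_j}\bigr)=v_i(e'_{z+1})$ with $\alpha\in((z-1)/w_j,z/w_j]$, one shows that the round in which agent $i$ picks $e_p$ comes strictly before the round in which agent $j$ picks $e'_{z+1}$. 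Indeed, at the start of $j$'s round for $e'_{z+1}$ we have $s_j=z/w_j\geq\alpha$ and $j$ is the chosen agent there, so $s_i\geq z/w_j\geq\alpha$ at that moment, whereas $s_i$ first reaches $\alpha$ only strictly inside agent $i$'s $p$-th picking round; hence that round of $j$ has a larger index. Therefore $e'_{z+1}$ is still unallocated when $i$ picks $e_p$, and since $i$ takes her favourite available good, $v_i(e_p)\geq v_i(e'_{z+1})$. Combining the two lemmas with $\rho\geq 0$ chains the integrals as at the end of the proof of Theorem~\ref{thm:compute-wef1}, yielding WEF1 of $i$ towards $j$; since $i,j$ are arbitrary the allocation is WEF1, and the protocol clearly runs in $O(mn)$ time.

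I expect the density lemma to be the only delicate point: one must get the timing comparison in the correct direction for goods (so that it produces $v_i(e_p)\geq v_i(e'_{z+1})$ rather than the reverse) and must remove $e'_1$ from $X_j$ --- not an item of $X_i$ --- so that the $j$-side integral starts at $1/w_j$ and matches the shift appearing in the density lemma. Everything else is a routine transcription of the chores argument with inequalities reversed.
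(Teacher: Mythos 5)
Your proposal is correct and follows essentially the same route as the paper's proof in Appendix~\ref{ssec:wef1-goods}: the same continuous size functions, the same two claims (your size lemma is the paper's Claim~\ref{claim:k-and-k'} and your density lemma is Claim~\ref{claim:rho-and-rho'}, stated with the shift on the other side), and the same chaining of integrals, with the only cosmetic difference being that you anchor the timing comparison at the moment $s_i$ first reaches $\alpha$ rather than at the start of $j$'s consumption interval. No gaps.
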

	\begin{proof}
		Fix any two agents $i,j\in N$, we show that agent $i$ is WEF1 towards $j$.
		Let $X_i = \{e_1, \cdots, e_k\}$ and $X_j = \{e'_1, \cdots, e'_{k'}\}$ be the bundles agent $i$ and $j$ receive in the final allocation, respectively.
		Similar to our previous analysis, we assume the items are  ordered in increasing order of the index of rounds in which they are chosen.
		In the following we show that $\frac{v_i(X_i)}{w_i} \geq \frac{v_i(X_j-e'_1)}{w_j}$.
		
		We define a continuous non-increasing function $\rho: (0,k/w_i] \rightarrow \bR^+$ such that $\rho(\alpha) = v_i(e_z)$, for $\alpha\in \left(\frac{z-1}{w_i}, \frac{z}{w_i} \right]$, where $z\in \{1,2,\ldots,k\}$.
		Similarly, we define $\rho': (0, k'/w_j]  \rightarrow \bR^+$ be a continuous function: $\rho'(\alpha) = v_i(e'_z)$, for $\alpha\in \left(\frac{z-1}{w_j}, \frac{z}{w_j} \right]$, where $z\in \{1,2,\ldots,k'\}.$
		By definition of $\rho$ and $\rho'$, we have
		\begin{equation*}
			\frac{c_i(X_i)}{w_i} = \int_{0}^{\frac{k}{w_i}} \rho (\alpha) d\alpha, \quad \text{and} \quad
			\frac{c_i(X_j - e'_1 )}{w_j} = \int_{\frac{1}{w_j}}^{\frac{k'}{w_j}} \rho' (\alpha) d\alpha.
		\end{equation*}
		
		Next we establish two useful technical claims to show that $\frac{v_i(X_i)}{w_i} \geq \frac{v_i(X_j-e'_1)}{w_j}$.
		
		\begin{claim}\label{claim:k-and-k'}
			We have $k/w_i \ge (k'-1)/w_j$.
		\end{claim}
		\begin{proof}
			Consider the moment in time $t'$ when $s_j(t)$ starts to grow from $(k'-1)/w_j$, i.e. $t' = \max \{t :\; s_j(t) = (k'-1)/w_j\}$.
			At time $t'$, since $s_j(t')$ is chosen to grow, we must have that $s_j(t') \leq s_i(t')$.
			Therefore we have $k/w_i \geq s_i(t') \geq s_j(t') = (k'-1)/w_j$, as claimed.
		\end{proof}
		
		\begin{claim}\label{claim:rho-and-rho'}
			For all $\alpha \in \left(\frac{1}{w_j},\frac{k'}{w_j}\right]$, we have $
			\rho \left(\alpha - \frac{1}{w_j}\right) \ge \rho' (\alpha)$.
		\end{claim}
		\begin{proof}
			Fix any $\alpha$ and suppose that $\rho'(\alpha) = v_i(e'_z)$, i.e., $\alpha \in \left(\frac{z-1}{w_j},\frac{z}{w_j}\right]$.
			Let $t_2$ be the minimum such that $s_j(t_2) = \alpha$.
			Let $t^*$ be the maximum integer that is smaller than $t_2$.
			By definition, from time $t^*$ to $t^*+1$, $s_j(t)$ grows from $(z-1)/w_j$ to $z/w_j$ and $t_2 \in (t^*, t^*+1]$.
			Let $t_1$ be the minimum such that $s_i(t_1) = \alpha - 1/w_j$.
			By definition we have $\rho(\alpha - 1/w_j) = v_i(e_x)$, where $e_x$ is the item agent $i$ is consuming at time $t_1$.
			Since at time $t^*$, $s_j(t)$ is chosen to grow, we have $s_j(t^*) = (z-1)/w_j \leq s_i(t^*)$.
			Since $\alpha\in \left(\frac{z-1}{w_j},\frac{z}{w_j}\right]$, we have $\alpha - 1/w_j \leq (z-1)/w_j$.
			Recall that $t_1$ is the minimum such that $s_i(t_1) = \alpha-1/w_j$.
			Since $s_i(t^*) \geq (z-1)/w_j  \geq \alpha - 1/w_j$ and $s_i(t)$ is non-decreasing, we have $t_1 \leq t^*$.
			Since $t_2 \in (t^*, t^*+1]$, we have $t_1 \leq t^* < t_2$.
			In other words, the event that ``agent $i$ picks item $e_x$'' happens strictly earlier than the event ``agent $j$ picks item $e'_z$''.
			Since agent $i$ picks item $e_x$ when $e'_z$ is still available, we have $v_i(e_x) \geq v_i(e'_z)$, which implies $\rho(\alpha - 1/w_j) = v_i(e_x) \geq v_i(e'_z) = \rho'(\alpha)$ and concludes the proof.         
		\end{proof}
		
		Combining Claim~\ref{claim:k-and-k'} and~\ref{claim:rho-and-rho'}, we have
		\begin{align*}
			\frac{c_i(X_i)}{w_i} 
			& = \int_{0}^{\frac{k}{w_i}} \rho(\alpha) d \alpha
			\ge \int_{0}^{\frac{k'-1}{w_j}} \rho(\alpha) d \alpha
			\ge \int_{0}^{\frac{k'-1}{w_j}} \rho' \left( \alpha+\frac{1}{w_j} \right) d\alpha \\
			&= \int_{\frac{1}{w_j}}^{\frac{k'}{w_j}} \rho'(\alpha) d \alpha 
			= \frac{c_i(X_j - e'_1)}{w_j},
		\end{align*}
		where the first inequality follows from Claim~\ref{claim:k-and-k'} and the second inequality follows from Claim~\ref{claim:rho-and-rho'}.
		Hence agent $i$ is WEF1 towards agent $j$, which finishes the proof.
	\end{proof}

	\subsection{Computation of WEF(\texorpdfstring{$x,y$}{}) Allocations for Chores}\label{sec:wefxy}
	
	In this section, we propose an algorithm that computes WEF$(x,y)$ allocations for chores, for any $x+y \geq 1$, that is similar to Algorithm~\ref{alg:RWPS}.
	We first construct a forward sequence and then let agents follow the reversed sequence to pick their favorite remaining items.
	At the beginning of the algorithm, we set $s_i = 0$ for all $i\in N$.
	And we decide the forward sequence $(\sigma(1), \cdots, \sigma(m))$ by the following steps: for each $l\in \{1,2,\cdots,m\}$ we set $\sigma(l) \gets i^*$ where $i^*$ satisfies that $s_{i^*} + \frac{1-x}{w_i} \leq s_j + \frac{y}{w_j}$ for any $j\neq i^*$, and then update $s_{i^*} \gets s_{i^*} + \frac{1}{w_{i^*}}$.
	We argue (in the proof of Lemma~\ref{lemma:WEF(x,y)-exist-i*}) that such an agent always exists when $x+y \geq 1$.
	Then by Theorem~\ref{the:wef(x,y)}, we can show that the resulting allocation is WEF($x,y$).
	
	\begin{algorithm}[htbp]
		\caption{Reversed Weighted Picking Sequence Algorithm for WEF$(x,y)$ } \label{alg:RPSA}
		\KwIn{An instance $<M, N, \bw, \bc>$ with additive cost functions.}
		Initialize $X_i \gets \emptyset$ and $s_i \gets 0$ for all $i\in N$, and $P \gets M$\;
		\For{$t \in \{1, 2, \cdots, m\}$}{
			let $i^*$ be any agent such that $s_{i^*} + \frac{1-x}{w_{i^*}}\le s_j + \frac{y}{w_j}$ for any $j\in N\setminus \{i^*\}$ \label{line:i*-exists} \;
				$\sigma(t) \gets i^*$, $s_{i^*} \gets s_{i^*}+\frac{1}{w_{i^*}}$\;
			}
			\For{$t\in \{m, m-1, \cdots, 1\}$}{
				$i \gets \sigma(t)$, $e^* \gets \arg \min_{e \in P} \{ c_i(e) \}$, breaking ties arbitrarily\;
				$X_i \gets X_i + e^*$, $P \gets P\setminus \{e^*\}$\;
			}
			\KwOut{$\bX = \{X_1, X_2 ,\cdots, X_n\}$.}
		\end{algorithm}
		
		\begin{lemma} \label{lemma:WEF(x,y)-exist-i*}
			Algorithm~\ref{alg:RPSA} computes WEF$(x,y)$ allocations for all $x,y\in [0,1]$ with $x+y\geq 1$.
		\end{lemma}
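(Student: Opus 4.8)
The plan is to establish two facts: (i) line~\ref{line:i*-exists} of Algorithm~\ref{alg:RPSA} is always well defined, i.e., at every step there exists an agent $i^*$ with $s_{i^*}+\frac{1-x}{w_{i^*}}\le s_j+\frac{y}{w_j}$ for all $j\neq i^*$; and (ii) the forward sequence $\sigma$ produced by the algorithm satisfies the sufficient condition of Theorem~\ref{the:wef(x,y)}, so that the reversed picking sequence returns a WEF$(x,y)$ allocation.

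For (i), I would simply take $i^*\in\argmin_{i\in N}\{s_i+\frac{1-x}{w_i}\}$ (ties broken arbitrarily). Since $x+y\ge 1$ gives $1-x\le y$, for every $j\neq i^*$ we get
\[
s_{i^*}+\frac{1-x}{w_{i^*}}\;\le\; s_j+\frac{1-x}{w_j}\;\le\; s_j+\frac{y}{w_j},
\]
which is exactly the inequality demanded in line~\ref{line:i*-exists}. Hence the algorithm never gets stuck; this short observation is really the only new ingredient, and it is also the one place where the hypothesis $x+y\ge 1$ is used.

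For (ii), recall from Theorem~\ref{the:wef(x,y)} that it suffices to verify $s_i(t)-\frac{x}{w_i}\le s_j(t)+\frac{y}{w_j}$ for every $t\in\{1,\dots,m\}$ and all $i,j\in N$, where $s_i(t)$ denotes the weighted number of appearances of $i$ in $\sigma(1),\dots,\sigma(t)$. The case $i=j$ is immediate since $x,y\ge 0$. For $i\neq j$, fix $t$ and $i$: if $i$ does not appear in $\sigma(1),\dots,\sigma(t)$ then $s_i(t)=0$ and the left side is $-x/w_i\le 0\le s_j(t)+y/w_j$; otherwise let $t_i\le t$ be the largest index with $\sigma(t_i)=i$, so that $s_i(t)=s_i(t_i)=s_i(t_i-1)+1/w_i$ because $i$ is not selected in rounds $t_i+1,\dots,t$. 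Applying the selection rule at round $t_i$ gives $s_i(t_i-1)+\frac{1-x}{w_i}\le s_j(t_i-1)+\frac{y}{w_j}$, and substituting $s_i(t_i-1)=s_i(t)-1/w_i$ yields
\[
s_i(t)-\frac{x}{w_i}\;\le\; s_j(t_i-1)+\frac{y}{w_j}\;\le\; s_j(t)+\frac{y}{w_j},
\]
the last step using that $s_j(\cdot)$ is non-decreasing and $t_i-1\le t$. Theorem~\ref{the:wef(x,y)} then finishes the proof.

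I do not expect a serious obstacle here: once line~\ref{line:i*-exists} is shown to be well defined, the rest is bookkeeping that plugs directly into Theorem~\ref{the:wef(x,y)}. The only points needing a little care are the corner case in which agent $i$ has not yet been picked by round $t$, and the monotonicity step that propagates the inequality from round $t_i$ (the last time $i$ was picked) up to round $t$.
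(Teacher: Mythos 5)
Your proof is correct and follows essentially the same route as the paper: choose $i^*\in\argmin_i\{s_i+\frac{1-x}{w_i}\}$ and use $1-x\le y$ to show line~\ref{line:i*-exists} is well defined, then invoke Theorem~\ref{the:wef(x,y)}. Your part (ii) actually spells out the bookkeeping (the last round $t_i$ at which $i$ is picked, plus monotonicity of $s_j$) that the paper's proof merely asserts, so if anything your write-up is slightly more complete.
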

		\begin{proof}
			We show that the agent $i^*$ specified in line~\ref{line:i*-exists} of Algorithm~\ref{alg:RPSA} always exists.
			In particular, let
			\begin{equation*}
				i^* = \arg\min_{i\in N} \left\{ s_i + \frac{1-x}{w_i} \right\}.
			\end{equation*}
			
			Since $x+y \geq 1$, for all $j\in N$ we have
			\begin{equation*}
				s_{i^*} + \frac{1-x}{w_{i^*}}\le s_j + \frac{1-x}{w_j}\le s_j + \frac{y}{w_j}.
			\end{equation*}
			
			Therefore the agent $i^*$ specified in line~\ref{line:i*-exists} always exists and the algorithm is well-defined.
			Hence for any prefix of length $t$ of the forward picking sequence, we have $s_i - \frac{x}{w_i} \leq s_j + \frac{y}{w_j}$ for any $i,j\in N$.
			Then by Theorem~\ref{the:wef(x,y)} we have that Algorithm~\ref{alg:RPSA} computes WEF$(x,y)$ allocations in polynomial time.
		\end{proof}
		
		We remark that $x+y\ge 1$ is necessary for ensuring that the above algorithm is well-defined.
		Suppose otherwise ($x+y < 1$), then the agent $i^*$ does not exist when $t = 1$, when all agents have the same weight, because for all $i,j \in N$ we always have $\frac{1-x}{w_i} > \frac{y}{w_j}$.

	\section{WEF1 and PO for Two Agents} \label{sec:PO-twoagents}
		When there are only two agents, the weighted adjusted winner algorithm~\cite{journals/teco/ChakrabortyISZ21} computes a WEF1 and PO allocation for goods in polynomial time.
		In this section, we show that there exists an algorithm that computes a WEF1 and PO for instances of chores with two agents in polynomial time. 
		\begin{theorem}\label{the:wef1+po-two}
			For the weighted allocation of chores with two agents, there exists an algorithm that can always compute a WEF1 and PO allocation for chores in polynomial time.
		\end{theorem}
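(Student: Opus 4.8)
The plan is to show that the weighted adjusted winner procedure already defined as Algorithm~\ref{alg:WAW} solves the two-agent case: it always returns an allocation that is simultaneously WEF1 and PO, and it runs in polynomial time (it only sorts the items by the cost ratios $c_1(e)/c_2(e)$ and does one linear scan to find the index $f$). The WEF1 guarantee is already contained in the proof of Lemma~\ref{lemma:upperbound-price}: in the first branch the output $\mathbf{O}$ passes the explicit WEF1 test, and in the second branch the allocation $\bX=(L(f),R(f+1))$ is shown there to be WEF1 for both agents. Hence it suffices to prove that the returned allocation is Pareto optimal.

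For PO, the key point is that the returned allocation $\bX=(X_1,X_2)$ is always a \emph{threshold allocation} with respect to the cost-ratio order: there is a ratio $\tau$ with $c_1(e)\le \tau\cdot c_2(e)$ for every $e\in X_1$ and $c_1(e)\ge \tau\cdot c_2(e)$ for every $e\in X_2$; moreover, every item with $c_1(e)=0<c_2(e)$ is assigned to agent~$1$ and every item with $c_2(e)=0<c_1(e)$ is assigned to agent~$2$. In the first branch both facts are immediate from $O_1=\{e:c_1(e)<c_2(e)\}$ and $O_2=\{e:c_1(e)\ge c_2(e)\}$ (taking $\tau=1$). In the second branch $X_1=L(f)$ is a prefix and $X_2=R(f+1)$ the complementary suffix of the items sorted by cost ratio, so $\tau$ may be taken as the cost ratio at the split; the correct placement of the ``one-sided'' items is then a short check using Claim~\ref{claim:A-B-bound-f-at-right} (which gives $O_1\subsetneq X_1$, so every item with $c_1(e)=0<c_2(e)$, being in $O_1$, lands in $X_1$) and the stopping rule for $f$ (which keeps every item of $c_2$-value $0$ inside $X_2=R(f+1)$). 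Note that both uses of Claim~\ref{claim:A-B-bound-f-at-right} rely on the w.l.o.g.\ orientation $c_1(O_1)/w_1\le c_2(O_2)/w_2$ imposed in Algorithm~\ref{alg:WAW}.

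Given these structural facts, PO follows by a short exchange argument. Suppose $\bX'$ Pareto dominates $\bX$ and set $D_1=X_1\setminus X_1'$ and $A_1=X_1'\setminus X_1\subseteq X_2$; domination forces $c_1(A_1)\le c_1(D_1)$ and $c_2(D_1)\le c_2(A_1)$, at least one strictly. Because no item with $c_1(e)=0<c_2(e)$ lies in $X_2\supseteq A_1$ and no item with $c_2(e)=0<c_1(e)$ lies in $X_1\supseteq D_1$, the threshold property gives $c_1(A_1)\ge\tau\cdot c_2(A_1)$ and $c_1(D_1)\le\tau\cdot c_2(D_1)$ (the degenerate cases $\tau\in\{0,\infty\}$, where one agent already has zero cost, are checked directly using the placement property). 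Chaining, $\tau\cdot c_2(A_1)\le c_1(A_1)\le c_1(D_1)\le\tau\cdot c_2(D_1)\le\tau\cdot c_2(A_1)$, so all inequalities are tight and neither agent is strictly better off --- a contradiction. (Equivalently, $\bX$ minimizes the linear objective $c_1(X_1)+\tau\cdot c_2(X_2)$ over all partitions, which also gives PO.) I expect the one genuinely delicate point to be this zero-cost-item subtlety: a bare threshold allocation need not be PO if an item costless to one agent is handed to the other agent, so the argument really does use the placement property, which is exactly what the w.l.o.g.\ orientation in Algorithm~\ref{alg:WAW} secures. The remaining steps are routine.
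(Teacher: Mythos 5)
Your proposal is correct, but it reaches the theorem by a genuinely different route than the paper. The paper's proof (Appendix~\ref{sec:PO-twoagents}) is a black-box reduction to the goods setting: build the goods instance with $v_i=c_i$ and the two weights swapped, take a WEF1 and PO allocation for goods (which exists for two agents by \cite{journals/teco/ChakrabortyISZ21}), and hand each agent the \emph{complement} of her goods bundle; with $n=2$ complementation turns goods-WEF1 into chores-WEF1 and preserves PO. You instead analyze the chores weighted adjusted winner (Algorithm~\ref{alg:WAW}) directly: WEF1 is inherited from the proof of Lemma~\ref{lemma:upperbound-price}, and PO follows from the observation that the output is a threshold split in the cost-ratio order plus an exchange argument (equivalently, minimization of $c_1(X_1)+\tau\cdot c_2(X_2)$), with the items costless to exactly one agent handled via the placement facts you isolate from Claim~\ref{claim:A-B-bound-f-at-right} and the stopping rule for $f$. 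I checked the delicate points: $O_1\subsetneq X_1$ does place every item with $c_1(e)=0<c_2(e)$ in $X_1$, the definition of $f$ does keep every item with $c_2(e)=0$ in $R(f+1)$, and the degenerate case $\tau=0$ is correctly closed by the placement property (a bare threshold split would \emph{not} suffice there, so your caution is warranted). Comparing the two: the paper's reduction is shorter and requires no new analysis, but it is specific to two agents (complements only make sense there) and leans on the external goods result; your proof is self-contained, reuses machinery already present in Section~\ref{sec:pof}, and yields the slightly stronger conclusion that a single allocation is simultaneously WEF1, PO, and within the $\frac{4+\alpha}{4}$ price-of-fairness bound, which the paper obtains only as two separate statements.
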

		\begin{proof}
			Given any instance of chores $\mathcal{I} = <M, N, \bw, \bc>$, we construct a corresponding instance of goods $\mathcal{I}' = <M, N, \bw', \bv>$ while the valuation functions hold $v_i(e) = c_i(e)$ for any $i\in N, e\in M$, and the weights hold $w'_1 = w_2, w'_2 = w_1$.
			Note that WEF1 and PO allocations have been proven to exist for two agents~\cite{journals/teco/ChakrabortyISZ21}.
			Let $\bX' = \{X'_1, X'_2\}$ be a WEF1 and PO allocation for $\mathcal{I}'$.
			Then we compute an allocation $\bX$ for chores by $X_1 = X'_2, X_2 = X'_1$.
			We argue the allocation $\bX$ is WEF1 and PO.
			
			Note that for the cases of two agents, we have $M = X_1 \cup X_2$, leading to $X_1 = M\setminus X'_1, X_2 = M\setminus X'_2$.
			We first show that the allocation $\bX$ is PO.
			Assume otherwise and there exists another allocation $\bX^* = \{X_1^*, X_2^*\}$ that dominates $\bX$.
			We assume w.l.o.g that $c_i(X_1^*) < c_1(X_1), c_2(X_2^*) \leq c_2(X_2)$.
			Then for the instance $\mathcal{I}'$, we have $v_1(M\setminus X_1^*) > v_1(M\setminus X_1) = v_1(X'_1)$ and $ v_2(M\setminus X_2^*) \geq v_2(M\setminus X_2) = v_2(X'_2)$, which contradict the fact that $\bX'$ is PO.
			Next, we show that the allocation $\bX$ is WEF1.
			Note that in the allocation $\bX'$, for any $i\neq j$ there exists an item $e\in X_j$ such that
			\begin{equation*}
				\frac{v_i(X'_i)}{w'_i} \geq \frac{v_i(X'_j - e)}{w'_j} \Rightarrow \frac{c_i(X_j)}{w_j} \geq \frac{c_i(X_i - e)}{w_i}.
			\end{equation*}
			The deduction holds since $X_i = X'_j$ and $w'_i = w_j$.
			Hence the allocation $\bX$ is WEF1 and PO for the instance of chores $\mathcal{I}$. 
		\end{proof}
		
	\section{WEF1 and Other Fairness Notions}\label{sec:relationships}
		
		%
		In this section, we discuss the relation between the fairness notion of WEF1 and other notions including \emph{weighted proportional up to one item} (WPROP1) and \emph{AnyPrice Share} (APS).
		We show that WEF1 implies WPROP1 and $(2-\min_{i\in N} w_i)$ approximation of APS.

		
		
		\begin{definition}[WPROP1]
			An allocation is called weighted proportional up to one item (WPROP1) if for any $i\in N$, there exists an item $e\in X_i$ such that 
			\begin{equation*}
				c_i(X_i - e) \leq w_i \cdot c_i(M).
			\end{equation*}
		\end{definition}
		
		It is well known that EF1 implies PROP1 for both allocations of goods and chores, in the unweighted setting.
		However, when agents have general weights, Chakraborty et al.~\cite{journals/teco/ChakrabortyISZ21} show that WEF1 allocations are not necessarily WPROP1 for the allocation of goods.
		In contrast, we show in the following that any WEF1 allocation is WPROP1 for the allocation of chores.
		
		\begin{lemma}\label{lemma:wprop1}
			Any WEF1 allocation is WPROP1, but not vice versa.
		\end{lemma}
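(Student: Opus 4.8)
The plan is to prove the forward implication by a weighted‑averaging argument and to settle ``not vice versa'' with a small instance.

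For the forward direction, I would fix an agent $i$. If $X_i=\emptyset$ then $c_i(X_i)=0\le w_i\cdot c_i(M)$ and WPROP1 holds trivially, so assume $X_i\neq\emptyset$ and let $e^*\in\argmax_{e\in X_i}c_i(e)$ be a highest‑cost item in $i$'s own bundle. Since $c_i(X_i-e)=c_i(X_i)-c_i(e)$ and costs are non‑negative, removing the most costly item minimizes the residual cost, i.e.\ $c_i(X_i-e^*)=\min_{e\in X_i}c_i(X_i-e)$. In particular, for every $j\neq i$ the item witnessing WEF1 of $i$ towards $j$ leaves a residual cost at least $c_i(X_i-e^*)$, so WEF1 gives $\frac{c_i(X_i-e^*)}{w_i}\le\frac{c_i(X_j)}{w_j}$, equivalently $w_j\cdot c_i(X_i-e^*)\le w_i\cdot c_i(X_j)$, for all $j\neq i$.

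Next I would add the trivial bound $w_i\cdot c_i(X_i-e^*)\le w_i\cdot c_i(X_i)$ (removing a chore cannot increase cost) as the ``$j=i$'' term, and sum all $n$ inequalities over $j\in N$. Using $\sum_{j\in N}w_j=1$ on the left and $\sum_{j\in N}c_i(X_j)=c_i(M)$ on the right yields $c_i(X_i-e^*)\le w_i\cdot c_i(M)$, which is exactly the WPROP1 guarantee for agent $i$; since $i$ was arbitrary, $\bX$ is WPROP1.

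For ``not vice versa'' I would exhibit an unweighted instance (so WPROP1 coincides with PROP1 and WEF1 with EF1): two agents with $w_1=w_2=\tfrac12$ and four items each of cost $\tfrac14$ to both agents, with the allocation $X_1=\{e_1,e_2,e_3\}$, $X_2=\{e_4\}$. Removing any item from $X_1$ leaves cost $\tfrac12=w_1\cdot c_1(M)$, and $X_2$ already has cost $\tfrac14\le\tfrac12$, so the allocation is WPROP1; but $c_1(X_1-e_1)=\tfrac12>\tfrac14=c_1(X_2)$, so agent $1$ is not WEF1 towards agent $2$. (A stronger separation—that WPROP1 implies no approximation of WEF1, as mentioned in the excerpt—follows by scaling the costs inside $X_1$ up and inside $X_2$ down, but the instance above already proves the claim.) There is essentially no obstacle here; the only step needing a little care is the observation that the single item $e^*$ can serve simultaneously as the WEF1 witness towards every $j$, which is precisely what lets one item suffice for WPROP1, together with the harmless edge case $X_i=\emptyset$.
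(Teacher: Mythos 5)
Your proof is correct and follows essentially the same route as the paper's: both use the maximum-cost item $e^*$ as a single WEF1 witness valid against every $j$, multiply the resulting inequalities by the weights $w_j$, and sum over $j\in N$ (including the trivial $j=i$ term) to get $c_i(X_i-e^*)\le w_i\cdot c_i(M)$; both then refute the converse with a small symmetric instance. Your explicit justification that $e^*$ minimizes the residual cost (so it can replace whatever witness WEF1 provides) and your handling of the $X_i=\emptyset$ case are minor points the paper leaves implicit, and your counterexample, though a different instance from the paper's, serves the same purpose.
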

		\begin{proof}
			Let $\bX$ be any WEF1 allocation.
			Fix any agent $i$ and let $e^*= \arg\max_{e\in X_i} \{c_i(e)\}$ be the item with maximum cost in $X_i$.
			By the definition of WEF1, for all $j\in N$ we have
			\begin{equation*}
				\frac{c_i(X_i-e^*)}{w_i} \leq \frac{c_i(X_j)}{w_j} \quad \Rightarrow \quad
				\frac{w_j}{w_i}\cdot c_i(X_i-e^*) \leq c_i(X_j).
			\end{equation*}
			
			Therefore we have
			\begin{align*}
				c_i(X_i - e^*) & = w_i\cdot \frac{\sum_{j\in N} w_j}{w_i} \cdot c_i(X_i - e^*) \leq w_i\cdot  \sum_{j\in N} c_i(X_j) = w_i\cdot c_i(M),
			\end{align*}
			which implies that the allocation is WPROP1.
			
			Finally, via the following simple example, we show that WPROP1 can not guarantee (any bounded approximation of) WEF1, even for two symmetric agents with identical cost functions.
			
			\begin{table}[htbp]
				\centering
				\begin{tabular}{c|c|c|c}
					&  $e_1$ & $e_2$ & $e_3$ \\ \hline
					Agent 1   & $\epsilon$ & $\boxed{1}$ & $\boxed{1}$ \\
					Agent 2   & $\boxed{\epsilon}$ & $1$ & $1$
				\end{tabular}
				\caption{Example showing that WPROP1 does not imply WEF1, where $\epsilon > 0$ is arbitrarily small.}
				\label{tab:hard-for-prop1}
			\end{table}
			
			It can be easily observed that the allocation highlighted by the boxed items is WPROP1\footnote{In fact the allocation is WPROPX, a fairness notion that requires $c_i(X_i - e) \leq w_i\cdot c_i(M)$ for all $i\in N$ and $e\in X_i$.}, but is not WEF1 because $c_1(X_1 - e) \geq 1/\epsilon\cdot c_1(X_2)$, for all $e\in X_1$.
		\end{proof}
		
		
		\medskip
		
		Next we study the fairness notion of AnyPrice Share (APS) fair.
		The notion is introduced by Babaioff et al.~\cite{conf/sigecom/BabaioffEF21} for the allocation of goods, and is then adapted to the case of chores in~\cite{conf/www/0037L022,journals/corr/abs-2211-13951}.
		
		Fix any agent $i\in M$.
		We call $r = (r_1, r_2, \cdots, r_m)$ a \emph{reward vector} if $r_e \geq 0$ for all $e\in M$ and  $r(M) = \sum_{e\in M} r_e = c_i(M)$. 
		Let $R$ be the set of all reward vectors.
		The AnyPrice Share $\APS_i$ of agent $i$ is defined as the minimum value such that no matter how the reward vector is set, agent $i$ can always find a subset of chores with total reward at least $w_i$ and total cost at most $\APS_i$.
		
		\begin{definition}[APS]
			The AnyPrice Share (APS) of agent $i$ with weight $w_i$ is defined as 
			\begin{equation*}
				\APS_i = \max_{r\in R} \min_{S\subseteq M} \left\{c_i(S) : \sum_{e\in S} r_e \geq w_i\cdot c_i(M) \right\}.
			\end{equation*}
		\end{definition}
		
		An allocation is called $\alpha$-APS allocation if for all $i\in N$, it holds that $c_i(X_i) \leq \alpha\cdot \APS_i$.	
		
		\begin{lemma}\label{lemma:2-aps}
			If an allocation $\bX$ is WEF1, then it is $(2 - \min_{i\in N} w_i)$-APS.
			However, APS allocations cannot guarantee constant approximation of WEF1.
		\end{lemma}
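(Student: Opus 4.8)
The plan is to treat the two halves of the lemma separately. For the implication ``WEF1 $\Rightarrow (2-\min_i w_i)$-APS'', fix an agent $i$; if $X_i=\emptyset$ there is nothing to prove, so assume $X_i\neq\emptyset$ and let $e^*=\argmax_{e\in X_i}c_i(e)$, writing $v:=c_i(e^*)$ and $u:=c_i(X_i-e^*)$ so that $c_i(X_i)=u+v$. First I would observe that WEF1 forces, for \emph{every} $j\neq i$, the inequality $c_i(X_i-e^*)/w_i\le c_i(X_j)/w_j$: the WEF1 witness $e$ for the pair $(i,j)$ satisfies $c_i(X_i-e)\ge c_i(X_i-e^*)$, since dropping the most costly item reduces the bundle cost by the most. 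Multiplying by $w_j$, summing over $j\neq i$, and using $\sum_{j\neq i}w_j=1-w_i$ together with $\sum_{j\neq i}c_i(X_j)=c_i(M)-c_i(X_i)=1-c_i(X_i)$ (the $X_j$ partition $M$, costs are additive, and $c_i(M)=1$ by normalization), I get $(1-w_i)u\le w_i(1-u-v)$, which simplifies to the sharpened proportionality estimate $u\le w_i(1-v)$, and hence $c_i(X_i)\le w_i+v(1-w_i)$.

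Next I would lower-bound $\APS_i$ using two elementary reward vectors. The vector $r_e=c_i(e)$ is feasible ($r(M)=c_i(M)$) and makes $c_i(S)=r(S)$ for all $S$, so every feasible $S$ has $c_i(S)\ge w_i c_i(M)=w_i$, giving $\APS_i\ge w_i$. The vector placing all the reward on a single costliest item $e^{\max}\in\argmax_{e\in M}c_i(e)$ is feasible and forces $e^{\max}\in S$ for every feasible $S$ (as $w_i>0$), giving $\APS_i\ge\max_{e}c_i(e)\ge v$. Combining with the previous paragraph and $1-w_i\ge 0$,
\[
c_i(X_i)\le w_i+v(1-w_i)\le \APS_i+\APS_i(1-w_i)=(2-w_i)\APS_i\le \Big(2-\min_{j\in N}w_j\Big)\APS_i ,
\]
which is exactly the claimed approximation.

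For the converse (``APS does not imply any constant-factor relaxation of WEF1''), I would exhibit an instance with three equally weighted agents and three disjoint blocks of identical items: block $A$ of $k$ items with $c_1(a)=1/k$ and $c_2(a)=c_3(a)=0$; block $B$ of $l$ items with $c_3(b)=1/l$ and $c_1(b)=c_2(b)=0$; block $C$ of $p$ items with $c_2(c)=1/p$ and $c_1(c)=c_3(c)=0$, where $k,l,p$ are multiples of $3$ and $k$ is large. A short computation (the adversary's best reward vector spreads its mass uniformly over the block an agent actually values, and zero-cost items can always be collected for free) shows $\APS_i=1/3$ for each $i$. I would then take $X_1$ to consist of $k/3$ items of $A$ together with all of $B$ as zero-cost filler, $X_2$ to consist of $p/3$ items of $C$, and $X_3$ to consist of $l/3$ items of $B$ together with the remaining items of $A$ and $C$ (all zero-cost to agent $3$). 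Each bundle then costs exactly $1/3=\APS_i$, so this is an (exact) APS allocation; yet the best item agent $1$ can drop from $X_1$ is an $A$-item, so $c_1(X_1-e)/w_1=1-3/k>0$ while $c_1(X_2)/w_2=0$. Replacing one item of $X_2$ by one that costs $\epsilon$ to agent $1$ makes the envy ratio of order $1/(\epsilon k)$, which is unbounded, so no constant-factor WEF1 guarantee can follow from APS.

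The routine parts are the algebra in the first half and the verification that $\APS_i=1/3$ in the example. The only point that needs care is making the three-block instance globally consistent: every item must be allocated and \emph{all three} bundles must simultaneously stay at or below their APS, which is precisely why agents $1$ and $3$ must absorb each other's (and agent $2$'s) zero-valued leftover items. I expect that bookkeeping, rather than any conceptual difficulty, to be the main (minor) obstacle.
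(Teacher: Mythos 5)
Your first half is essentially the paper's own argument: the same witness $e^*=\argmax_{e\in X_i}c_i(e)$, the same summation of the WEF1 inequalities over $j\neq i$ to obtain $c_i(X_i-e^*)\le w_i\,(c_i(M)-c_i(e^*))$, and the same two lower bounds $\APS_i\ge w_i\cdot c_i(M)$ and $\APS_i\ge\max_{e\in M}c_i(e)$ to conclude $c_i(X_i)\le(2-w_i)\APS_i$. You go slightly further than the paper by actually deriving those two APS lower bounds from explicit reward vectors (the paper cites them as known) and by handling the $X_i=\emptyset$ case; both are harmless additions. Note the paper's proof also contains a tightness construction for the factor $2-\min_i w_i$, but since the lemma as stated does not claim tightness, omitting it is not a gap.

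For the second half you take a genuinely different route. The paper uses a four-item, three-identical-agent instance with costs $(1,1/2,1/2,\epsilon)$, where the APS allocation $\{e_1\},\{e_2,e_3\},\{e_4\}$ already exhibits unbounded envy; this is considerably leaner than your three-block construction. Your construction does work in spirit, but as written it is inconsistent: you assign \emph{all} of block $B$ to $X_1$ and simultaneously $l/3$ items of $B$ to $X_3$, so the bundles are not disjoint. The fix is routine (e.g.\ give agent $3$ its $l/3$ items of $B$ and give agent $1$ only the remaining $2l/3$ items of $B$, then park the leftover $A$- and $C$-items on agents for whom they cost zero), and you correctly flag this bookkeeping as the delicate point; also, once $c_1(X_2)=0$ the violation of any multiplicative WEF1 guarantee is already immediate, so the $\epsilon$-perturbation is only needed to keep the ratio finite, and its order is $\Theta(1/\epsilon)$ rather than $1/(\epsilon k)$. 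None of this affects the validity of the conclusion, but you should repair the overlapping-bundles slip before the example can stand as a proof.
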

		\begin{proof}
			Fix any agent $i$. 
			We show that $\bX$ is $(2 - w_i)$-APS to $i$.
			As it is commonly observed, e.g., see~\cite{conf/www/0037L022}, we have the following property regarding $\APS_i$:
			\begin{equation*}
				\APS_i \geq \max \left\{ w_i\cdot c_i(M), \ \max_{e\in M} \{c_i(e)\} \right\}.
			\end{equation*}
			
			Let $e^*= \arg\max_{e\in X_i} \{c_i(e)\}$ be the item with maximum cost in $X_i$.
			Since $\bX$ is WEF1, we have 
			\begin{align*}
				& c_i(X_i - e^*) = w_i\cdot \frac{\sum_{j\in N} w_j}{w_i} \cdot c_i(X_i - e^*) \\
				\leq \ & w_i\cdot  \sum_{j\in N\setminus \{i\}} c_i(X_j) + w_i\cdot c_i(X_i - e^*) = w_i\cdot c_i(M) - w_i\cdot c_i(e^*).
			\end{align*}
			
			Hence we have
			\begin{align*}
				c_i(X_i) \leq  w_i \cdot c_i(M) + (1- w_i) \cdot c_i(e^*)
				\leq \APS_i + (1-w_i)\cdot \APS_i = (2-w_i)\cdot \APS_i,
			\end{align*}
			where the second inequality follows because $\APS_i \geq w_i\cdot c_i(M)$ and $\APS_i \geq c_i(e^*)$.
			
			\medskip
			
			Next, we give an example instance showing that the approximation ratio $(2-\min_{i\in N} w_i)$ is tight.
			Consider the following instance with $n$ symmetric agents with identical cost functions.
			
			\begin{table}[htbp]
				\centering
				\begin{tabular}{c|c|c|c|c|c|c|c|c}
					Agents &  $e_1$ & $e_2$ & $e_3$ & $\cdots$ & $e_n$ & $e_{n+1}$ & $\cdots$ & $e_{2n-1}$ \\ 
					\hline
					${1}$   & $\boxed{\frac{1}{n}}$ & $\boxed{\left( 1-\frac{1}{n} \right)\cdot \frac{1}{n}}$ & $\left( 1-\frac{1}{n} \right)\cdot \frac{1}{n}$ & $\cdots$ & $\left( 1-\frac{1}{n} \right)\cdot \frac{1}{n}$ & $\frac{1}{n^2}$ & $\cdots$ & $\frac{1}{n^2}$ \\
					${2}$   & $\frac{1}{n}$ & $\left( 1-\frac{1}{n} \right)\cdot \frac{1}{n}$ & $\boxed{\left( 1-\frac{1}{n} \right)\cdot \frac{1}{n}}$ & $\cdots$ & $\left( 1-\frac{1}{n} \right)\cdot \frac{1}{n}$ & $\frac{1}{n^2}$ & $\cdots$ & $\frac{1}{n^2}$ \\
					$\cdots$ & $\cdots$ & $\cdots$ & $\cdots$ & $\cdots$ & $\cdots$ & $\cdots$ & $\cdots$ & $\cdots$ \\
					${n-1}$ & $\frac{1}{n}$ & $\left( 1-\frac{1}{n} \right)\cdot \frac{1}{n}$ & $\left( 1-\frac{1}{n} \right)\cdot \frac{1}{n}$ & $\cdots $ & $\boxed{\left( 1-\frac{1}{n} \right)\cdot \frac{1}{n}}$ & $\frac{1}{n^2}$ & $\cdots$ & $\frac{1}{n^2}$ \\
					${n}$   & $\frac{1}{n}$ & $\left( 1-\frac{1}{n} \right)\cdot \frac{1}{n}$ & $\left( 1-\frac{1}{n} \right)\cdot \frac{1}{n}$ & $\cdots $ & $\left( 1-\frac{1}{n} \right)\cdot \frac{1}{n}$ & $\boxed{\frac{1}{n^2}}$ & $\cdots$ & $\boxed{\frac{1}{n^2}}$ \\
				\end{tabular}
				\caption{Instance showing that the approximation ratio $2 - \min_{i\in N} w_i = 2-\frac{1}{n}$ is tight.}
				\label{tab:hard-for-aps}
			\end{table}
			
			As shown in Table~\ref{tab:hard-for-aps}, there are $2n-1$ items to be allocated to $n$ symmetric agents, for which we have $\APS_i = \APS \geq 1/n$ for all $i\in N$.
			In fact, we have $\APS = 1/n$ because given any reward vector, one of the bundles $\{ e_1 \}, \{ e_2, e_{n+1} \}, \{ e_3, e_{n+2} \}, \ldots, \{ e_n, e_{2n-1} \}$ must have reward at least $1/n$, and can be chosen to ensure a cost of $1/n$ when defining $\APS$.
			Consider the allocation $\bX$ indicated by the boxed items.
			Since all agents other than $1$ have cost $\left(1-\frac{1}{n}\right)\cdot \frac{1}{n}$ and agent $1$ has cost $\left(1-\frac{1}{n}\right)\cdot \frac{1}{n}$ after removing $e_1$, the allocation is EF1.
			However, the allocation is not better than $\left(2-\frac{1}{n}\right)$-APS because $c_1(X_1) = \left(2-\frac{1}{n}\right)\cdot \APS$.

			\medskip
			
			Finally, we provide a simple example showing that APS allocations do not guarantee any constant approximation of WEF1, even for the unweighted setting.
			Consider an instance with four items $\{e_1, e_2, e_3, e_4\}$ and three identical agents with cost function $c = (1, 1/2, 1/2, \epsilon)$, where $\epsilon > 0$ is arbitrarily small.
			By allocating $e_1$ to agent $1$, $\{e_2,e_3\}$ to agent $2$ and $e_4$ to agent $3$, we have an APS allocation $\bX$ since $\APS \geq c(e_1) = 1$ for all $i\in N$.
			However, this allocation is far from being EF1 since $c(X_2-e) = 1/2 \geq \omega(1)\cdot \epsilon = \omega(1)\cdot c(X_3)$ for all $e\in X_2$.
		\end{proof}
	\end{document}